\pgfplotsset{compat=1.17}
\definecolor{DarkGreen}{rgb}{0.2,0.6,0.2}
\definecolor{purple}{rgb}{0.6,0.3,0.8}
\newcommand{\cX}{\mathcal{X}}
\newcommand{\p}{\mathbb{P}}
\renewcommand{\tilde}{\widetilde}
\DeclareMathOperator*{\argmin}{\arg\min}
\newcommand{\dsquare}{\mathop{  \square} \displaylimits}
\newcommand{\esssup}{\mathrm{ess\mbox{-}sup}}
\newcommand{\essinf}{\mathrm{ess\mbox{-}inf}}
\renewcommand{\cdots}{\dots}
\theoremstyle{plain}
\newtheorem{theorem}{Theorem}
\newtheorem{corollary}{Corollary}
\newtheorem{lemma}{Lemma}
\newtheorem{proposition}{Proposition}
\theoremstyle{definition}
\newtheorem{example}{Example}
\newenvironment{assumptionp}[1]{
  
  \assumptionalt
}{\endassumptionalt}
\theoremstyle{remark}
\newtheorem{remark}{Remark}
\title{Optimal Allocations with Distortion Risk Measures and Mixed Risk Attitudes}
\author{
Mario Ghossoub\thanks%
  {Department of Statistics and Actuarial Science,
  University of Waterloo,
  Waterloo, Ontario, Canada.
  E-mail: \href{mailto:mario.ghossoub@uwaterloo.ca}{
mario.ghossoub@uwaterloo.ca}.}
  \and Qinghua Ren\thanks%
  {Department of Statistics and Actuarial Science,
  University of Waterloo,
  Waterloo, Ontario, Canada.
  E-mail: \href{mailto:qinghua.ren@uwaterloo.ca}{qinghua.ren@uwaterloo.ca}.}
  \and Ruodu Wang\thanks%
  {Department of Statistics and Actuarial Science,
  University of Waterloo,
  Waterloo, Ontario, Canada.
  E-mail: \href{mailto:wang@uwaterloo.ca}{wang@uwaterloo.ca}.}
  }
\begin{document}
\maketitle

\begin{abstract}
We study Pareto-optimal risk sharing in economies with heterogeneous attitudes toward risk, where agents' preferences are modeled by distortion risk measures. Building on comonotonic and counter-monotonic improvement results, we show that agents with similar attitudes optimally share risks comonotonically (risk-averse) or counter-monotonically (risk-seeking). We show how the general $n$-agent problem can be reduced to a two-agent formulation between representative risk-averse and risk-seeking agents, characterized by the infimal convolution of their distortion risk measures. Within this two-agent framework, we establish necessary and sufficient conditions for the existence of optimal allocations, and we identify when the infimal convolution yields an unbounded value. When existence fails, we analyze the problem under nonnegative allocation constraints, and we characterize optima explicitly, under piecewise-linear distortion functions and Bernoulli-type risks. Our findings suggest that the optimal allocation structure is governed by the relative strength of risk aversion versus risk seeking behavior, as intuition would suggest.
\end{abstract}


\section{Introduction}
In classical risk-sharing problems with risk‐averse agents, Pareto‐optimal allocations are comonotonic under mild conditions, as a result of the classical comonotonic improvement theorem of  \cite{landsberger1994co}. On the other hand, when only risk‐seeking agents are involved, inter-agent gambling arises, thereby leading to counter‐monotonic allocations, as shown in \cite{lauzier2023pairwise,lauzier2024negatively}. 
This contrast naturally leads to the question of how heterogeneous preferences shape optimal allocations in markets that comprise both risk‐averse and risk‐seeking participants, a more realistic assumption in practice. A simple conjecture is that the optimal allocation structure is determined by the relative strength of risk aversion versus risk seeking behavior.

The exchange of risk between agents with mixed risk attitudes remains theoretically underdeveloped. In the literature, risk sharing has been studied primarily in markets with homogeneous preference classes, e.g., all risk-averse, all risk-seeking, or all quantile-type agents. For risk-averse agents, optimal allocations have been characterized in various frameworks, from the seminal work by \cite{borch1962} within the context of risk-averse Expected-Utility (EU) agents, to the case of convex risk measures in \cite{barrieu2005inf} and \cite{jouini2008optimal}, all showing comonotonicity of optimal allocations. The results can also be extended to non-monotone risk measures (\cite{acciaio2007optimal} and \cite{filipovic2008optimal}) and  quasi-convex risk measures (\cite{mastrogiacomo2015pareto}). For more developments on efficient risk sharing  with monetary and convex risk measures or concave utility functionals, see \cite{heath2004pareto}, \cite{tsanakas2009split}, and \cite{ghossoub2024efficiency}, for instance. Although risk aversion serves as a foundational and tractable assumption in theory, empirical observations frequently reveal risk-seeking attitudes in practice. Building upon the counter-monotonic improvement theorem of \cite{lauzier2024negatively}, the recent work of \cite{ghossoub2024counterb,ghossoub2024countera} provided a systematic study of risk sharing among risk-seeking agents with distortion risk measures. They showed that counter-monotonic allocations are optimal in this setting, and they provided some insight into computing the inf-convolution (infimal convolution) of concave risk measures. Beyond convexity or concavity, risk-sharing problems have also been examined in quantile-based and more general distortion risk-metrics frameworks; see \cite{embrechts2018quantile}, \cite{weber2018solvency} and \cite{lauzier2024risk}.

The relationship between heterogeneous risk attitudes and the existence of Pareto optima or competitive equilibria remains an open and challenging question. One should remark that the presence of non-convex preferences can cause technical problems since the existence of optima can fail in this setting. The market that combines risk-averse agents and risk-seeking agents is of particular interest; see \cite{araujo2017optimal,araujo2018general},  
\cite{herings2022competitive}, and \cite{beissner2023optimal}.  In a finite-state exchange economy, \cite{araujo2017optimal} proved the existence of individually rational Pareto optima even in the presence of non-convex preferences, and they suggested that risk-averse agents exhibit comonotonic sharing, while risk-seeking agents bet on the allocation of risk. Within the framework of distortion risk measures, we obtain analogous results. Moreover, rather than providing only qualitative statements, we deliver quantitative characterizations via the inf-convolution. \cite{beissner2024no} studied a pure-exchange economy without aggregate uncertainty, considering two agents who maximize their rank-dependent utility (RDU), with concave utility functions and arbitrary (possibly nonconvex) probability distortion functions. They derived a closed-form characterization of Pareto-optimal allocations in this general setting. Our paper also touches on a related risk-sharing problem with a constant aggregate. Although distortion risk measures can be viewed as a subclass of RDU with linear utility, the results of \cite{beissner2024no} cannot be directly extended to our framework, as they require strict concavity of the utility functions.

To overcome the challenges arising from mixed risk attitudes in the market, we study a finite-agent economy with heterogeneous risk attitudes, where the agents' preferences are represented by distortion risk measures. For such preferences, risk attitudes are characterized by their probability distortion functions. We analyze Pareto-optimal allocations through the tool of inf-convolution. Our first step is to reduce the multi-agent problem into a $2$-agent problem, where one agent acts as the representative agent of the risk-averse participants, and the other acts as the representative agent of the risk-seeking participants (Theorem \ref{proposition_1}). Zooming in on the reduced $2$-agent problem, we provide necessary and sufficient conditions for the existence of optimal allocations. Since unbounded optimal allocations may not exist for risk-seeking agents due to excessive gambling, we consider separately the case where the allocations are possibly signed and the case of nonnegative allocations; the latter corresponds to the economically relevant assumption that no agents can profit from an aggregate pure loss. We offer several explicit characterizations of the inf-convolution and optimal allocations under some assumptions on the distortion functions. As another main contribution, we show that, indeed, the relative strength of risk aversion and risk seeking of the representative agents determines the structure of the optimal allocations for the original market, formalizing the aforementioned simple conjecture.

The rest of this paper is structured as follows. Section~\ref{Sec:2} formulates the risk-sharing problem that we consider in this paper. Section~\ref{Sec:3} presents an $n$-agent setting with mixed attitudes (risk-averse and risk-seeking agents), and shows how the problem reduces to a two-agent formulation, which motivates the analysis in Sections~\ref{Sec:4} and \ref{Sec:5}. Regarding constraints on the feasible allocation set, Section~\ref{Sec:4} considers general allocation sets and establishes our main existence result for optimal allocations, while Section~\ref{Sec:5} focuses on positive allocation sets. Several results extend beyond the specific risk-averse/risk-seeking setting. Section~\ref{Sec:6} integrates these results to resolve the $n$-agent problem in certain settings. Finally, Section \ref{Sec:7} concludes.

\section{Problem formulation}\label{Sec:2}

Let $\mathcal{X}$ be a convex cone of random variables on an atomless probability space $\left(\Omega,\mathcal{F},\mathbb{P}\right)$. Consider a setting with $n\in\mathbb{N}$ agents who share a random variable $X\in \mathcal{X}$,  representing the aggregate risk,  without central authority involvement. Write $[n]=\{1,\dots,n\}$. 
Positive and negative values of random variables  represent losses and surpluses, respectively.
We denote by $F_X: \mathbb{R}\mapsto [0,1]$ the cumulative distribution function of $X$, and by $S_X: \mathbb{R}\mapsto [0,1]$ the survival function of $X$.
Let $U_X$ be a uniform random variable on $[0,1]$, such that $F_X^{-1}(U_X)=X$ almost surely. The existence of $U_X$ for a general $X$ is guaranteed; see Lemma A.32 of \cite{follmer2016stochastic}.
We consider a one-period economy, where an aggregate risk $X$ is redistributed among the agents at the end of the period.
For a given $X \in \cX$ and $n \in \mathbb{N}$, the set of allocations of $X$ is
\begin{equation*}\label{def:allo}
\mathbb{A}_n(X)=\left\{(X_1, \ldots, X_n)\in\mathcal{X}^n:\sum_{i=1}^{n}X_i=X\right\}.
\end{equation*}
That is, the set of allocations of $X$ includes all possible divisions of $X$ among the $n$ agents. Note that the choice of $\mathcal{X}$ may impose constraints on the admissible allocations. In this paper, the space $\mathcal{X}$ can be chosen as the space $L^\infty$ of bounded random variables or the space $L^{+}$ of nonnegative random variables. The choice will be specified in each context.

We next impose some dependence structures on  admissible allocations.
A random vector $(X, Y)$ is said to be comonotonic (resp.~counter-monotonic) if
$(X(\omega)-X(\omega^{\prime}))(Y(\omega)-Y(\omega^{\prime})) \geq  0 ~(
        \mbox{resp.}~\le 0)$  for all $\omega, \omega^{\prime} \in \Omega$. 
An $n$-tuple $(X_{1},\ldots, X_{n})$ is called comonotonic (resp.~counter-monotonic) if each pair of its components is comonotonic (resp.~counter-monotonic).
For  results on and applications of  these concepts in actuarial science, see \cite{dhaene2002concept} on comonotonicity and \cite{dhaene1999safest} and \cite{lauzier2023pairwise} on counter-monotonicity.

The preference of each agent is represented by a \textit{distortion risk measure}
 $\rho_{h}: \mathcal{X}\mapsto \mathbb{R}$, defined as
\begin{align}\label{eq:distortion}
\rho_{h}(X)=\int X \mathrm{d}  h\circ \p =\int_0^\infty h(\p(X >  x))\, \mathrm{d} x + \int_{-\infty}^{0} (h_i(\p(X >  x))-h(1) )\, \mathrm{d} x, 
\end{align} 
where $h$ is in the set $\mathcal{H}$
of all normalized distortion functions, i.e., 
$$
\mathcal{H}=\{h:[0,1] \rightarrow [0,1] \mid h \text { is increasing, } h(0)=0 \text { and } h(1)=1\}.
$$
If $h\in \mathcal{H}$, then $\rho_h$ is also called a dual utility of \cite{yaari1987dual}. 
In the dual theory, the distortion function provides a full characterization of risk aversion: a Yarri agent is (strongly) risk-averse if and only if $h$ is concave (\cite{yaari1987dual}). Similarly, 
risk seeking corresponds to convexity of $h$.  We omit ``strong" in risk aversion and risk seeking, which is the only sense that we consider in this paper (weak risk aversion is different from strong risk aversion for distortion risk measures). 

If one drops the normalization requirement of $h(1)=1$ and monotonicity of $h$, we obtain the more general class of \textit{distortion riskmetrics} (see \cite{wang2020characterization}) $\rho_h$ with $h\in \mathcal{H}^{\mathrm{BV}}$, where
 $$
\mathcal{H}^{\mathrm{BV}}=\{h:[0,1] \rightarrow \mathbb{R} \mid h \text { is bounded variation and } h(0)=0\}.
$$
This broader class is introduced as it will be useful in the subsequent analysis.

Our goal is to study Pareto-optimal allocations among the $n$ agents in this market. An allocation $ (X_1, \dots, X_n ) \in \mathbb{A}_n(X)$ is said to be Pareto optimal in $\mathbb{A}_n(X)$ if for any $(Y_1, \dots, Y_n ) \in \mathbb{A}_n(X)$ satisfying $\rho_{h_i}(Y_i) \leq \rho_{h_i}(X_i)$ for all $i \in[n]$, we have $\rho_{i}\left(Y_i\right)=\rho_{i}\left(X_i\right)$, for all $i \in[n]$.
That is, no agent can strictly improve their position without worsening that of another.

An alternative method of finding Pareto optima is via the tool of infimal convolution. The inf-convolution $\square_{i=1}^n \rho_{h_i}$ of $n$ risk measures $\rho_{h_1}, \dots, \rho_{h_n}$ is defined as
\begin{align}\label{definition:inf-conv}
    \dsquare_{i=1}^n \rho_{h_i}(X)=\inf \left\{\sum_{i=1}^n \rho_{i}\left(X_i\right):\left(X_1, \dots, X_n\right) \in \mathbb{A}_n(X)\right\}, \quad X \in \mathcal{X}.
\end{align}
That is, the inf-convolution of $n$ risk measures is the infimum over aggregate risk values for all possible allocations.
An allocation $(X_1, \dots, X_n)$ is said to be optimal in $\mathbb{A}_n(X)$ if it attains the infimum, i.e.,
$\square_{i=1}^n \rho_{h_i}(X)=\sum_{i=1}^n \rho_{h_i} (X_i)$. It is immediate that every optimal allocation is Pareto optimal.
Moreover, when $\mathcal{X}=L^\infty$, the converse also holds and Pareto optimality coincides with optimality through inf-convolution, because of    translation invariance of the risk measures (see Proposition 1 of \cite{embrechts2018quantile}). By contrast, the converse may not hold if  $\mathcal{X}$ is chosen as $L^+$. Both cases will be considered in the subsequent analysis; see Section~\ref{Sec:4} 
and Section~\ref{Sec:5}.

A particularly useful property of distortion risk measures is the following duality. For any $X\in \mathcal{X}$, we have $\rho_{h}(-X)=-\rho_{\tilde{h}}(X)$, where $\tilde{h}(t)=1-h(1-t)$, $t\in [0,1]$, is the dual function of $h$. 
Consequently, an optimal allocation $(X_1^*, \ldots, X_n^*)$ that attains the infimum in \eqref{definition:inf-conv} also solves the utility maximization problem: 
\begin{align*}
    \sup \left\{\sum_{i=1}^n \rho_{\tilde{h}_i}\left(-X_i\right):\left(-X_1, \dots, -X_n\right) \in \mathbb{A}_n(-X)\right\}, \quad X \in \mathcal{X},
\end{align*}
and here $\rho_{\tilde{h}_i}$ can be interpreted as  agent $i$'s utility functional in the theory of \cite{yaari1987dual}.
That is, a risk sharing minimizer is simultaneously a utility maximizer under the dual distortions. In this paper, we adopt the risk sharing perspective and focus on minimizing the aggregate risk across all agents.

In a manner similar to the inf-convolution of risk measures, we define the inf-convolution of real functions, adopting  the same notation.
Given $n$ distortion functions $h_i \in \mathcal{H}$ for $i\in[n]$,  their inf-convolution $\dsquare_{i=1}^{n} h_i(x) : [0,1]\mapsto \mathbb{R}$ is defined as
\begin{align*}
    \dsquare_{i=1}^{n} h_i(x)=\inf \left\{ \sum_{i=1}^{n}h_i(x_i): x_i\in [0,1] \ \text{for}\ i\in[n]; ~  \sum_{i=1}^{n}x_i =x\right\}.
\end{align*}
Note that the $\dsquare_{i=1}^{n} h_i$ is not necessarily a distortion function since $\dsquare_{i=1}^{n} h_i(1)=1$ does not always hold. 
Throughout the paper, we write $\bigvee_{i=1}^n x_i=\max \{x_1,\cdots, x_n\}$ and $\bigwedge_{i=1}^n x_i=\min \{x_1,\cdots, x_n\}$ for any $x_1,\cdots, x_n\in \mathbb{R}$, and these operation are naturally extended to functions.

\section{Problem reduction}\label{Sec:3}
Our goal is to understand how the coexistence of risk‐averse and risk‐seeking behaviors shapes Pareto‐optimal allocations. Specifically, suppose that $n$ agents share a total risk $X$. The first $m$ are risk averse, with concave distortion functions $h_i, i\in[m]$, and the remaining $n-m$ are risk seeking with convex distortion functions $h_i, i\in [n]\setminus[m]$. 
The following theorem shows that under some mild conditions, this $n$-agent risk-sharing problem can always be reduced to an equivalent two‐agent problem.
Specifically, let $g_1=\bigwedge_{i=1}^{m}h_i$ and $g_2=\square_{j=m+1}^{n}h_j$.
These two are indeed the representative risk preferences for the risk-averse and risk-seeking groups, as shown in Theorem \ref{proposition_1}.
 The key insight of this reduction is that the collective behavior of each class can be represented by a single ``aggregate'' distortion, so the optimal allocation can be studied through a two-distortion formulation. Before moving to the result, we first introduce the following set:
    $$\mathcal{X}^{\perp}=\{X \in \mathcal{X}: \text{there exists a uniform random variable $U$ independent of} \,X\}.$$
For more discussions on $\mathcal{X}^{\perp}$, in particular its relation to $\mathcal X$,
we refer to Section 5 of \cite{liu2020inf}.

\begin{theorem}\label{proposition_1}
    Let $\mathcal{X}=L^+$. Assume that $h_i\in \mathcal{H}$ are continuous for all $i\in [n]$, with $h_i$ concave for $i\in[m]$ and $h_j$ convex for $j\in [n]\setminus[m]$. 
    For $X\in \mathcal{X}^{\perp}$, let
    $(X_1^*,\ldots, X_n^*)\in \mathbb{A}_n(X)$ be an optimal allocation. Then the following holds:
    \begin{itemize}
        \item[(i)]
        If $h_i$ is strictly concave for $i\in [m]$
        and $h_j$ is strictly convex for $j\in [n]\setminus[m]$, then the vector $(X_1^*, \ldots, X_m^*)$ is comonotonic and the vector $(X_{m+1}^*, \ldots, X_n^*)$ is counter-monotonic. 
         \item[(ii)] It holds that
\begin{align}\label{eq_proposition_1}
         \dsquare_{i=1}^{n}\rho_{h_i}(X)=
         \rho_{g_1}\dsquare \rho_{g_2}(X).
    \end{align}
    \end{itemize}
\end{theorem}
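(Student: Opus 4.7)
The plan splits along the two parts. For part (i), I would combine the optimality of $(X_1^*,\ldots,X_n^*)$ with two improvement theorems. First, freezing the risk-seeking total $Y^* := \sum_{j=m+1}^{n} X_j^*$, the sub-vector $(X_1^*,\ldots,X_m^*)$ must itself minimize $\sum_{i=1}^{m}\rho_{h_i}(\cdot)$ on $\mathbb{A}_m(X-Y^*)$; the comonotonic improvement theorem of \cite{landsberger1994co} then furnishes a comonotonic allocation $(\hat X_1,\ldots,\hat X_m)$ of $X-Y^*$ that is a marginal-wise mean-preserving contraction of $(X_1^*,\ldots,X_m^*)$, strict at some index unless the original sub-vector was already comonotonic. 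Strict concavity of each $h_i$ then upgrades a strict mean-preserving contraction into a strict decrease of $\rho_{h_i}$, contradicting optimality; hence comonotonicity. Symmetrically, freezing the risk-averse total and invoking the counter-monotonic improvement theorem of \cite{lauzier2024negatively}, combined with strict convexity of each $h_j$, forces counter-monotonicity of $(X_{m+1}^*,\ldots,X_n^*)$. The hypothesis $X\in\mathcal{X}^\perp$ is precisely what supplies the independent uniform randomization device required by the counter-monotonic construction to remain feasible inside $\mathcal{X}=L^+$.

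For part (ii), the plan is to exploit associativity of the inf-convolution together with closed-form expressions for the two homogeneous sub-problems. By associativity,
\begin{align*}
\dsquare_{i=1}^n \rho_{h_i}(X) = \Bigl(\dsquare_{i=1}^{m}\rho_{h_i}\Bigr)\,\dsquare\,\Bigl(\dsquare_{j=m+1}^{n}\rho_{h_j}\Bigr)(X).
\end{align*}
The risk-averse factor is handled by the classical identity $\dsquare_{i=1}^{m}\rho_{h_i}=\rho_{\bigwedge_{i=1}^{m}h_i}=\rho_{g_1}$ for concave distortions, with optima attained on comonotonic allocations. The risk-seeking factor is handled by the identity $\dsquare_{j=m+1}^{n}\rho_{h_j}=\rho_{\dsquare_{j=m+1}^{n}h_j}=\rho_{g_2}$ on $\mathcal{X}^\perp$, obtained in \cite{ghossoub2024counterb,ghossoub2024countera} for counter-monotonic risk sharing with convex distortions. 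Substituting these two identities into the display then yields \eqref{eq_proposition_1}.

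The main obstacle I anticipate lies in the risk-seeking identity and its compatibility with the outer two-term inf-convolution: in general $g_2$ is no longer a normalized distortion function (typically $g_2(1)<1$ when the $h_j$ are strictly convex, as one sees already for $h_j(t)=t^2$), so $\rho_{g_2}$ must be interpreted as a distortion riskmetric in $\mathcal{H}^{\mathrm{BV}}$, and one must verify that the associativity step remains valid within this broader class. In addition, the decomposition of a global optimum into two sub-optima relies on attainability of both inner infima; the assumption $X\in\mathcal{X}^\perp$ together with continuity of each $h_i$ is precisely what secures this attainability, and the explicit counter-monotonic construction must be checked to preserve nonnegativity so that all intermediate allocations remain in $\mathcal{X}=L^+$.
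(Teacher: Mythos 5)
Your overall skeleton matches the paper's proof: part (i) is obtained exactly as you describe, by invoking the comonotonic improvement theorem of \cite{landsberger1994co} on the risk-averse sub-vector and the counter-monotonic improvement theorem of \cite{lauzier2024negatively} on the risk-seeking sub-vector (the paper states this in two lines; your strict-improvement contradiction is the standard way those theorems are applied). Part (ii) likewise proceeds in the paper by the two-stage decomposition you call associativity, followed by the identities $\dsquare_{i=1}^{m}\rho_{h_i}=\rho_{g_1}$ and $\dsquare_{i=m+1}^{n}\rho_{h_i}=\rho_{g_2}$ from \cite{ghossoub2024counterb}.

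There is, however, one step you flag but do not resolve, and your diagnosis of it is slightly off. The identity $\dsquare_{i=m+1}^{n}\rho_{h_i}(Z)=\rho_{g_2}(Z)$ is only available for $Z\in\mathcal{X}^{\perp}$, whereas the outer infimum ranges over all $(Y,Z)\in\mathbb{A}_2(X)$, and $X\in\mathcal{X}^{\perp}$ does \emph{not} by itself guarantee that every summand $Z\le X$ admits an independent uniform. The issue is therefore not attainability of the inner infima, but the domain on which the risk-seeking identity may be invoked. The paper bridges this by a coupling argument: given any $(Y,Z)\in\mathbb{A}_2(X)$, one uses the uniform random variable independent of $X$ to build conditional-quantile transforms $Y'=F_{Y\mid X}^{-1}(U_1\mid X)$ and $Z'=F_{Z\mid X,Y}^{-1}(U_2\mid X,Y')$ so that $(X,Y',Z')\overset{d}{=}(X,Y,Z)$, $Y'+Z'=X$, and $Z'\in\mathcal{X}^{\perp}$ (Lemma 2 of \cite{lauzier2024negatively}); then law-invariance of the constrained inf-convolution (Proposition 1 of \cite{liu2020inf}, together with uniform continuity of $\rho_h$) shows that restricting the outer infimum to such $Z'$ loses nothing. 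Without this redistribution-plus-law-invariance step, applying the formula $\dsquare_{i=m+1}^{n}\rho_{h_i}=\rho_{g_2}$ inside the outer infimum is not licensed, so you should add it to complete the argument. Your observation that $g_2(1)<1$ in general is correct but harmless here, since $\rho_{g_2}$ is simply read as a distortion riskmetric with $g_2\in\mathcal{H}^{\mathrm{BV}}$ and the two-stage decomposition does not require normalization.
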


   \begin{proof}
   (i) The comonotonicity of $(X_1^*,\ldots,X_m^*)$ follows directly from the comonotonic improvement theorem of \cite{landsberger1994co}, while the counter-monotonicity of $(X_{m+1}^*,\ldots,X_n^*)$ follows from the counter-monotonic improvement theorem of \cite{lauzier2024negatively}.

   (ii)
    For $X\in \mathcal{X}^{\perp}$, 
    \begin{align}\notag
        \dsquare_{i=1}^{n}\rho_{h_i}(X)=&\inf\left\{\sum_{i=1}^{m}\rho_{h_i}(X_i)+ \sum_{i=m+1}^{n}\rho_{h_i}(X_i): (X_1, \cdots, X_n)\in \mathbb{A}_n(X)\right\}\\ \label{eq:th1}
        & =\inf\left\{\dsquare_{i=1}^{m}\rho_{h_i}(Y)+ \dsquare_{i=m+1}^{n}\rho_{h_i}(Z): (Y, Z)\in \mathbb{A}_2(X)\right\}\\ \label{eq:th2}
        &=\inf\left\{\dsquare_{i=1}^{m}\rho_{h_i}(Y')+ \dsquare_{i=m+1}^{n}\rho_{h_i}(Z'): (Y', Z')\in \mathbb{A}_2(X) ~\text{and}~ Z'\in \mathcal{X}^\perp \right\}\\  \label{eq:th3}
        &= \inf\left\{\rho_{g_1}(Y)+ \rho_{g_2}(Z): (Y, Z)\in \mathbb{A}_2(X)\right\}\\ \notag &=\rho_{g_1}\dsquare \rho_{g_2}(X).
    \end{align}
    Applying Theorem 2 of \cite{ghossoub2024counterb} to $\dsquare_{i=m+1}^{n}\rho_{h_i}(Z)$ in \eqref{eq:th1} requires $Z\in \mathcal{X}^\perp$. 
    We now show that it suffices to assume $X\in \mathcal{X}^\perp$. 
    Indeed, for any $(Y,Z)\in \mathbb{A}_2(X)$, there exists $(Y', Z')\in \mathbb{A}_2(X)$ and a standard uniform random variable $U$
    such that $(X,Y,Z)\overset{d}{=}(X,Y',Z')$ and $U$ is independent of $(X,Y',Z')$; see Lemma 2 of \cite{lauzier2024negatively}. 
    The allocation $(Y', Z')\in \mathbb{A}_2(X)$ can be constructed as follows.
    Since $X\in \mathcal{X}^{\perp}$, there exists a standard uniform random variable $U$ independent of $X$, then we can generate i.i.d. standard uniform random variables $U_1, U_2$ independent of $U$.
    Let $$Y'=F_{Y \mid X}^{-1}\left(U_1 \mid X_1\right)~\text{and}~ Z'=F_{Z \mid X,Y}^{-1}\left(U_2 \mid X, Y'\right),$$
    where $F_{Y \mid X}^{-1}(\cdot \mid y)$ is the conditional quantile function of $Y$ given $Y_1=y$ and  $F_{Z \mid X, Y}^{-1}\left(\cdot \mid x,  y\right)$ is the quantile function of $Z$ given $\left(X, Y\right)=\left(x,y\right) \in \mathbb{R}_{+}^2$. By construction, we have $\left(X, Y, Z\right) \stackrel{\mathrm{d}}{=}\left(X, Y', Z'\right)$ and $Y'+Z'=Y+Z=X$. 
    By Proposition 1 of \cite{liu2020inf}
    and uniformly continuity of $\rho_h$, the constrained inf-convolution $\dsquare_{i=m+1}^{n}\rho_{h_i}$ 
    is law-invariant; thus \eqref{eq:th2} follows. 
    Also, Theorem 3 of \cite{ghossoub2024counterb} identifies the inner inf-convolution in \eqref{eq:th3}: $$\dsquare_{i=1}^{m}\rho_{h_i}(X)=\rho_{g_1}(X)~\text{and}  ~\dsquare_{i=m+1}^{n}\rho_{h_i}(X)=\rho_{g_2}(X) ~\text{for}~ X\in \mathcal{X}^{\perp},$$ 
    with concave distortion functions $h_i$, $i \in [m]$ and 
    convex distortion functions $h_i$, $i\in [n]\setminus[m]$. 
\end{proof}

The first part  of Theorem \ref{proposition_1} also holds for the expected utility model; see \cite{lauzier2024negatively}. This can be generalized to other decision models accounting for risk-averse and risk-seeking behaviors.
In Theorem \ref{proposition_1}, $g_1$ is concave and $g_2$ is convex. Equation \eqref{eq_proposition_1} shows that, under appropriate assumptions, 
the $n$-agent problem reduces to risk sharing between a representative risk-averse agent with distortion function $g_1$ and a representative risk-seeking agent with distortion function $g_2$.
The intuition behind this result is that agents with similar risk attitudes can be treated collectively: among the risk-averse agents, comonotonic allocations are optimal, while among the risk-seeking agents, counter-monotonic allocations are optimal. Hence, the only remaining complexity lies in how the total risk is divided between these two representative agents.

It is important to note that $g_2$, being the inf-convolution of several convex distortion functions, is not necessarily a distortion function. In fact, one typically has $g_2(1)< 1$ unless all underlying  $h_i,i\in[n]$ are the identity. Let $\hat{g}_2=g_2/g_2(1)$ the normalization of $g_2$. Clearly, such normalization would not change the preferences of the agent and a Pareto-optimal allocation for agents using $\rho_{g_1}$ and $\rho_{g_2}$ is also Pareto optimal for agents using $\rho_{g_1}$ and $\rho_{\hat{g}_2}$. This observation motivates the next step of our analysis, where we examine the detailed two-agent model in detail, showing how the form of distortion functions drives the optimal risk sharing.

\section{Allocations in the whole space}\label{Sec:4}

In this section, we consider an economy with two agents who may have different risk preferences. 
As shown by Example \ref{example:fail} below, a Pareto-optimal allocation may fail to exist in the presence of risk-seeking behavior; this is a known phenomenon of risk-seeking agents.   
One possible resolution is to impose constraints on admissible allocations. 
Our first step, however, is to identify conditions on the distortion functions under which a Pareto-optimal allocation exists even without any external constraints. Throughout this section, we will work on the space $\mathcal{X}=L^\infty$. 
In Section \ref{Sec:5}, we turn to the setting with boundedness constraints on each individual allocation, where the existence of the inf-convolution is guaranteed.

\begin{example} \label{example:fail}
    Consider two agents sharing a sure amount $X=1$. Agent~1 is extremely risk-seeking with $h_1(x)=\mathbb{1}_{\{x=1\}}$, and Agent~2 is risk-averse with distortion $h_2(x)=\sqrt{x}$. Fix any event $A\in \mathcal{F}$ with $0<\mathbb{P}(A)<1$ and an arbitrary scalar $a>0$, consider the allocation 
$(X_1,X_2)=(a\,\mathbb{1}_A,\; 1-a\,\mathbb{1}_A)$.
     A direct calculation yields 
    \begin{align}\label{eq:2}
        \rho_{h_1}(a\mathbb{1}_{A}) + \rho_{h_2}(1-a\mathbb{1}_{A})=1- a\,\tilde{h}_2(\mathbb{P}(A)).
    \end{align}
    Since $\tilde{h}_2\!\big(\mathbb{P}(A)\big)>0$, the right-hand side of \eqref{eq:2} tends to $-\infty$ as $a\to\infty$. Thus, by taking unbounded bets on $A$, the total measured risk can be driven arbitrarily negative. Intuitively, any gamble of the form $a\,\mathbb{1}_A$ with $0<\mathbb{P}(A)<1$ is perceived by Agent~1 as perfectly safe, no matter how large $a$ is. The balancing position $1-a\,\mathbb{1}_A$ falls to Agent 2, whose concave distortion function evaluates this exposure ever more negatively.
    Together, this makes the pair drive the total risk down without bound. Hence, a Pareto-optimal allocation fails to exist because one side can always ``game'' the allocation by scaling up risky bets.
\end{example}

\subsection{Two general agents}
We now present our first main result, which establishes necessary and sufficient conditions for the existence of a well-defined optimal risk sharing value between two agents with heterogeneous distortion preferences.

\begin{theorem}\label{theorem:1}
    Suppose that $\mathcal{X}=L^\infty$.
    For any $h_1$, $h_2 \in \mathcal{H}$, the following are equivalent.
    \begin{enumerate}
        \item[(i)] $h_1 \dsquare h_2(1)=1$; equivalently, $h_1\ge \tilde h_2$.
        \item[(ii)] $\rho_{h_1}\dsquare \rho_{h_2}(c)=c$ for any $c\in \mathbb{R}$.
        \item[(iii)] $\rho_{h_1}\dsquare \rho_{h_2}(X)>-\infty$ holds for any $X \in \mathcal{X}$.
         \item[(iv)] $\rho_{h_1}\dsquare \rho_{h_2}(X)>-\infty$ holds for some $X \in \mathcal{X}$.
    \end{enumerate}
\end{theorem}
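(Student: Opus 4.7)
The plan is to prove the four conditions equivalent via the cycle $(\mathrm{i})\Rightarrow(\mathrm{iii})\Rightarrow(\mathrm{iv})\Rightarrow(\mathrm{i})$ together with the separate equivalence $(\mathrm{ii})\Leftrightarrow(\mathrm{iii})$, leaning on the fact that $\rho_{h_1}\dsquare\rho_{h_2}$ inherits translation invariance and positive homogeneity from $\rho_{h_1}$ and $\rho_{h_2}$. First I would dispatch the within-(i) equivalence: since $h_1(1)+h_2(0)=1$ one always has $h_1\dsquare h_2(1)\le 1$, and the reverse inequality $h_1(x)+h_2(1-x)\ge 1$ on $[0,1]$ is nothing but $h_1(x)\ge 1-h_2(1-x)=\tilde h_2(x)$.

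Next I would prove a ``zero-or-$-\infty$'' dichotomy that links (ii)--(iv). If $\rho_{h_1}\dsquare\rho_{h_2}(X_0)=-\infty$ for some $X_0\in L^\infty$, pick a minimizing sequence $(X_1^{(n)},X_2^{(n)})\in\mathbb{A}_2(X_0)$; for any $Y\in L^\infty$ the shifted allocation $(X_1^{(n)}+Y-X_0,\,X_2^{(n)})\in\mathbb{A}_2(Y)$ satisfies, by monotonicity and translation invariance,
\[
\rho_{h_1}\bigl(X_1^{(n)}+Y-X_0\bigr)+\rho_{h_2}\bigl(X_2^{(n)}\bigr)\le \rho_{h_1}\bigl(X_1^{(n)}\bigr)+\|Y-X_0\|_\infty+\rho_{h_2}\bigl(X_2^{(n)}\bigr)\to-\infty,
\]
so $\rho_{h_1}\dsquare\rho_{h_2}(Y)=-\infty$ as well, proving $(\mathrm{iv})\Rightarrow(\mathrm{iii})$. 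Translation invariance also gives $\rho_{h_1}\dsquare\rho_{h_2}(c)=c+\rho_{h_1}\dsquare\rho_{h_2}(0)$, and positive homogeneity applied at $X=0$ forces $\rho_{h_1}\dsquare\rho_{h_2}(0)\in\{0,-\infty\}$ (using that $(0,0)$ is admissible, so the value is $\le 0$); under (iii) this value is $0$, yielding (ii), and conversely (ii) at $c=0$ gives $\rho_{h_1}\dsquare\rho_{h_2}(0)=0>-\infty$ and hence (iii) via the dichotomy.

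For $(\mathrm{i})\Rightarrow(\mathrm{iii})$ I would exploit the duality $\rho_{h_2}(X_2)=-\rho_{\tilde h_2}(-X_2)$. Writing $Y=-X_2$ so $X_1=X+Y$, the objective becomes $\rho_{h_1}(X+Y)-\rho_{\tilde h_2}(Y)$. Under (i), the pointwise bound $h_1\ge\tilde h_2$ combined with the integral representation \eqref{eq:distortion} gives $\rho_{h_1}(Z)\ge\rho_{\tilde h_2}(Z)$ for every $Z\in L^\infty$; together with the cash-Lipschitz estimate $\rho_{\tilde h_2}(X+Y)\ge\rho_{\tilde h_2}(Y)-\|X\|_\infty$,
\[
\rho_{h_1}(X+Y)-\rho_{\tilde h_2}(Y)\ge \rho_{\tilde h_2}(X+Y)-\rho_{\tilde h_2}(Y)\ge -\|X\|_\infty,
\]
so $\rho_{h_1}\dsquare\rho_{h_2}(X)\ge -\|X\|_\infty>-\infty$.

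Finally, for $(\mathrm{iv})\Rightarrow(\mathrm{i})$ I would argue by contrapositive: if (i) fails there exists $x^*\in(0,1)$ with $h_1(x^*)+h_2(1-x^*)<1$; by atomlessness choose $A\in\mathcal{F}$ with $\mathbb{P}(A)=x^*$ and consider $(a\mathbb{1}_A,-a\mathbb{1}_A)\in\mathbb{A}_2(0)$ for $a>0$. A direct Choquet calculation as in Example~\ref{example:fail} gives the sum $a\bigl(h_1(x^*)+h_2(1-x^*)-1\bigr)\to-\infty$ as $a\to\infty$, so $\rho_{h_1}\dsquare\rho_{h_2}(0)=-\infty$; the dichotomy from the second paragraph then propagates this to every $X\in L^\infty$, contradicting (iv). The main technical step will be the duality move in $(\mathrm{i})\Rightarrow(\mathrm{iii})$, which converts a sum of risk measures evaluated at mixed-sign allocations into the monotone difference $\rho_{h_1}(Z)-\rho_{\tilde h_2}(Y)$ so that the pointwise comparison $h_1\ge\tilde h_2$ becomes directly exploitable; everything else is bookkeeping with translation invariance and positive homogeneity.
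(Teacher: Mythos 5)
Your proof is correct and rests on the same two pillars as the paper's: the pointwise domination $h_1\ge\tilde h_2$ converted into $\rho_{h_1}\ge\rho_{\tilde h_2}$ and combined with translation invariance to bound the inf-convolution from below, and the scaled indicator bet on an event of the offending probability to drive the value to $-\infty$ when domination fails. The differences are purely organizational — you go $(\mathrm{i})\Rightarrow(\mathrm{iii})$ directly via a cash-Lipschitz estimate and handle $(\mathrm{ii})$ through a positive-homogeneity dichotomy at $0$, whereas the paper routes $(\mathrm{i})\Rightarrow(\mathrm{ii})\Rightarrow(\mathrm{iii})$ using a shift by $\essinf X$ — and both are sound.
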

\begin{proof}
    (i) $\rightarrow$ (ii): 
    Note that the condition $h_1 \dsquare h_2(1)=1$ is equivalent to $h_1\geq \tilde{h}_2$ on $[0,1]$. This implies that $\rho_{h_1}(Y)\geq \rho_{\tilde{h}_2}(Y)$ for any $Y\in {L}^\infty$.
    Using translation invariance of $\rho_{h_2}$, we obtain
    \begin{align*}
        \rho_{h_1}\dsquare \rho_{h_2}(c)=
        \inf_{Y\in \mathcal{X}}\left\{\rho_{h_1}(Y)+ \rho_{h_2}(c-Y) \right\}
         = \inf_{Y\in \mathcal{X}}\left\{c+\rho_{h_1}(Y)- \rho_{\tilde{h}_2}(Y) \right\} \geq c.
    \end{align*}
    On the other hand, it is always the case that $\rho_{h_1}\dsquare \rho_{h_2}(c)\leq c$, for $c\in \mathbb{R}$. Hence, the equality follows.
    
    (ii) $\rightarrow$ (iii): 
    Take $X\in \mathcal{X}$ and $Z=X-\essinf X \geq 0$. By monotonicity of $\rho_{h_2}$, we have
    \begin{align*}
        \rho_{h_1}\dsquare \rho_{h_2}(X)&=
        \rho_{h_1}\dsquare \rho_{h_2}(Z+\essinf X)\\
        &=\inf_{Y\in \mathcal{X}}\left\{\rho_{h_1}(Y)+ \rho_{h_2}(Z+\essinf X-Y) \right\}\\
        & \geq \inf_{Y\in \mathcal{X}}\left\{\rho_{h_1}(Y)+ \rho_{h_2}(\essinf X-Y) \right\}\\ &= \rho_{h_1}\dsquare \rho_{h_2}(\essinf X)=\essinf X >-\infty.
    \end{align*}
    
    (iii) $\rightarrow$ (iv): Immediate.
    
    (iv) $\rightarrow$ (i): 
    We argue by contradiction. Assume that $h_1\dsquare h_2(1)<1$, i.e.,
    there exists $m\in [0,1]$ such that $h_1(m)<\tilde{h}_2(m)$.
    Construct a zero-sum allocation $(Y, -Y)$ with $Y=2a \mathbb{1}_{A}-a$, $a>0$ and $\mathbb{P}(A)=m$.
    Then 
    \begin{align*}
        \rho_{h_1}\dsquare \rho_{h_2}(0)&\leq \rho_{h_1}(2a \mathbb{1}_{A}-a) + \rho_{h_2}(2a \mathbb{1}_{A^c}-a)\\
        &= 2a h_1(\mathbb{P}(A))+2a h_2(1-\mathbb{P}(A))-2a \\
        &=2a (h_1(m)-\tilde{h}_2(m) ).
    \end{align*}
    As $a\rightarrow \infty$, $\rho_{h_1}\dsquare \rho_{h_2}(0)\rightarrow -\infty$.
    For any $X\in \mathcal{X}$, 
    by the fact of $X\leq \esssup X$, we have
    \begin{align*}
        \rho_{h_1}\dsquare \rho_{h_2}(X)&\leq \rho_{h_1}\dsquare \rho_{h_2}(0) + \esssup X =-\infty,
    \end{align*}
    which leads to a contradiction. Therefore, we conclude $h_1(x)\geq \tilde{h}_2(x)$ for $x\in[0,1]$, which implies that $h_1 \dsquare h_2(1)=1$.
\end{proof}

In fact, condition (i) in Theorem \ref{theorem:1} captures a dominance of caution over speculation in the perception of risk, when both risk-averse and risk-seeking behaviors coexist. This condition requires that the cautious agent's loss weights are strong enough to offset the other's optimism at each quantile level, in the sense that $h_1$ never falls below the dual $\tilde h_2$.
As a result, this condition prevents the ``gambling'' agent from driving the total risk value to $-\infty$. Without this no-free-gambling condition, the risk-sharing mechanism may fail, allowing one agent to absorb all risk without sufficient penalty.

\begin{remark}
Theorem \ref{theorem:1} yields a  criterion for when the inf-convolution of two distortion risk measures cannot admit a finite and attainable minimum.  If $h_1\dsquare h_2(1)<1$, then $\rho_{h_1}\dsquare \rho_{h_2}(X)=-\infty$ holds for any $X \in \mathcal{X}$, implying that the inf-convolution cannot be exact at $X$. 
\end{remark}

We now highlight several well-known instances that Theorem~\ref{theorem:1} recovers.

\begin{example}[Two risk-averse agents]\label{example:averse}
    If both $h_1$ and $h_2$ are concave, then condition (i) in Theorem \ref{theorem:1} is satisfied; i.e., $h_1\geq \tilde{h}_2$ since $\tilde{h}_2$ is convex. In particular, for a constant total risk, any risk-free split is Pareto optimal.
\end{example}

\begin{example}[Two risk-seeking agents]\label{example:seeking}
If both $h_1$ and $h_2$ are convex and neither is the identity, then $h_1< \tilde{h}_2$ since $\tilde{h}_2$ is concave. By Theorem \ref{theorem:1}, it follows that 
\begin{align*}
    \rho_{h_1}\dsquare \rho_{h_2}(X)=-\infty,\ \text{for}\ X\in \mathcal{X},
\end{align*}
which is consistent with the result established in Proposition 2 of \cite{ghossoub2024counterb}. 
\end{example}

\begin{example}[Inf-convolution of VaRs]\label{ex:3}
    By Corollary 2 of \cite{embrechts2018quantile}, for $\alpha_1, \alpha_2 \geq 0$, we have 
    \begin{align*}
        \mathrm{VaR}_{\alpha_1}\dsquare \mathrm{VaR}_{\alpha_2}(X)= \mathrm{VaR}_{\alpha_1+\alpha_2}(X),\ \text{for}\ X\in \mathcal{X}.
    \end{align*}
    Note that for $\alpha\geq 1$, $\mathrm{VaR}_{\alpha}(X)=-\infty$ for $X\in \mathcal{X}$. Therefore, 
    if $\alpha_1+ \alpha_2 \geq 1$, it holds that
    \begin{align}\label{eq:VaR}
        \mathrm{VaR}_{\alpha_1}\dsquare \mathrm{VaR}_{\alpha_2}(X)= -\infty,\ \text{for}\ X\in \mathcal{X}.
    \end{align}
    This conclusion also follows directly from Theorem~\ref{theorem:1}. In this case, the two agents are associated with  distortion functions  $h_1(x)=\mathbb{1}_{\left\{x>\alpha_1\right\}}$ and $h_2(x)=\mathbb{1}_{\left\{x>\alpha_2\right\}}$, respectively. Indeed, $h_1 \dsquare h_2(1)=0\neq1$ with $\alpha_1+ \alpha_2 \geq 1$, and Theorem \ref{theorem:1} then yields \eqref{eq:VaR}.
\end{example}

As a direct application of Theorem~\ref{theorem:1}, we now investigate the special case where one agent's distortion function is the dual of the other's. 
This setting is of particular interest because the pair $(h,\tilde h)$ acts as a ``mirror'' on the same risk. Valuing a loss with $h$ equals (up to sign) valuing a gain with the dual $\tilde h$.
The following proposition shows that, under this mirror setting, the inf-convolution is always well-defined, and it even reduces to a particularly simple form when  one agent is risk averse (the mirror agent is necessarily risk seeking).

\begin{proposition}\label{pro:4}
Suppose that $\mathcal{X}=L^\infty$.
For any $h\in \mathcal{H}$ and $X\in \mathcal{X}$, the following hold.
    \begin{enumerate}
        \item[(i)] $\lvert \rho_h \dsquare \rho_{\tilde{h}}(X)\rvert \leq \lVert X \rVert _\infty$.
         \item[(ii)] $\rho_h \dsquare \rho_{\tilde{h}}(X)>-\infty$.
        \item[(iii)] $\rho_h \dsquare \rho_{\tilde{h}}(c)=c$ for any constant $c\in \mathbb{R}$.
        \item[(iv)] If $h$ or $\tilde{h}$ is concave, then
    $\rho_{h}\dsquare\rho_{\tilde{h}}= \rho_{h \dsquare \tilde{h}} = \rho_{h \wedge \tilde{h}} $.
    \end{enumerate}
\end{proposition}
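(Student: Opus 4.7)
The plan is to dispatch parts (ii) and (iii) immediately from Theorem \ref{theorem:1}, derive (i) from (iii) via monotonicity, and handle (iv) with a dedicated argument that exploits the duality $\rho_{\tilde h}(\cdot)=-\rho_h(-\cdot)$ together with the coherence (subadditivity) of $\rho_h$ when $h$ is concave.

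First, applying Theorem \ref{theorem:1} with $h_1=h$ and $h_2=\tilde h$, the condition $h_1\ge \tilde{h}_2$ reduces to $h\ge \tilde{\tilde h}=h$, which is trivially true; hence all four equivalent conditions there hold, yielding parts (ii) and (iii) of this proposition. For (i), the upper bound follows by choosing $Y=X$ in the defining infimum, giving $\rho_h\dsquare\rho_{\tilde h}(X)\le \rho_h(X)+\rho_{\tilde h}(0)= \rho_h(X)\le \|X\|_\infty$. For the lower bound, the map $\rho_h\dsquare\rho_{\tilde h}$ inherits monotonicity from its components, so combining this with (iii) applied at $c=\essinf X$ gives $\rho_h\dsquare\rho_{\tilde h}(X)\ge \rho_h\dsquare\rho_{\tilde h}(\essinf X)=\essinf X\ge -\|X\|_\infty$.

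For part (iv), assume without loss of generality that $h$ is concave (the case of $\tilde h$ concave follows by commutativity of inf-convolution together with $\tilde{\tilde h}=h$). Concavity of $h$ with $h(0)=0$ and $h(1)=1$ forces $h(x)\ge x\ge \tilde h(x)$ on $[0,1]$, whence $h\wedge \tilde h=\tilde h$. To identify $h\dsquare \tilde h=\tilde h$, take $(x_1,x_2)=(0,x)$ for the $\le$ direction; for $\ge$, use that a convex function on $[0,1]$ vanishing at $0$ is superadditive, so $\tilde h(x_1)+\tilde h(x_2)\ge \tilde h(x_1+x_2)$, and combine with $h\ge \tilde h$ to get $h(x_1)+\tilde h(x_2)\ge \tilde h(x_1+x_2)$.

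The main step, which I expect to be the most delicate, is $\rho_h\dsquare\rho_{\tilde h}=\rho_{\tilde h}$. The inequality $\rho_h\dsquare\rho_{\tilde h}(X)\le \rho_{\tilde h}(X)$ is immediate by taking $Y=0$. For the reverse, I rewrite via the duality identity as
\[
\rho_h(Y)+\rho_{\tilde h}(X-Y)=\rho_h(Y)-\rho_h(Y-X),
\]
and apply subadditivity of $\rho_h$ (available because $h$ is concave) in the form $\rho_h(Y-X)=\rho_h(Y+(-X))\le \rho_h(Y)+\rho_h(-X)$, which rearranges to
\[
\rho_h(Y)-\rho_h(Y-X)\ge -\rho_h(-X)=\rho_{\tilde h}(X).
\]
Taking the infimum over $Y\in\mathcal{X}$ then yields $\rho_h\dsquare\rho_{\tilde h}(X)\ge \rho_{\tilde h}(X)$, closing the chain $\rho_h\dsquare\rho_{\tilde h}=\rho_{\tilde h}=\rho_{h\wedge\tilde h}=\rho_{h\dsquare\tilde h}$. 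I do not foresee a serious obstacle beyond keeping the dualization careful; all the ingredients (Theorem \ref{theorem:1}, the identity $\rho_{\tilde h}(\cdot)=-\rho_h(-\cdot)$, and coherence of concave-distortion risk measures) are either already in the excerpt or standard.
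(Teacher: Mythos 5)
Your proposal is correct and follows essentially the same route as the paper: parts (ii) and (iii) are read off from Theorem \ref{theorem:1} via $h\ge\tilde{\tilde h}=h$, and part (iv) hinges on the sub/superadditivity that concavity/convexity confers on $\rho_h$ and $\rho_{\tilde h}$, exactly as in the paper's argument (yours is just the dualized form of the same inequality). The only minor differences are cosmetic: for (i) you derive the lower bound from monotonicity of the inf-convolution together with (iii) rather than from Lipschitz continuity of $\rho_h$, and you identify $h\dsquare\tilde h=h\wedge\tilde h$ via superadditivity of the convex dual rather than the paper's monotonicity of $y\mapsto h(y)-h(1-x+y)$; both variants are sound.
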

\begin{proof}
(i). For any $h\in \mathcal{H}$, it follows that 
    \begin{align*}
        \rho_h \dsquare \rho_{\tilde{h}}(X)=&\inf_{Y\in \mathcal{X}}\left\{\rho_h(Y)+ \rho_{\tilde{h}}(X-Y) \right\}
         = \inf_{Y\in \mathcal{X}}\left\{\rho_h(Y)- \rho_{h}(Y-X) \right\}.
    \end{align*}
    Then we have
   \begin{align*}
        \lvert \rho_h \dsquare \rho_{\tilde{h}}(X)\rvert \leq \inf_{Y\in \mathcal{X}}\left\{\lvert \rho_h(Y)- \rho_{h}(Y-X) \rvert \right\}
        \leq \lVert X \rVert _\infty.
    \end{align*}
    The second inequality holds since $\rho_h$ is lipschitz-continuous with respect to $L^\infty$ -norm.

    (ii) and (iii) follow directly from Theorem \ref{theorem:1}, since $h\dsquare\tilde{h}(1)=1$.

(iv). Without loss of generality, we assume that $h$ is concave. Then
\begin{align*}
\rho_{h}\dsquare\rho_{\tilde{h}}(X)&=\inf_{Y\in \mathcal{X}}\left\{\rho_{h}(Y)+ \rho_{\tilde{h}}(X-Y) \right\} \\
    & \geq \inf_{Y\in \mathcal{X}}\left\{\rho_{h}(Y)+ \rho_{\tilde{h}}(X)+\rho_{\tilde{h}}(-Y)  \right\}=\rho_{\tilde{h}}(X).
\end{align*}    
Also, $\rho_{h}\dsquare\rho_{\tilde{h}}\leq \rho_{\tilde{h}}$ holds. Thus, we obtain $\rho_{h}\dsquare\rho_{\tilde{h}}=\rho_{\tilde{h}}$.
A symmetric argument applies if $\tilde{h}$ is concave. 
Consequently, we obtain $\rho_{h}\dsquare\rho_{\tilde{h}}= \rho_{h \wedge \tilde{h}}$.

Next, we show that $\rho_{h \wedge \tilde{h}}=\rho_{h \square \tilde{h}}$. Note that $h \square \tilde{h}(1)=1$, and hence it suffices to show that $h \wedge \tilde{h}=h \square \tilde{h}$. Let $g(y)=h(y)-h(1-x+y)$, for each $x\in [0,1]$. Then $g(y)$ is non-decreasing in $y\in [0,x]$ due to concavity of $h$.
Therefore, $g(y)$ attains its infimum at the endpoints $y=0$. Then it follows that
\begin{align*}
    h \square \tilde{h}(x)=\inf_{0\leq y \leq x}\left\{h(y)+\tilde{h}(x-y)\right\}
    =1+\inf_{0\leq y \leq x}\left\{h(y)-h(1-x+y)\right\}=\tilde{h}(x).
\end{align*}
Similarly, if $\tilde{h}$ is concave, we have $h \square \tilde{h}(x)=h(x)$. 
\end{proof}

\subsection{One risk-averse agent and one risk-seeking agent}\label{subsec:4.2}
To approach a solution to \eqref{eq_proposition_1}, we now analyze the two-agent setting in detail. 
The following theorem identifies structural conditions on the distortion functions under which the inf-convolution admits a simple representation, including cases in which one agent is risk averse or risk seeking.

\begin{theorem}\label{theorem:no_concave_convex}
        Suppose that $\mathcal{X}=L^\infty$ and 
        $h_1, h_2 \in \mathcal{H}$. The following hold.
    \begin{itemize}
        \item[(i)] If $h_1 \wedge h_2$ is concave, then $\rho_{h_1}\dsquare \rho_{h_2}=  \rho_{h_1 \dsquare h_2}=\rho_{h_1 \wedge h_2}$.
        \item[(ii)]If $h_2$ is convex, then $\rho_{h_1}\dsquare \rho_{h_2}= \rho_{h_2} $ if and only if $h_1 \dsquare h_2(1)=1$. 
    \end{itemize}
\end{theorem}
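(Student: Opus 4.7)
For part~(i) I will prove the functional identity $h_1\dsquare h_2 = h_1\wedge h_2$ and the risk-measure identity $\rho_{h_1}\dsquare\rho_{h_2}=\rho_{h_1\wedge h_2}$; together these deliver the triple equality in the statement. The functional identity is the simpler piece: $h_1\dsquare h_2\le h_1\wedge h_2$ follows by testing the splits $(x,0)$ and $(0,x)$, and for the reverse I will invoke the elementary fact that any nonnegative concave function vanishing at $0$ is subadditive. Concavity of $h_1\wedge h_2$ then yields $(h_1\wedge h_2)(x)\le(h_1\wedge h_2)(y)+(h_1\wedge h_2)(x-y)\le h_1(y)+h_2(x-y)$ for every $y\in[0,x]$, and taking the infimum over $y$ finishes the argument.

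For the risk-measure identity in part~(i) I will establish matching bounds. The lower bound $\rho_{h_1\wedge h_2}(X)\le\rho_{h_1}\dsquare\rho_{h_2}(X)$ is where concavity of $h_1\wedge h_2$ is essential: concavity makes $\rho_{h_1\wedge h_2}$ a coherent, hence subadditive, risk measure, and the pointwise inequality $h_1\wedge h_2\le h_i$ lifts, via monotonicity of the Choquet integral in its capacity, to $\rho_{h_1\wedge h_2}\le\rho_{h_i}$. Combining these for any $(Y,Z)\in\mathbb{A}_2(X)$ gives
\[\rho_{h_1\wedge h_2}(X)\le \rho_{h_1\wedge h_2}(Y)+\rho_{h_1\wedge h_2}(Z)\le \rho_{h_1}(Y)+\rho_{h_2}(Z).\]
The matching upper bound $\rho_{h_1}\dsquare\rho_{h_2}(X)\le\rho_{h_1\wedge h_2}(X)$ I plan to produce by an explicit comonotonic layer split. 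With $A=\{x\in\mathbb{R}:h_1(\p(X>x))\le h_2(\p(X>x))\}$, set $X_1=\int_A(\mathbb{1}_{\{X>x\}}-\mathbb{1}_{\{x<0\}})\,\mathrm{d}x$ and $X_2=X-X_1$. Each layer $\mathbb{1}_{\{X>x\}}-\mathbb{1}_{\{x<0\}}$ is comonotonic with every other, so comonotonic additivity of $\rho_{h_i}$ together with a Fubini-type interchange yields $\rho_{h_1}(X_1)+\rho_{h_2}(X_2)=\int_{\mathbb{R}}[(h_1\wedge h_2)(\p(X>x))-\mathbb{1}_{\{x<0\}}]\,\mathrm{d}x=\rho_{h_1\wedge h_2}(X)$.

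For part~(ii), the ``only if'' direction is immediate from Theorem~\ref{theorem:1}: if $\rho_{h_1}\dsquare\rho_{h_2}=\rho_{h_2}$, the inf-convolution is finite on $L^\infty$, which forces $h_1\dsquare h_2(1)=1$. For the ``if'' direction, the upper bound $\rho_{h_1}\dsquare\rho_{h_2}\le\rho_{h_2}$ comes from the trivial split $(0,X)$. For the lower bound I will exploit that $h_2$ convex makes $\rho_{h_2}(\cdot)=-\rho_{\tilde h_2}(-\cdot)$ the negative of a coherent risk measure, hence superadditive, while the hypothesis $h_1\ge\tilde h_2$ gives $\rho_{h_1}(Y)\ge\rho_{\tilde h_2}(Y)=-\rho_{h_2}(-Y)$ for every $Y\in L^\infty$. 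Writing $Z=X-Y$ and applying superadditivity in the form $\rho_{h_2}(Z)=\rho_{h_2}(X+(-Y))\ge\rho_{h_2}(X)+\rho_{h_2}(-Y)$ then gives $\rho_{h_1}(Y)+\rho_{h_2}(Z)\ge -\rho_{h_2}(-Y)+\rho_{h_2}(X)+\rho_{h_2}(-Y)=\rho_{h_2}(X)$ for every $(Y,Z)\in\mathbb{A}_2(X)$.

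The main technical obstacle I anticipate is the layer construction in part~(i): verifying Borel measurability of $A$, confirming that $X_1$ defined by the layer integral lies in $L^\infty$ (which uses $\essinf X,\esssup X$ finite to bound the integrand's support) and is comonotonic with $X_2$, and justifying the Fubini-type exchange between the one-dimensional layer integral and the Choquet integral underlying $\rho_{h_i}$. These manipulations are standard in the distortion-risk-measure toolkit but constitute the only place where genuine measure-theoretic care is required.
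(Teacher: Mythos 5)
Your proof is correct, and its overall skeleton matches the paper's: the functional identity $h_1\dsquare h_2=h_1\wedge h_2$ via subadditivity of concave distortions, the lower bound $\rho_{h_1}\dsquare\rho_{h_2}\ge\rho_{h_1\wedge h_2}$ via subadditivity of $\rho_{h_1\wedge h_2}$ and $\rho_{h_1\wedge h_2}\le\rho_{h_i}$, and the ``if'' direction of (ii) via superadditivity of $\rho_{h_2}$ combined with $\rho_{h_1}\ge\rho_{\tilde h_2}$ are all exactly the paper's steps. You diverge in two sub-steps, in both cases defensibly. First, for the upper bound $\rho_{h_1}\dsquare\rho_{h_2}\le\rho_{h_1\wedge h_2}$ in (i), the paper simply asserts the inequality, whereas you supply an explicit comonotonic layer allocation that attains $\rho_{h_1\wedge h_2}(X)$; this is more work (the measurability and Fubini-type checks you flag are real, though standard) but it also produces an optimal allocation rather than just the value, which is a genuine bonus. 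Second, for the ``only if'' direction of (ii), the paper runs a direct argument with indicator allocations to derive $h_1(y)+h_2(1-y)\ge1$, while you invoke the finiteness criterion (iv) $\Rightarrow$ (i) of Theorem~\ref{theorem:1}; your route is shorter, does not use convexity of $h_2$, and is legitimate since Theorem~\ref{theorem:1} precedes this statement. Both approaches are sound; yours trades a small amount of extra measure-theoretic bookkeeping in (i) for a constructive optimizer and a cleaner (ii).
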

\begin{proof}
(i)
By concavity of $h_1 \wedge h_2$, we have 
\begin{align*}
    h_1 \dsquare h_2 \geq (h_1 \wedge h_2)\dsquare (h_1 \wedge h_2)=h_1 \wedge h_2.
\end{align*}
On the other hand, it is clear that $h_1 \dsquare h_2 \leq h_1 \wedge h_2$. Hence, we conclude $h_1 \dsquare h_2 = h_1 \wedge h_2 $. Then it suffices to show $\rho_{h_1}\dsquare \rho_{h_2}= 
\rho_{h_1 \wedge h_2}$. For $X\in \mathcal{X}$ and 
$(X_1, X_2)\in \mathbb{A}_2(X)$, we have 
\begin{align*}
        \rho_{h_1}(X_1)+ \rho_{h_2}(X_2)\geq \rho_{h_1 \wedge h_2}(X_1)+ \rho_{h_1 \wedge h_2}(X_2)\geq \rho_{h_1 \wedge h_2}(X).
\end{align*}
This implies that $\rho_{h_1}\dsquare \rho_{h_2} \geq \rho_{h_1 \wedge h_2}$.
The last inequality follows from the equivalence between concavity of $h_1 \wedge h_2$ and subadditivity of $\rho_{h_1 \wedge h_2}$; see Theorem 3 of \cite{wang2020characterization}. 
Furthermore, $\rho_{h_1}\dsquare \rho_{h_2} \leq \rho_{h_1 \wedge h_2}$ holds. Therefore, the desired result is obtained.

(ii) ``If'': Note that $h_1 \dsquare h_2(1)=1$ is equivalent to $h_1\geq \tilde{h}_2$. For any $X,Y\in \mathcal{X}$,
 it follows that 
 \begin{align*}
     \rho_{h_1}(Y)+ \rho_{h_2}(X-Y)\geq \rho_{h_1}(Y)+ \rho_{h_2}(X) -\rho_{\tilde{h}_2}(Y) \geq \rho_{h_2}(X).
 \end{align*}
 The second inequality holds due to the convexity of $h_2$. 
 The above result implies that  $\rho_{h_1}\dsquare \rho_{h_2}\geq \rho_{h_2}$. Also, it is immediate that $\rho_{h_1}\dsquare \rho_{h_2}\leq \rho_{h_2}$. Hence, the equality holds.

``Only if'': Let $X=\mathbb{1}_{A}$ and $x=\mathbb{P}(A)$. 
Take an allocation $(X_1, X_2)$ of $X$ as: $X_1=\mathbb{1}_{B}$
and $X_2=\mathbb{1}_{A\setminus B}$. Write $y=\mathbb{P}(B)$. Then it follows that
\begin{align*}
    \rho_{h_1}(X_1)+\rho_{h_2}(X_2)=h_1(y)+h_2(x-y) \ \text{and}\ \rho_{h_2}(X)=h_2(x).
\end{align*}
By the condition $\rho_{h_1}\dsquare \rho_{h_2}= \rho_{h_2}$, we have $h_1(y)+h_2(x-y)\geq h_2(x)$ for any $x\in [0,1]$. Consequently,
$ h_1(y)+h_2(1-y)\geq 1$ for any $y\in [0,1]$, implying that $h_1 \dsquare h_2 (1)=1$.
\end{proof}

Theorem \ref{theorem:no_concave_convex} provides some insights for the mixed two-agent case (one risk-averse and one risk-seeking). In particular, if the existence condition $h_1 \dsquare h_2 (1)=1$ holds, i.e., the level of risk aversion dominates the level of risk seeking, then the allocation $(X_1,X_2)=(0,X)$ is optimal.
Intuitively,  the ``force'' of caution of the risk-averse agent exceeds the ``force'' of gambling of the risk-seeking agent in this setting. As a result, the efficient arrangement is that the risk-seeking agent absorbs the entire uncertain part of the risk and the risk-averse agent holds the safe position.

It is important to note that in both cases in Theorem \ref{theorem:no_concave_convex} and throughout Example \ref{example:averse}--\ref{ex:3} (with existence conditions), the identity
\begin{align}\label{equality_inf}
    \rho_{h_1}\dsquare \rho_{h_2}=\rho_{h_1\dsquare h_2}
\end{align}
always holds. 
A natural question to ask is whether this relationship still holds in more general settings, especially when  two agents' are neither risk-averse nor risk-seeking.
In fact, the equality is not universal. Example \ref{example:sub} shows that if both agents have subadditive distortion functions, then equality \eqref{equality_inf} may fail once concavity is lost.

\begin{example}[$\rho_{h_1}\dsquare \rho_{h_2}=\rho_{h_1\dsquare h_2}$ does not always hold.]\label{example:sub}
    Suppose that two agents share the same risk preference $h\in \mathcal{H}$, where $h$ is continuous and subadditive but not concave on $[0,1]$.
    Then there exists $x,y\in [0,1]$ such that 
    \begin{align}\label{ineq:loc_convex}
        2 h( x/2+ y/2)<h(x)+h(y).
    \end{align}
    Consider the total risk $X=\mathbb{1}_{A}+\mathbb{1}_{B}$ with $\mathbb{P}(A)=\mathbb{P}(B)= x/2+ y/2$ and $\mathbb{P}(A \cap B)=y$. Define an allocation $(Y,Z)$
    of $X$ with $Y=\mathbb{1}_{A}$ and $Z=\mathbb{1}_{B}$.
    By \eqref{ineq:loc_convex}, we can obtain that 
    \begin{align*}
        \rho_{h}(Y)+ \rho_{h}(Z)<\rho_{h}(X)=h(\p(A \cup B))+h(\p(A \cap B)).
    \end{align*}
    Therefore, for such $X$, we have $\rho_{h}\dsquare \rho_{h}(X)<\rho_{h}(X)=\rho_{h\dsquare h}(X)$. The equality holds due to subadditivity of $h$, implying $h\dsquare h=h$. Below we give a concrete example.
    Let $h(x)=\max \left\{\sqrt{x}, 2x-x^2\right\}$ for $x\in [0,1]$. Clearly, both $\sqrt{x}$ and $2x-x^2$ are subadditive, hence
     $h(x)$ is subadditive. Consider $X=\mathbb{1}_{A}+\mathbb{1}_{B}$ with $\mathbb{P}(A)=\mathbb{P}(B)=0.38$ and $\mathbb{P}(A \cap B)=0.26$.
     Then we can calculate 
     $$
     \rho_{h}(Y)+ \rho_{h}(Z)=2h(0.38)= 1.233<\rho_{h}(X)=h(0.5)+ h(0.26)=1.260.
     $$    
\end{example}

Whereas the results in this section are presented for generality on the space $\mathcal{X}=L^\infty$, the subsequent sections will focus on $\mathcal{X}=L^+$, the set of nonnegative random variables, where only nonnegative allocations are admissible to ensure well-defined risk-sharing problems.

\section{Nonnegative allocations}\label{Sec:5}

Without any constraints on the allocation set, an optimal allocation may fail to exist under certain conditions, for instance, when $h_1(x)< \tilde{h}_2(x)$ on $[0,1]$, as shown in Theorem \ref{theorem:1}. In this section, we consider the setting  $\mathcal{X}=L^{+}$; that is, both the aggregate risk and the admissible allocations are required  to be nonnegative. 
This setting is economically intuitive: it means that for each agent, there cannot be
any profit from an aggregate pure loss. It is a natural assumption in many applications, such as peer-to-peer insurance.

Even with the nonnegativity constraint, solving $\rho_{h_1}\dsquare\rho_{h_2}$ with heterogeneous preferences can be challenging. 
In this section, we   proceed in two directions:
(i) we study the inf-convolution for specific structural classes of distortion functions $h_1$ and $h_2$; 
(ii) we evaluate $\rho_{h_1}\dsquare\rho_{h_2}(X)$ for tractable classes of risks $X$ (e.g., Bernoulli-type random variables) without specifying the distortion type. 
These studies also include, as notable special cases, settings with one risk-averse and one risk-seeking agent.

\subsection{Two agents with special distortion functions}
In analyzing risk sharing with specific preference classes, we focus on the relationship between the constrained inf-convolution $\rho_{h_1}\dsquare\rho_{h_2}$ and the benchmark $\rho_{h_1\dsquare h_2}$. We identify when this benchmark provides an upper bound for $\rho_{h_1}\dsquare\rho_{h_2}$ and when this bound is attained. Using $\rho_{h_1\dsquare h_2}$
as the benchmark is natural since, in all known tractable settings such as risk-averse/risk-seeking pairs and quantile-based specifications, one has the identity \eqref{equality_inf}.

An assumption of monotonicity of optimizers for $h_1\dsquare h_2$ would be important in the subsequent analysis.
\begin{assumptionp}{COIN}\label{assump:coin}
    Let $h_1,h_2\in \mathcal{H}$.
    There exists an increasing function $f$ with $x-f(x)$  increasing  such that $h_1(f(x))+h_2(x-f(x))=h_1 \dsquare h_2(x)$.
\end{assumptionp}
Clearly, Assumption \ref{assump:coin} ensures that the optimal split of $x$ moves monotonically as $x$ increases; equivalently, there exists an increasing selector $f$ with a co-increasing residual $x\mapsto x-f(x)$.
In particular, when both $h_1$ and $h_2$ are strictly convex, such selector exists; see Lemma 2 of \cite{ghossoub2024counterb}. 
\begin{remark}
    The importance of Assumption~\ref{assump:coin} arises precisely when one attempts to achieves the benchmark value $\rho_{h_1\dsquare h_2}(X)$ through an explicit allocation construction. To attain $\rho_{h_1\dsquare h_2}(X)$, one would naturally construct an allocation $(X_1, X_2)$ of $X$ so that the tail of each component corresponds to the optimal split of the tail of $X$. This requires that the splitting function $f$, which determines how the tail mass of $X$ is divided between $X_1$ and $X_2$, moves monotonically with the tail level. Assumption~\ref{assump:coin} ensures that the benchmark allocation is well-defined.
\end{remark}
The following theorem shows that under Assumption \ref{assump:coin}, the constrained inf-convolution $\rho_{h_1}\dsquare\rho_{h_2}$ can always be bounded by $\rho_{h_1\dsquare h_2}$.
\begin{theorem}\label{theorem:coin}
    Suppose that Assumption \ref{assump:coin} holds and $\mathcal{X}=L^+$. For $X\in \mathcal{X}$, it holds that 
    \begin{align}\label{eq:coin}
         \rho_{h_1\dsquare h_2}\dsquare\rho_{h_1\dsquare h_2}(X)\leq\rho_{h_1}\dsquare\rho_{h_2}(X)\leq  \rho_{h_1\dsquare h_2}(X).
     \end{align}
\end{theorem}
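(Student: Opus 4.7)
The plan is to establish the two inequalities separately: the left one is a short pointwise-domination argument, while the right one calls for an explicit comonotonic construction that activates Assumption~\ref{assump:coin}.

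For the lower bound, I would first note the pointwise estimate $h_1\dsquare h_2(x)\leq h_1(x)+h_2(0)=h_1(x)$ (by choosing $y=x$ in the defining infimum) and symmetrically $h_1\dsquare h_2\leq h_2$ on $[0,1]$. Under Assumption~\ref{assump:coin}, $h_1\dsquare h_2=h_1\circ f+h_2\circ(\mathrm{id}-f)$ is increasing and bounded, hence lies in $\mathcal{H}^{\mathrm{BV}}$, so $\rho_{h_1\dsquare h_2}$ is a distortion riskmetric. Because for any $Y\in L^+$ and $h\in\mathcal{H}^{\mathrm{BV}}$ one has $\rho_h(Y)=\int_0^\infty h(S_Y(t))\,\d t$, the pointwise comparison lifts to $\rho_{h_1\dsquare h_2}(Y)\leq\rho_{h_i}(Y)$ for $i=1,2$. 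Hence for any $(X_1,X_2)\in\mathbb{A}_2(X)$,
\[
\rho_{h_1}(X_1)+\rho_{h_2}(X_2)\;\geq\;\rho_{h_1\dsquare h_2}(X_1)+\rho_{h_1\dsquare h_2}(X_2)\;\geq\;\rho_{h_1\dsquare h_2}\dsquare\rho_{h_1\dsquare h_2}(X),
\]
and taking the infimum over allocations yields the left inequality in \eqref{eq:coin}.

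For the upper bound, I would exhibit a single comonotonic allocation $(X_1,X_2)\in\mathbb{A}_2(X)$ whose cost equals the benchmark $\rho_{h_1\dsquare h_2}(X)$. By Lemma~A.32 of \cite{follmer2016stochastic} on the atomless probability space, there exists a uniform random variable $V$ on $[0,1]$ with $X=\int_0^\infty \1_{\{V\leq S_X(t)\}}\,\d t$ a.s. Using the selector $f$ from Assumption~\ref{assump:coin}, define
\[
X_1:=\int_0^\infty \1_{\{V\leq f(S_X(t))\}}\,\d t,\qquad X_2:=\int_0^\infty \1_{\{f(S_X(t))<V\leq S_X(t)\}}\,\d t.
\]
Since $f$ and $\mathrm{id}-f$ are increasing and $S_X$ is decreasing, both $t\mapsto f(S_X(t))$ and $t\mapsto S_X(t)-f(S_X(t))$ are non-increasing in $t$; thus $X_1,X_2\in L^+$, $X_1+X_2=X$ a.s., and a short case analysis in $V$ (separating the events $\{V\leq f(S_X(0))\}$, $\{f(S_X(0))<V\leq S_X(0)\}$, $\{V>S_X(0)\}$) identifies the survival functions $S_{X_1}=f\circ S_X$ and $S_{X_2}=(\mathrm{id}-f)\circ S_X$. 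Therefore,
\begin{align*}
\rho_{h_1}(X_1)+\rho_{h_2}(X_2)&=\int_0^\infty \bigl[h_1(f(S_X(t)))+h_2(S_X(t)-f(S_X(t)))\bigr]\,\d t\\
&=\int_0^\infty (h_1\dsquare h_2)(S_X(t))\,\d t\;=\;\rho_{h_1\dsquare h_2}(X),
\end{align*}
where the middle equality invokes Assumption~\ref{assump:coin} pointwise in $t$.

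The main obstacle is to pin down the survival functions $S_{X_1}$ and $S_{X_2}$ rigorously when $X$ has atoms (so $S_X$ has flats) or when $f$ has flats, because the layer-cake integrals need not be continuous in the threshold $V$ in those situations. The monotonicity half of Assumption~\ref{assump:coin} is precisely what I would exploit to bypass this: it guarantees that the two layer maps above are themselves non-increasing in $t$, so each $X_i$ reduces almost surely to a single hitting time of $V$, and the uniform law of $V$ then yields $S_{X_i}$ exactly in the desired form. Once this structural observation is in place, the remaining calculations are routine.
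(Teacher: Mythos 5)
Your lower-bound argument is correct and is essentially the paper's: $h_1\dsquare h_2\le h_1\wedge h_2$ pointwise lifts to $\rho_{h_1\dsquare h_2}\le\rho_{h_i}$, and the chain of inequalities follows. The gap is in the upper bound. Your allocation does give the right law for $X_1$, because the events $\{V\le f(S_X(t))\}$ are nested in $t$ (the threshold $f\circ S_X$ is non-increasing), so $X_1$ is a monotone function of $V$ and $S_{X_1}=f\circ S_X$ a.e. But the events $C(t)=\{f(S_X(t))<V\le S_X(t)\}$ assigned to the second agent are \emph{not} nested: for $t<t'$ one would need $f(S_X(t))\le f(S_X(t'))$, while monotonicity of $f$ and of $S_X$ gives the reverse inequality, so the slabs slide downward as $t$ grows. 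Monotonicity of $t\mapsto S_X(t)-f(S_X(t))$ only controls $\p(C(t))$, not the law of $\int_0^\infty\1_{C(t)}\,\d t$; without nestedness $X_2$ is not a monotone function of $V$, and the identity $S_{X_2}=(\mathrm{id}-f)\circ S_X$ fails, which invalidates the displayed equality $\rho_{h_2}(X_2)=\int_0^\infty h_2(S_X(t)-f(S_X(t)))\,\d t$.

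A concrete counterexample: take $h_1=h_2=h$ with $h(x)=x^2$, so Assumption \ref{assump:coin} holds with $f(x)=x/2$ (both $f$ and $\mathrm{id}-f$ increasing), and take $X=(1-V)^2$, so $S_X(t)=1-\sqrt t$ on $[0,1]$. Your construction yields $X_1=((1-2V)_+)^2$, with the correct survival function $f(S_X(s))=(1-\sqrt s)/2$, but $X_2=X-X_1$ equals $2V-3V^2$ on $\{V\le 1/2\}$ and $(1-V)^2$ on $\{V>1/2\}$, which is not monotone in $V$; in particular $\p(X_2>0)=1\neq\tfrac12=S_X(0)-f(S_X(0))$, and a direct computation gives $\rho_h(X_2)\approx 0.111$ versus the target $\int_0^1 h(S_X(t)/2)\,\d t=1/24$, so the total cost of your allocation strictly exceeds the benchmark $\rho_{h\dsquare h}(X)=1/12$. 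This is precisely the obstruction the paper's proof is designed to avoid: it discretizes $X$ into dyadic layers $\frac1{2^n}\sum_k\1_{A_n^k}$ and constructs the sets $B_n^k$ \emph{incrementally}, placing each increment $B_n^k\setminus B_n^{k+1}$ inside the newly revealed slice of probability mass at layer $k$, so that both $(B_n^k)_k$ and $(A_n^k\setminus B_n^k)_k$ are nested; comonotonic additivity then gives the exact cost $\rho_{h_1\dsquare h_2}(X_n)$, and continuity from below of the constrained inf-convolution (Lemma \ref{lemma:2_a}) passes the bound to $X$. Any continuous one-shot construction would need to reproduce this two-sided nestedness, which the slab between the two curves $f\circ S_X$ and $S_X$ does not provide.
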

\begin{proof}
    Since $h_1\dsquare h_2\leq \min \left\{h_1, h_2\right\}$, the monotonicity of  distortion risk measures yields 
    the first inequality. It remains to show that $\rho_{h_1}\dsquare\rho_{h_2}(X)\leq  \rho_{h_1\dsquare h_2}(X)$.
    For $X\geq 0$ and $n\in \mathbb{N}$, define 
    \begin{align}\label{eq:xn}
        X_n=\sum_{i=0}^{n 2^n-1} \frac{i}{2^n}\mathbb{1}_{\left\{\frac{i}{2^n}<X \leq \frac{i+1}{2^n}\right\}}+ n \mathbb{1}_{\left\{X \geq n\right\}},
    \end{align}
    so that $X_n\uparrow X$. Equivalently, \eqref{eq:xn}
    can be formulated as:
    \begin{align*}
        X_n=\frac{1}{2^n} \sum_{k=1}^{n 2^n} \mathbb{1}_{A_{n}^k}, ~\text{where}~ A_{n}^k=\left\{X\geq \frac{k}{2^n}\right\}, k\in [2^n n].
    \end{align*}
    Clearly, $A_{n}^k\subseteq A_{n}^{{k-1}}$ for $k\in [2^n n]\setminus[1]$. Hence, $(\mathbb{1}_{A_{n}^1}, \cdots,\mathbb{1}_{A_{n}^{2^n n}})$ is comonotonic. 
    Write $p_{n}^k=\mathbb{P}(A_{n}^k)$ and $\mathcal{S}_n^k=\argmin_{0\leq t\leq p_{n}^k} \left\{h_1(t)+h_2(p_{n}^k-t)\right\}$ for $k\in [2^n n]$.
    Let
    $u_{n}^k \in \mathcal{S}_n^k$ be the selector within Assumption \ref{assump:coin} such that
     $u_{n}^k$ and $p_{n}^k-x_{n}^k$ are non-increasing in $k$. 
    By comonotonic additivity of $\rho_{h_1 \dsquare h_2}$, 
    it follows that
    \begin{align*}
        \rho_{h_1 \dsquare h_2}(X_n)&=\frac{1}{2^n} \sum_{k=1}^{n 2^n} \rho_{h_1 \dsquare h_2}(\mathbb{1}_{A_{n}^k})=\frac{1}{2^n} \sum_{k=1}^{n 2^n}{h_1 \dsquare h_2}(p_{n}^k)\\
        &=\frac{1}{2^n} \sum_{k=1}^{n 2^n}(h_1(u_{n}^k)+h_2(p_{n}^k-u_{n}^k)).
    \end{align*}
    Our goal is to construct an allocation $(Y_n, Z_n)\in A_2(X_n)$ such that $\rho_{h_1}(Y_n)+ \rho_{h_1}(Z_n)=\rho_{h_1 \dsquare h_2}(X_n)$ for each $n \in \mathbb{N}$.
    Define the sets
    \begin{align*}
        & B_{n}^k=B_{n}^{k+1} \cup \left\{
        F_{X_n} \left(\frac{k}{2^n}\right)\leq U_{X_n}< F_{X_n} \left(\frac{k}{2^n}\right)+(u_{n}^k-u_{n}^{k+1})\right\}, ~ k=1, \cdots, 2^n n-1,\\
        & B_{n}^{2^n n}=\left\{F_{X_n}(n)\leq U_{X_n}< F_{X_n}(n)+u_{n}^{2^n n}\right\}.
    \end{align*}
    Then for all $k\in [2^n n]$, we have
    $
    \mathbb{P}(B_{n}^k)=u_{n}^k.
    $
     It can also be verified that 
     $B_{n}^k\subseteq B_{n}^{k-1}$ and $A_{n}^k\backslash B_{n}^k\subseteq A_{n}^{k-1}\backslash B_{n}^{k-1}$ for $k\in [2^n n]\setminus[1]$.
    Define the allocation
     \begin{align}\label{eq:allocation}
         Y_n=\frac{1}{2^n} \sum_{k=1}^{n 2^n} \mathbb{1}_{B_{n}^k}~ \text{and}~ Z_n=\frac{1}{2^n} \sum_{k=1}^{n 2^n} \mathbb{1}_{A_{n}^k\backslash B_{n}^k}.
     \end{align}
     Clearly, $Y_n+Z_n=X_n$ for each $n$.
     Moreover, comonotonic additivity leads to 
     \begin{align}\notag
         \rho_{h_1}(Y_n)+\rho_{h_2}(Z_n)
         &=\frac{1}{2^n} \sum_{k=1}^{n 2^n}\big(h_1(\mathbb{P}(B_n^k))+h_2(\mathbb{P}(A_n^k)-\mathbb{P}(B_n^k))\big)
         \\ \notag
         &=
         \frac{1}{2^n} \sum_{k=1}^{n 2^n}\big({h_1}(u_{n}^k)+h_2(p_{n}^k-u_{n}^k)\big)\\ \label{eq:n}
         &=
         \frac{1}{2^n}\sum_{k=1}^{n 2^n} h_1\dsquare h_2(p_{n}^k)=\rho_{h_1\dsquare h_2}(X_n).
     \end{align}
     Consequently, it follows that
     \begin{align*}
         \rho_{h_1}\dsquare \rho_{h_2}(X)&=
         \lim_{n \rightarrow \infty}\rho_{h_1}\dsquare \rho_{h_2}(X_n)
         \\
         &\leq 
         \lim_{n \rightarrow \infty}\big(\rho_{h_1}(Y_n)+\rho_{h_2}(Z_n)\big)=\lim_{n \rightarrow \infty} \rho_{h_1 \dsquare h_2}(X_n)=\rho_{h_1 \dsquare h_2}(X).
     \end{align*}
     The first equality holds since $\rho_{h_1}\dsquare \rho_{h_2}$ is continuous from below; see Lemma \ref{lemma:2_a} in Appendix \ref{appendix:A}
     for more details. Therefore, the result implies that $\rho_{h_1}\dsquare\rho_{ h_2}(X)\leq
    \rho_{h_1\dsquare h_2}(X)$.
\end{proof}

As we can see, Theorem \ref{theorem:coin} is non-constructive and it does not specify an allocation that attains the upper bound $\rho_{h_1\dsquare h_2}(X)$. On an atomless space, however, when there exists a uniform random variable independent of $X$, the attainable allocation is available; see Lemma 3 of \cite{lauzier2024negatively}.

As noted earlier, Assumption \ref{assump:coin} holds whenever both $h_1$ and $h_2$ are convex; however, convexity is not required. There exists nonconvex pairs $(h_1,h_2)$ that still admits such an increasing minimizer, as demonstrated in Example \ref{example:concave_convex} involving a risk-averse agent and a risk-seeking agent. Therefore, Theorem \ref{theorem:coin} applies strictly beyond the convex setting.

\begin{example}\label{example:concave_convex}
 Let $g_1(x)=\max \left\{0, 4/3(x-1/4)\right\}$ and 
 $g_2(x)=\min \left\{ 8/7x, 1\right\}$  
 for $x\in [0,1]$.
 Clearly, $g_1$ is convex and $g_2$ is concave with $g_2\leq \tilde{g}_1$. 
 Then we can calculate 
 $
 g_1 \dsquare g_2(x)=\max \left\{ 0, 8/7(x-1/4)\right\}.
 $
  Define $f(x)=\min \left\{x, 1/4\right\}$, thus $x-f(x)=\max \left\{0, x-1/4\right\}$. Both $f(x)$ and $x-f(x)$ are non-decreasing; see Figure \ref{fig:example3} for a detailed illustration. 
     It can be verified that 
      $g_1(f(x))+g_2(x-f(x))=g_1\dsquare g_2(x)$ for $x\in [0,1]$.
      
    \begin{figure}[ht]
     \centering
     \begin{subfigure}[b]{0.48\textwidth}
         \centering
        \includegraphics[width=7.5cm]{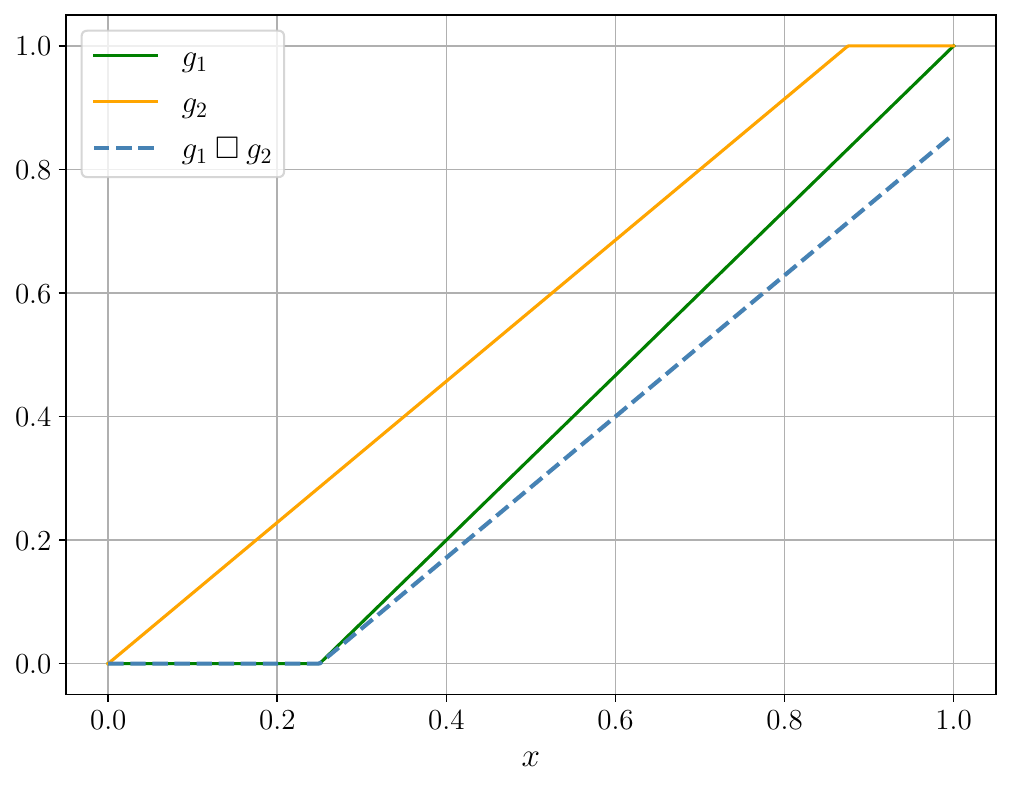}
         \caption{$g_1, g_2$ and $g_1 \dsquare g_2$}
         \label{fig:example3_1}
     \end{subfigure}
     \hfill
     \begin{subfigure}[b]{0.48\textwidth}
         \centering
        \includegraphics[width=7.5cm]{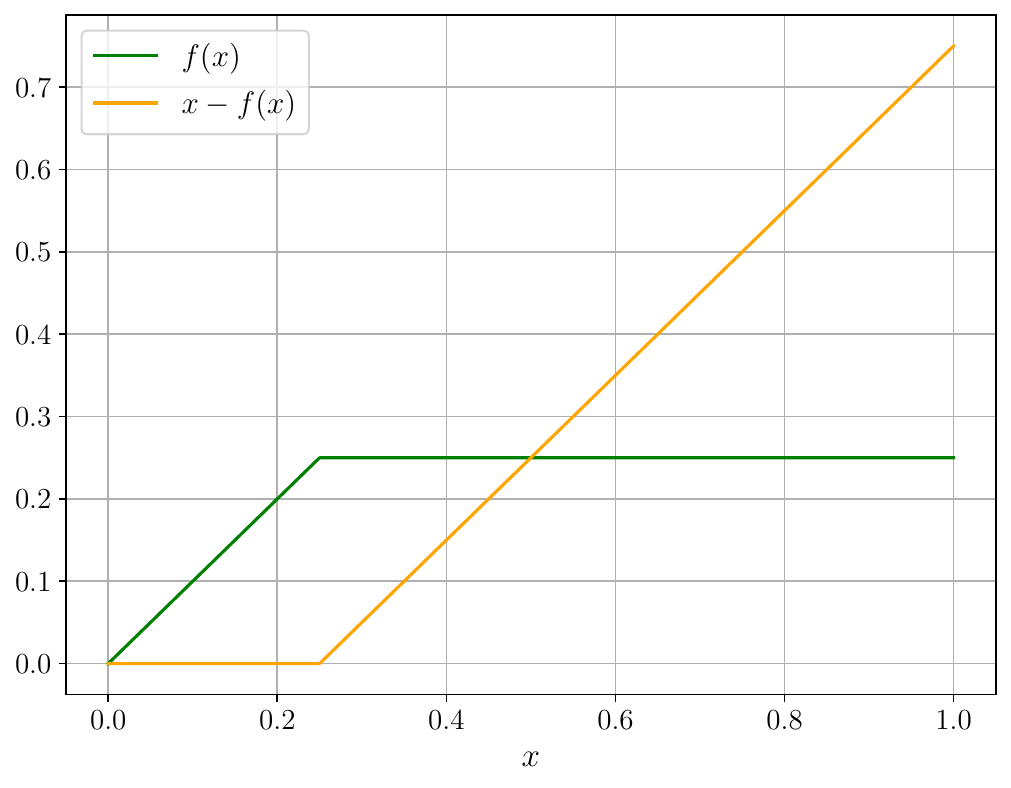}
         \caption{Optimizer for $g_1 \dsquare g_2$ }
         \label{fig:example3_2}
     \end{subfigure}
        \caption{An illustration of $g_1 \dsquare g_2$ with $a=1/4$ and $b=7/8$}
        \label{fig:example3}
\end{figure}
\end{example}

In fact, the second inequality in \eqref{eq:coin} becomes an equality when the two agents' distortion functions are of the type described in Example~\ref{example:concave_convex}. 
Before stating that result, we provide a technical lemma that will be used in the sequel, which analyzes risk sharing between a VaR agent and a distortion agent. 
The result also generalizes Theorem 5.3 of \cite{wang2020characterizing}.
To make the statement more precise, we introduce the following notation. For $h\in \mathcal{H}$, let  $\alpha(h)=\sup \left\{t\in [0,1]: h(t)=0\right\}$. The function 
$$\hat{h}(t)=h((t+\alpha(h))\wedge 1), \ \ t\in [0,1]$$ 
is called the active part of $h$. Additionally, let $h^{a}(t)=h((t-a)_{+})$ for $t\in [0,1]$ and $a\in [0,1]$.

\begin{lemma}\label{lemma:var}
    Suppose that $h\in \mathcal{H}$ and $\mathcal{X}=L^+$. For any $\alpha\in [0,1]$ and 
    $X\in \mathcal{X}$, we have
\begin{align}\label{eq:var_dis}
\mathrm{VaR}_{\alpha}\dsquare \rho_{h}(X)= \rho_{h^{\alpha}}(X).
    \end{align}
\end{lemma}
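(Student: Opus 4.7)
The plan is to prove $\mathrm{VaR}_\alpha\dsquare\rho_h(X)\le \rho_{h^\alpha}(X)$ and the reverse inequality separately.

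For the upper bound, I would exhibit an explicit admissible allocation achieving the benchmark $\rho_{h^\alpha}(X)$. Choose a uniform $U_X$ on $[0,1]$ with $X=F_X^{-1}(U_X)$ almost surely, which exists on the atomless space by Lemma~A.32 of \cite{follmer2016stochastic}. Split $X$ through the event $\{U_X\le 1-\alpha\}$ by setting $Z=X\mathbb{1}_{\{U_X\le 1-\alpha\}}$ and $Y=X\mathbb{1}_{\{U_X>1-\alpha\}}$; then $Y,Z\in L^+$ and $Y+Z=X$. Using the quantile identity $\{X>z\}=\{U_X>F_X(z)\}$ almost surely, a direct computation yields $\mathbb{P}(Z>z)=((1-\alpha)-F_X(z))_+=(S_X(z)-\alpha)_+$ for every $z\ge 0$, hence $\rho_h(Z)=\int_0^\infty h((S_X(z)-\alpha)_+)\,\mathrm{d}z=\rho_{h^\alpha}(X)$. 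Since $\mathbb{P}(Y>0)\le \mathbb{P}(U_X>1-\alpha)=\alpha$, one has $\mathrm{VaR}_\alpha(Y)=0$, and the upper bound follows.

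For the lower bound, fix any $(Y,Z)\in\mathbb{A}_2(X)$ with $Y,Z\ge 0$ and set $v=\mathrm{VaR}_\alpha(Y)$. Right-continuity of the survival function of $Y$ yields $\mathbb{P}(Y>v)\le \alpha$, and the inclusion $\{X>x+v\}\cap\{Y\le v\}\subseteq\{Z>x\}$ gives the pointwise estimate $\mathbb{P}(Z>x)\ge (S_X(x+v)-\alpha)_+$ for $x\ge 0$. Monotonicity of $h$ together with the substitution $y=x+v$ then produces $\rho_h(Z)\ge \int_v^\infty h((S_X(y)-\alpha)_+)\,\mathrm{d}y$, and since $h\le 1$ we also have $v\ge \int_0^v h((S_X(y)-\alpha)_+)\,\mathrm{d}y$, so summing the two pieces yields $\mathrm{VaR}_\alpha(Y)+\rho_h(Z)\ge \int_0^\infty h((S_X(y)-\alpha)_+)\,\mathrm{d}y=\rho_{h^\alpha}(X)$.

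The main subtlety is engineering the survival function of $Z$ to coincide exactly with $(S_X(z)-\alpha)_+$ in the upper-bound construction; this hinges on choosing $U_X$ compatibly with $X$, which is precisely where the atomless assumption is used through Lemma~A.32 of \cite{follmer2016stochastic}. Once this calibration is in place, the lower bound reduces to elementary manipulations using only monotonicity of $h$, the bound $h\le 1$, and right-continuity of survival functions, and the two inequalities combine to give the desired identity \eqref{eq:var_dis}.
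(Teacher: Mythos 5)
Your proof is correct. The upper bound is essentially the paper's own argument: both use the split $Y=X\mathbb{1}_{\{U_X>1-\alpha\}}$, $Z=X\mathbb{1}_{\{U_X\le 1-\alpha\}}$, check $\mathrm{VaR}_\alpha(Y)=0$, and compute $\mathbb{P}(Z>t)=(S_X(t)-\alpha)_+$ to land on $\rho_{h^\alpha}(X)$. The lower bound, however, is genuinely different. The paper writes $\rho_h(X-Y)=\int_0^1\mathrm{VaR}_\beta(X-Y)\,\mathrm{d}h(\beta)$, splits the integral at $\beta=1-\alpha$, and invokes the external inequality $\mathrm{VaR}_{\alpha+\beta}(X_1+X_2)\le\mathrm{VaR}_\alpha(X_1)+\mathrm{VaR}_\beta(X_2)$ (Corollary 1 of \cite{embrechts2018quantile}), discarding the piece over $[1-\alpha,1]$ by nonnegativity. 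You instead work directly at the level of survival functions: setting $v=\mathrm{VaR}_\alpha(Y)$, using right-continuity to get $\mathbb{P}(Y>v)\le\alpha$, and the inclusion $\{X>x+v\}\cap\{Y\le v\}\subseteq\{Z>x\}$ to obtain $\mathbb{P}(Z>x)\ge(S_X(x+v)-\alpha)_+$, then absorbing the missing initial segment $\int_0^v$ into $v$ via $h\le 1$. Your route is self-contained and elementary (it essentially reproves the cited VaR subadditivity inequality in the special form needed), at the cost of a slightly longer computation; the paper's route is shorter given the quoted result and makes the mixture-of-quantiles structure of $\rho_h$ explicit. Both arguments use nonnegativity of the allocation in the same essential places ($v\ge 0$ and $\mathrm{VaR}_\beta(X-Y)\ge 0$, respectively), and both handle the endpoint cases $\alpha\in\{0,1\}$ identically, so there is no gap.
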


\begin{proof}
    We first show that $\mathrm{VaR}_\alpha \square \rho_h(X) \geq \rho_{h^\alpha}(X)$. For any allocation $(Y, X-Y)$ of $X$ with $0\leq Y\leq X$, it follows that
    \begin{align*}
\mathrm{VaR}_\alpha(Y)+\rho_h(X-Y)&=\mathrm{VaR}_\alpha(Y)+\int_0^1\mathrm{VaR}_{\beta}(X-Y) \mathrm{d}h(\beta)\\&=\int_0^{1-\alpha}\mathrm{VaR}_\alpha(Y)+\mathrm{VaR}_{\beta}(X-Y) \mathrm{d}h(\beta)\\&\quad +
         \int_{1-\alpha}^{1}\mathrm{VaR}_\alpha(Y)+\mathrm{VaR}_{\beta}(X-Y) \mathrm{d}h(\beta)\\
        &\geq \int_0^{1-\alpha}\mathrm{VaR}_{\alpha+\beta}(X) \mathrm{d}h(\beta)
        \\&= \int_0^{1}\mathrm{VaR}_{\beta'}(X) \mathrm{d}h((\beta'-\alpha)_{+})=\rho_{h^\alpha}(X).
    \end{align*}
    The inequality is due to $0\leq Y\leq X$ and 
    $\mathrm{VaR}_{\alpha+\beta}(X_1+X_2)\leq \mathrm{VaR}_{\alpha}(X_1)+\mathrm{VaR}_{\beta}(X_2)$ for $X_1, X_2\in \mathcal{X}$; see Corollary 1 of \cite{embrechts2018quantile}.
    
    Next we show that  $\mathrm{VaR}_\alpha \square \rho_h(X) \leq \rho_{h_\alpha}(X)$. Note that $\mathrm{VaR}_\alpha\left(X \mathbb{1}_{\left\{U_X>1-\alpha\right\}}\right)=0$. By straightforward calculation, we have
    \begin{align*}
    \mathrm{VaR}_\alpha \square \rho_h(X)&\leq
     \mathrm{VaR}_\alpha\left(X \mathbb{1}_{\left\{U_X>1-\alpha\right\}}\right)+\rho_h\left(X \mathbb{1}_{\left\{U_X \leq 1-\alpha\right\}}\right)\\
&=  \int_0^{\infty} h\left( \mathbb{P}\left(X \mathbb{1}_{\left\{U_X \leq 1-\alpha\right\}}>t\right)\right) \mathrm{d} t \\
&= \int_0^{\infty} h \left(\mathbb{P}\left(\left\{ X > t \right\} \cap \left\{U_X \leq 1-\alpha\right\}\right) \right)\mathrm{d} t \\
&=\int_0^{\infty} h \left((\p(X>t)-\alpha)_{+}\right)\mathrm{d} t =\rho_{h^\alpha}(X).
   \end{align*}
   Therefore, the desired result is obtained.
\end{proof}

\begin{remark}
    For any $\alpha\in [0,1]$, we denote by $g^{\alpha}$ the distortion function of $\mathrm{VaR}_{\alpha}$, i.e., $g^{\alpha}(x)=\mathbb{1}_{\left\{x> \alpha\right\}}$. Clearly, $g^{\alpha}\dsquare h(x)=h^{\alpha}(x)$ for $x\in[0,1]$. Therefore, \eqref{eq:var_dis} in fact states that
$\rho_{g^{\alpha}}\dsquare \rho_{h}(X)=\rho_{g^{\alpha} \dsquare h}(X)$ for $X\in \mathcal{X}$. 
\end{remark}

We now investigate the structure of optimal allocations in a two-agent risk sharing setting involving one risk-averse and one risk-seeking participant, where both agents are characterized by piecewise linear distortion functions, as generalized from Example~\ref{example:concave_convex}.
In the next proposition, the distortion  functions $g_1$ and $g_2$ correspond to a left-tail Expected Shortfall (see \cite{embrechts2015aggregation}) and an Expected Shortfall, respectively. 

\begin{proposition}\label{theorem:pie_linear}
    Suppose that $\mathcal{X}=L^+$. Let $h_1(x)=\max \left\{0, (x-a)/(1-a)\right\}$ and 
$h_2(x)=\min \left\{ x/b, 1\right\}$  
for $x\in [0,1]$, where $a,b\in(0,1)$ and $a+b\geq 1$.
Then  $$
    \rho_{h_1}\dsquare \rho_{h_2}(X)=\rho_{h_1\dsquare h_2}(X) = \rho_{h} (X) \mbox{~~~for all $X\in \mathcal{X}$},
$$
where  $$
       h(x)= \max \left\{ 0, (x-a)/b\right\} , ~~~x\in [0,1].
$$
Moreover, an optimal allocation $(X_1, X_2)$
of $X$ is given by 
\begin{align*}
    X_1=X\mathbb{1}_{\left\{U_X\geq 1-a\right\}}, ~\text{and}~ X_2=X\mathbb{1}_{\left\{U_X< 1-a\right\}}.
\end{align*}
\end{proposition}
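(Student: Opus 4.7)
The plan is to prove $\rho_{h_1}\dsquare\rho_{h_2}(X)=\rho_h(X)$ by sandwiching, and to separately verify $h_1\dsquare h_2=h$ so that $\rho_{h_1\dsquare h_2}=\rho_h$ is automatic.

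For the inf-convolution of distortions, taking $y=x\wedge a$ in $h_1(y)+h_2(x-y)$ gives $0+(x-a)_+/b$ because $(x-a)_+\le 1-a\le b$ keeps $h_2$ on its linear piece. Optimality of this split on $x>a$ follows from convexity of both $h_1$ and $h_2$ together with the slope comparison $1/(1-a)\ge 1/b$ enforced by $a+b\ge 1$: past $y=a$ the right derivative of $h_1$ dominates the left derivative of $-h_2(x-\cdot)$, so the sum is non-decreasing. Hence $h_1\dsquare h_2=h$, and the selector $f(x)=x\wedge a$ verifies Assumption~\ref{assump:coin} since $x-f(x)=(x-a)_+$ is non-decreasing.

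For the upper bound I would evaluate the proposed allocation directly. Using $X=F_X^{-1}(U_X)$ a.s.\ and uniformity of $U_X$, one obtains $\p(X_1>t)=\min\{\p(X>t),a\}$ and $\p(X_2>t)=(\p(X>t)-a)_+$ for $t>0$. Since $h_1\equiv 0$ on $[0,a]$, $\rho_{h_1}(X_1)=0$; since $(\p(X>t)-a)_+\le 1-a\le b$, $h_2$ acts linearly there and $\rho_{h_2}(X_2)=\rho_h(X)$. Thus $(X_1,X_2)\in\mathbb{A}_2(X)$ attains $\rho_h(X)$, so $\rho_{h_1}\dsquare\rho_{h_2}(X)\le\rho_h(X)$; the same bound also follows from Theorem~\ref{theorem:coin} since Assumption~\ref{assump:coin} holds.

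For the matching lower bound I would begin with Bernoulli $X=c\mathbb{1}_A$, $\p(A)=p$: writing $\phi(t)=\p(Y>t\mid A)$, the constraint $Y+Z=c$ on $A$ yields $\p(Z>t)=p(1-\phi(t))$, so $\rho_{h_1}(Y)+\rho_{h_2}(Z)=\int_0^c[h_1(p\phi(t))+h_2(p(1-\phi(t)))]\,dt$. The one-dimensional map $\sigma\mapsto h_1(p\sigma)+h_2(p(1-\sigma))$ on $[0,1]$ is minimized at $\sigma=(a/p)\wedge 1$ with value $h(p)$, because the slope condition $a+b\ge 1$ makes the sum first non-increasing on $[0,a/p]$ (driven by $-h_2$) and then non-decreasing on $[a/p,1]$ (driven by $h_1$); integrating over $t$ yields $\rho_{h_1}(Y)+\rho_{h_2}(Z)\ge ch(p)=\rho_h(X)$. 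For arbitrary $X\in L^+$ I would approximate $X$ by simple $X_n\uparrow X$ whose layers are comonotonic indicators and build layer-wise optima through the selector $f(x)=x\wedge a$ along the lines of the proof of Theorem~\ref{theorem:coin}, then pass to the limit via comonotonic additivity and continuity from below of $\rho_{h_1}\dsquare\rho_{h_2}$. The main obstacle is making the Bernoulli-level bound carry over to non-layer-respecting allocations; the cleanest route mirrors the truncation/approximation argument of Theorem~\ref{theorem:coin}, with the explicit optimizer $(X_1,X_2)$ serving as the comparison.
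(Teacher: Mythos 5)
Your computation of $h_1\dsquare h_2=h$, your verification of Assumption~\ref{assump:coin} with $f(x)=x\wedge a$, and your direct evaluation of the proposed allocation (giving $\rho_{h_1}(X_1)=0$ and $\rho_{h_2}(X_2)=\rho_h(X)$, hence the upper bound $\rho_{h_1}\dsquare\rho_{h_2}(X)\le\rho_h(X)$) are all correct and essentially coincide with the second half of the paper's argument. The Bernoulli-case lower bound is also fine; it is a special case of Theorem~\ref{theorem:ele}.

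The genuine gap is the lower bound for general $X\in L^+$, and you have flagged the obstacle without resolving it. The ``truncation/approximation argument of Theorem~\ref{theorem:coin}'' cannot do the job: that argument constructs \emph{specific} layer-respecting allocations attaining $\rho_{h_1\dsquare h_2}(X_n)$, which only yields the \emph{upper} bound $\rho_{h_1}\dsquare\rho_{h_2}\le\rho_{h_1\dsquare h_2}$; its lower bound is merely $\rho_{h_1\dsquare h_2}\dsquare\rho_{h_1\dsquare h_2}$. To prove $\rho_{h_1}\dsquare\rho_{h_2}(X)\ge\rho_h(X)$ you must control \emph{arbitrary} competing allocations $(Y,Z)$, which need not respect the layers of $X_n$, and the Bernoulli-level pointwise bound $h_1(u)+h_2(v)\ge h_1\dsquare h_2(u+v)$ is unavailable once $\p(Y>t)+\p(Z>t)>1$ (this is exactly why Proposition~\ref{proposition:half}(i) needs $\p(X>0)\le 1/2$, and Example~\ref{example:sub} shows the inequality $\rho_{h_1}\dsquare\rho_{h_2}\ge\rho_{h_1\dsquare h_2}$ can genuinely fail without extra structure). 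The paper closes this gap by a different mechanism: it factorizes $\rho_{h_1}=\mathrm{VaR}_a\dsquare\rho_{\hat h_1}$ with $\hat h_1(x)=\min\{x/(1-a),1\}$ the active part of $h_1$ (Lemma~\ref{lemma:var}), invokes associativity of inf-convolutions, uses that $\hat h_1\wedge h_2=h_2$ (from $a+b\ge1$) together with subadditivity of $\rho_{h_2}$ to get $\rho_{\hat h_1}\dsquare\rho_{h_2}=\rho_{h_2}$, and then applies Lemma~\ref{lemma:var} once more to obtain $\mathrm{VaR}_a\dsquare\rho_{h_2}=\rho_{h_2^a}=\rho_h$ as a chain of exact equalities. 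The lower-bound half of Lemma~\ref{lemma:var}, which rests on the subadditivity-type inequality $\mathrm{VaR}_{\alpha+\beta}(X_1+X_2)\le\mathrm{VaR}_\alpha(X_1)+\mathrm{VaR}_\beta(X_2)$, is precisely the ingredient that handles arbitrary allocations and that your outline is missing. You would need either to reproduce that decomposition or to supply an independent argument bounding $\rho_{h_1}(Y)+\rho_{h_2}(Z)$ from below for every feasible $(Y,Z)$.
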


\begin{proof}
    Clearly, $\alpha(h_1)=a$ and $\hat{h}_1(x)=\min \left\{ x/(1-a), 1\right\}$.
    We can calculate 
\begin{align}\label{eq:inf_g1g2}
       h_1 \dsquare h_2(x)=\max \left\{ 0, (x-a)/b\right\}=h_2^a(x)= h(x), ~~~x\in [0,1].
    \end{align}
    By Lemma \ref{lemma:var}, it follows that
    \begin{align*}
\operatorname{VaR}_{a}\dsquare \rho_{\hat{h}_1}(X)=\rho_{h_1}(X).
    \end{align*}
  Note that by using Lemma 2 of \cite{liu2020inf},  we can see that inf-convolutions are associative. Therefore, 
    \begin{align*}
        \rho_{h_1}\dsquare \rho_{h_2}(X)&= (\operatorname{VaR}_{a}\dsquare \rho_{\hat{h}_1}) \dsquare\rho_{h_2}(X)
        \\&=
        \operatorname{VaR}_{a}\dsquare (\rho_{\hat{h}_1} \dsquare \rho_{h_2})(X)
       =\operatorname{VaR}_{a}\dsquare \rho_{h_2}(X)=\rho_{h_1\dsquare h_2}(X),
    \end{align*}
    where the last equality follows from Lemma \ref{lemma:var} and \eqref{eq:inf_g1g2}.

Next, we find the optimal allocation. 
    Let $f(x)=\min \left\{x, a\right\}$, so that $x-f(x)=\max \left\{0, x-a\right\}$. Both $f(x)$ and $x-f(x)$ are non-decreasing. 
    It can be verified that 
     $h_1(f(x))+h_2(x-f(x))=h_1\dsquare h_2(x)$ for $x\in [0,1]$.
     Then it follows that
\begin{align*}
    \rho_{h_1}(X_1)&=\int_0^\infty h_1(\p(X\mathbb{1}_{\left\{U_X\geq 1-a\right\}}>t))\mathrm{d}t\\
    &=\int_0^\infty h_1(\p(\left\{X>t\right\}\cap \left\{U_X\geq 1-a\right\}))\mathrm{d}t\\
    &=\int_0^\infty h_1(\p(U_X\geq \max \left\{F_X(t), 1-a\right\}))\mathrm{d}t\\
    &=\int_0^\infty h_1(\min \left\{a, S_X(t)\right\})\mathrm{d}t=\int_0^\infty h_1(f(S_X(t)))\mathrm{d}t.
\end{align*}
Similarly, we have $ \rho_{h_2}(X_2)=\int_0^\infty h_1(S_X(t)-f(S_X(t)))\mathrm{d}t$.
This implies that $\rho_{h_1}(X_1)+\rho_{h_2}(X_2)=\rho_{h_1\dsquare h_2}(X)$.
\end{proof}

Although Proposition~\ref{theorem:pie_linear} covers only a subset of the mixed (one risk-averse and one risk-seeking) cases, it extends part~(ii) of Theorem \ref{theorem:no_concave_convex} by removing the constraint of $h_1\square h_2(1)=1$. 
A further implication of Proposition~\ref{theorem:pie_linear} is that comonotonic allocations cannot be optimal unless the parameters satisfy $a+b=1$.
In particular, when the degree of risk seeking exceeds that of risk aversion, counter-monotonic allocations would strictly outperform comonotonic ones, as stated in Proposition \ref{proposition:no_como}.

\begin{proposition}\label{proposition:no_como}
    Suppose that $\mathcal{X}=L^+$,  $h_1$ is concave, and $h_2$ is convex. Then
    \begin{itemize}
        \item[(i)] $h_1\dsquare h_2=h_2$ if and only if $h_1\dsquare h_2(1)=1$.
        \item[(ii)]If $h_1 \dsquare h_2(1)<1$, then a comonotonic allocation of $X\in \mathcal{X}$ is never optimal.
    \end{itemize}
\end{proposition}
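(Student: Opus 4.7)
My plan is to handle the two parts in order, with the second leveraging the first.

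For part (i), the ``only if'' direction is immediate by evaluating $h_1\dsquare h_2=h_2$ at $x=1$, giving $h_1\dsquare h_2(1)=1$. For the converse, Theorem~\ref{theorem:1} recasts the hypothesis as $h_1\ge \tilde h_2$ on $[0,1]$. Combined with the trivial bound $h_1\dsquare h_2\le h_2$ (take $y=0$ in the inf), I only need $h_1\dsquare h_2\ge h_2$. Fixing $x\in[0,1]$ and $y\in[0,x]$, I substitute $h_1(y)\ge 1-h_2(1-y)$ and invoke convexity of $h_2$ in the form $h_2(1-y)-h_2(x-y)\le h_2(1)-h_2(x)=1-h_2(x)$ (the chord slope on the shifted window $[x-y,1-y]$ is bounded by the slope on the right-most window $[x,1]$), which rearranges to $h_1(y)+h_2(x-y)\ge h_2(x)$.

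For part (ii), I first bound the total distortion risk of any comonotonic allocation below by $\rho_{h_2}(X)$. Since $h_1$ is concave with $h_1(0)=0,\ h_1(1)=1$ and $h_2$ is convex with the same endpoints, one has $h_1\ge\mathrm{id}\ge h_2$ on $[0,1]$, which yields $\rho_{h_1}(X_1)\ge\rho_{h_2}(X_1)$ for any $X_1\in L^+$. Comonotonic additivity of the Choquet integral $\rho_{h_2}$ then gives
\[
\rho_{h_1}(X_1)+\rho_{h_2}(X_2)\ \ge\ \rho_{h_2}(X_1)+\rho_{h_2}(X_2)\ =\ \rho_{h_2}(X)
\]
for every comonotonic $(X_1,X_2)\in\mathbb{A}_2(X)$.

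It remains to construct a (non-comonotonic) allocation whose total risk is strictly less than $\rho_{h_2}(X)$. By the contrapositive of part~(i), $h_1\dsquare h_2\ne h_2$, so I may pick $s_0\in(0,1]$ and $y^*\in(0,s_0)$ with $h_1(y^*)+h_2(s_0-y^*)<h_2(s_0)$. Following the layer-wise construction underlying Theorem~\ref{theorem:coin}, I use an auxiliary uniform independent of $X$ to split the level set of $S_X$ near $s_0$ into two pieces, sending a conditional fraction $y^*/s_0$ to the first agent and the remainder to the second; this strictly reduces the integrand at the relevant tail level while matching the upper bound $h_2$ on the remaining layers. Combining with the previous display then forces every comonotonic allocation to lie strictly above $\rho_{h_1}\dsquare\rho_{h_2}(X)$, ruling out optimality.

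The main obstacle is the final construction: producing a strict improvement over $(0,X)$ (which itself is a degenerate comonotonic allocation) requires that $X$ exhibit tail mass at a level where the inequality $h_1\dsquare h_2<h_2$ is active, and the auxiliary independent uniform is needed in order to realize the fractional split $(y^*/s_0,\,1-y^*/s_0)$ as an honest $L^+$-allocation of $X$. Importing the apparatus of Theorem~\ref{theorem:coin}, together with the independence argument used in Theorem~\ref{proposition_1}, is what makes this step rigorous and is the most technical piece of the proof.
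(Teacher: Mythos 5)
Your part (i) is correct and coincides with the paper's own proof: both directions rest on exactly the chain $h_1(y)\ge 1-h_2(1-y)\ge h_2(x)-h_2(x-y)$, the second inequality being convexity of $h_2$ compared across the windows $[x-y,1-y]$ and $[x,1]$. Your opening step in part (ii) is also correct: the bound $\rho_{h_1}(X_1)+\rho_{h_2}(X_2)\ge\rho_{h_2}(X_1)+\rho_{h_2}(X_2)=\rho_{h_2}(X)$ for comonotonic $(X_1,X_2)$, via $h_1\ge\mathrm{id}\ge h_2$ and comonotonic additivity, is precisely the (unstated) justification behind the paper's assertion that comonotonic optimality would force $\rho_{h_1}\dsquare\rho_{h_2}(X)=\rho_{h_2}(X)$.

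The gap is in your final construction, and it is the one you yourself flag. Knowing that $h_1\dsquare h_2(s_0)<h_2(s_0)$ for \emph{some} $s_0$ (indeed $s_0=1$, since $h_1\dsquare h_2(1)<1=h_2(1)$) produces a strict layer-wise improvement only at levels $t$ where $S_X(t)$ lies in the set $\{x:h_1\dsquare h_2(x)<h_2(x)\}$; if the survival function of $X$ never enters that set, your allocation matches but does not beat $\rho_{h_2}(X)$. This is not a removable technicality: by Proposition~\ref{proposition:half}(iii), $\rho_{h_1}\dsquare\rho_{h_2}(X)=\rho_{h_2}(X)$ whenever $\p(X>0)\le\max\{x_c/2,x_d\}$, and this threshold is strictly positive unless $h_2$ is the identity (a case excluded by $h_1\dsquare h_2(1)<1$); for such $X$ the comonotonic allocation $(0,X)$ attains the infimum, so no strictly improving allocation exists and no construction valid for every $X\in L^+$ can be given. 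The paper sidesteps the construction entirely: assuming a comonotonic optimal allocation, it deduces $\rho_{h_1}\dsquare\rho_{h_2}(X)=\rho_{h_2}(X)$, asserts that this identity forces $h_1\dsquare h_2=h_2$, and contradicts part (i). To complete your argument you should either adopt that contradiction route, or restrict attention to $X$ whose survival function attains values where $h_1\dsquare h_2<h_2$ (for instance $\essinf X>0$, so the bottom layers sit at level $1$) and run your split at such a layer via Theorem~\ref{theorem:ele}.
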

\begin{proof}
    (i) ``Only if'' part is trivial. 
    We only show ``if'' part. 
    The condition of $h_1\dsquare h_2(1)=1$ is equivalent with $h_1\geq\tilde{h}_2$. For any $x\in [0,1]$ and $y\in[0,x]$, we have 
    \begin{align*}
        h_1(y)\geq 1-h_2(1-y)\geq h_2(x)-h_2(x-y).
    \end{align*}
    The second inequality holds due to convexity of $h_2$. Thus, this implies $h_1\dsquare h_2=h_2$.

    (ii) If there exists a comonotonic optimal allocation, then $\rho_{h_1}\dsquare \rho_{h_2}(X)=\rho_{h_2}(X)$. This implies that $h_1\dsquare h_2=h_2$, and hence $h_1 \dsquare h_2(1)=1$ by result of (i), contradicting the assumption that $h_1 \dsquare h_2(1)<1$.
\end{proof}

Proposition~\ref{proposition:no_como} shows that in the presence of a sufficiently strong risk-seeking (i.e., when $h_1\dsquare h_2(1)<1$), comonotonic allocations fail to be optimal.
Example~\ref{example:1} below illustrates highlights how this effect arises under power-function distortion functions.

\begin{example}\label{example:1}
Set $h_1(x)=1-(1-x)^2$ and $h_2(x)=x^3$ for $x\in [0,1]$ and let $X \sim \text{Bernoulli}(p)$ for $p\in [0,1]$. For any comonotonic allocation $(X_1, X_2)$ of $X$, it holds that
\begin{align*}
    \rho_{h_1}(X_1)+\rho_{h_2}(X_2)\geq \rho_{h_1\wedge h_2}(X)= h_2(p).
\end{align*}
By straightforward calculation, we have 
\begin{align*}
    h_1 \dsquare h_2 (x)= h_2(x) \mathbb{1}_{\left\{x \leq \sqrt{\frac{2}{3}}\right\}} + (h_1(f(x))+ h_2(x-f(x)))\mathbb{1}_{\left\{x >\sqrt{\frac{2}{3}}\right\}},
\end{align*}
where $f(x)=(3x-1-\sqrt{7-6x})/3$. It can be seen from Figure \ref{fig:example_1} that $h_1 \dsquare h_2 (x) < h_2(x)$ for $x>\sqrt{2/3}$. 
Take $X_1'=\mathbb{1}_{A}$ and $X_2'=\mathbb{1}_{B}$, where $A$ and $B$ are disjoint,  
$\mathbb{P}(A \cup B)= p$ and satisfies $h_1(\mathbb{P}(A))+h_2(\mathbb{P}(B))=h_1 \dsquare h_2 (p)$.
Clearly, $(X_1', X_2')$ is a counter-monotonic allocation of $X$. For $p >\sqrt{2/3}$, we can derive that
\begin{align*}
    \rho_{h_1}(X_1')+\rho_{h_2}(X_2')=h_1 \dsquare h_2 (p) <h_2(p).
\end{align*}
Hence, a comonotonic allocation $(X_1, X_2)$ is never optimal.
    \begin{figure}[ht]
    \centering
    \includegraphics[width=8.5cm]{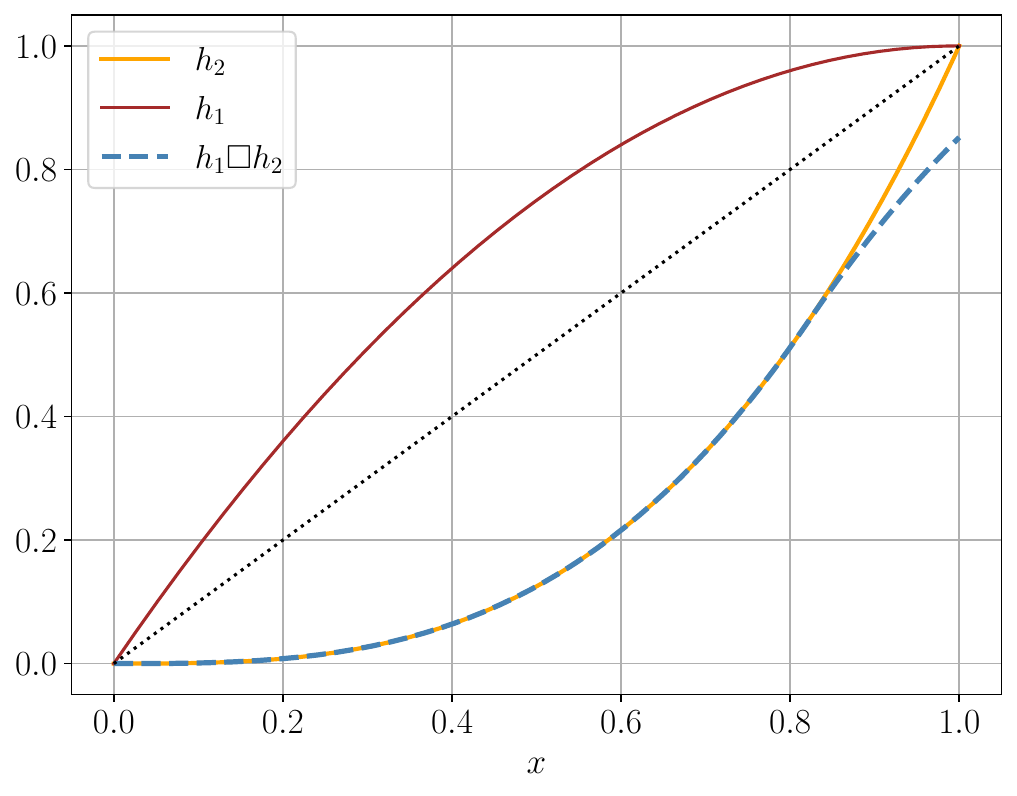}
    \caption{An illustration of Example \ref{example:1}.}
    \label{fig:example_1}
    \end{figure} 
\end{example}

\subsection{Two agents sharing a Bernoulli-type risk}
In this subsection, we consider a simple setting of two agents sharing a Bernoulli random risk with unrestricted risk preferences. In this binary setting, the inf-convolution and optimal allocations admit closed-form solutions. Furthermore, we explore how these insights extend to richer risk distributions. 

\begin{theorem}\label{theorem:ele}
    Suppose that $h_1, h_2\in\mathcal{H}$ and 
    $\mathcal{X}=L^+$. For any $A\in \mathcal{F}$ and $a\in \mathbb{R}_{+}$, it holds that
\begin{align}\label{eq:thm_indi}
        \rho_{h_1}\dsquare \rho_{h_2}(a\mathbb{1}_{A})=\rho_{h_1\dsquare h_2}(a\mathbb{1}_{A})=
        a h_1\dsquare h_2(\p(A)).
    \end{align}
    Moreover, a Pareto-optimal allocation is given by $X_1=a\mathbb{1}_{B}$ and $X_2=a\mathbb{1}_{A\setminus B}$ satisfying $h_1(\mathbb{P}(B))+h_2(\mathbb{P}(A\setminus B))=h_1 \dsquare h_2(\mathbb{P}(A))$.
\end{theorem}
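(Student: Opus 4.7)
The plan is to establish the two equalities separately and then verify that the proposed allocation attains the infimum. The middle identity $\rho_{h_1\dsquare h_2}(a\mathbb{1}_A)=a\,h_1\dsquare h_2(\p(A))$ is immediate from the layer-cake representation \eqref{eq:distortion} applied to the two-valued random variable $a\mathbb{1}_A$, since its survival function is $\p(A)$ on $[0,a)$ and $0$ on $[a,\infty)$. The nontrivial content is therefore the identity $\rho_{h_1}\dsquare\rho_{h_2}(a\mathbb{1}_A)=a\,h_1\dsquare h_2(\p(A))$, which I will obtain by matching upper and lower bounds.

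For the upper bound, I use atomlessness of $(\Omega,\mathcal{F},\p)$: for every $y\in[0,\p(A)]$ there exists $B\subseteq A$ with $\p(B)=y$. Picking a sequence $y_n$ with $h_1(y_n)+h_2(\p(A)-y_n)\to h_1\dsquare h_2(\p(A))$ and the corresponding $B_n\subseteq A$, the allocation $(a\mathbb{1}_{B_n},\,a\mathbb{1}_{A\setminus B_n})\in\mathbb{A}_2(a\mathbb{1}_A)\cap(L^+)^2$ yields sum of risks exactly $a\bigl(h_1(y_n)+h_2(\p(A)-y_n)\bigr)$, so passing $n\to\infty$ gives $\rho_{h_1}\dsquare\rho_{h_2}(a\mathbb{1}_A)\le a\,h_1\dsquare h_2(\p(A))$. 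When the infimum defining $h_1\dsquare h_2(\p(A))$ is attained at some $y^*$, any $B\subseteq A$ with $\p(B)=y^*$ works and gives equality, establishing the optimality claim at the end of the theorem.

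For the lower bound, I take an arbitrary allocation $(X_1,X_2)\in\mathbb{A}_2(a\mathbb{1}_A)$ with $X_1,X_2\ge0$. Because $X_1+X_2=a\mathbb{1}_A$ and both summands are nonnegative, $X_i=0$ on $A^c$ and $X_1+X_2=a$ on $A$; in particular $0\le X_i\le a\mathbb{1}_A$, so the layer-cake formula collapses to integrals over $[0,a]$. Introduce $\phi(t)=\p(X_1\le t,\,A)$ on $[0,a]$, which gives $\p(X_1>t)=\p(A)-\phi(t)$ directly, and $\p(X_2>t)=\p(X_1<a-t,\,A)=\phi\bigl((a-t)^-\bigr)$ using $X_2=a-X_1$ on $A$. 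After the substitution $s=a-t$ in the $X_2$-integral I obtain
\begin{equation*}
\rho_{h_1}(X_1)+\rho_{h_2}(X_2)=\int_0^a\Bigl[h_1(\p(A)-\phi(t))+h_2(\phi(t^-))\Bigr]\,dt.
\end{equation*}
Since $\phi$ is monotone it has at most countably many discontinuities, so $\phi(t^-)=\phi(t)$ for Lebesgue-a.e.\ $t\in[0,a]$. For such $t$, the two arguments $\phi(t)$ and $\p(A)-\phi(t)$ lie in $[0,\p(A)]$ and sum to $\p(A)$, so by definition of the functional inf-convolution the integrand is bounded below by $h_1\dsquare h_2(\p(A))$ almost everywhere, and integration yields $\rho_{h_1}(X_1)+\rho_{h_2}(X_2)\ge a\,h_1\dsquare h_2(\p(A))$. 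Taking the infimum over allocations completes the matching bound.

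The only delicate step is the lower-bound computation, where one must be careful to distinguish $\p(X_1>t)$ from $\p(X_1\ge t)$ when $X_1\mid A$ has atoms (hence the use of the one-sided limit $\phi(t^-)$); this is ultimately harmless because the set of atom locations is countable and therefore Lebesgue-null inside the integral.
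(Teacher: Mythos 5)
Your proposal is correct and follows essentially the same route as the paper: the lower bound via the layer-cake representation and the pointwise bound $h_1(x_1)+h_2(x_2)\ge h_1\dsquare h_2(\p(A))$ on the integrand, and the upper bound via counter-monotonic indicator allocations $(a\mathbb{1}_B, a\mathbb{1}_{A\setminus B})$ with $B\subseteq A$. Your treatment is in fact slightly more careful than the paper's on two minor points — you use a minimizing sequence $y_n$ rather than assuming the infimum defining $h_1\dsquare h_2(\p(A))$ is attained, and you handle the $\p(X_1>t)$ versus $\p(X_1\ge t)$ discrepancy explicitly via the countability of atoms — but these are refinements of the same argument, not a different one.
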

\begin{proof}
By the positive homogeneity of $\rho_{h_1\dsquare h_2}$, it is trivial to verify the second equality in \eqref{eq:thm_indi}.
To show the first equality, we first state that $ \rho_{h_1}\dsquare \rho_{h_2}(a\mathbb{1}_{A})\geq \rho_{h_1\dsquare h_2}(a\mathbb{1}_{A})$.
With $0\leq Y \leq a \mathbb{1}_{A}$, it follows that
    \begin{align*}
        \rho_{h_1}(Y)+ \rho_{h_2}(a \mathbb{1}_{A}-Y) &= \int_0^\infty h_1(\mathbb{P}(Y >  t)) \mathrm{d} t +\int_0^\infty h_2(\mathbb{P}(a \mathbb{1}_{A}-Y>t)) \mathrm{d} t \\
        &= \int_0^a h_1(\mathbb{P}(Y\ge t)) \mathrm{d} t + \int_0^a h_2(\mathbb{P}(a-Y>t|A)\mathbb{P}(A)) \mathrm{d} t\\
        &= \int_0^a h_1(\mathbb{P}(Y\ge t)) \mathrm{d} t + \int_0^a h_2(\mathbb{P}(Y<t|A)\mathbb{P}(A)) \mathrm{d} t\\
        &= \int_0^a h_1(\mathbb{P}(Y\ge t)) \mathrm{d} t + \int_0^a h_2(\mathbb{P}(\{Y<t\}\cap A) ) \mathrm{d} t\\
        & \geq \int_0^a h_1 \dsquare h_2(\mathbb{P}(A)) \mathrm{d} t = a h_1 \dsquare h_2(\mathbb{P}(A)),
    \end{align*}
    where we used $\mathbb{P}(Y\ge t) + \mathbb{P}(\{Y<t\}\cap A) \ge \p(A).$
     Thus, we have $ \rho_{h_1}\dsquare \rho_{h_2}(\mathbb{1}_{A}) \geq a h_1 \dsquare h_2(\mathbb{P}(A))$.

    Next, we show the converse direction. Take an allocation $(X_1, X_2)$ of $X$ as $X_1=a \mathbb{1}_{B}$  and $X_2=a\mathbb{1}_{C}$, where $B \cup C= A$, $B \cap C=\varnothing$ and 
    $h_1(\mathbb{P}(B))+h_2(\mathbb{P}(C))=a h_1 \dsquare h_2(\mathbb{P}(A))$, 
    we have
    $\rho_{h_1}(X_1)+ \rho_{h_2}(X_2)= a h_1 \dsquare h_2(\mathbb{P}(A))$. Consequently, $ \rho_{h_1}\dsquare \rho_{h_2}(a \mathbb{1}_{A}) \leq \rho_{h_1\dsquare h_2}(a\mathbb{1}_{A})$ holds for any $a\in \mathbb{R}_{+}$.
\end{proof}

Notably, Theorem \ref{theorem:ele} does not rely on any concavity or convexity assumptions about the distortion functions. Moreover, it characterizes how the total probability mass $\p(A)$ is divided between the two agents so as to minimize the total risk value, with the splitting probabilities determined by 
$h_1\dsquare h_2$.
To gain intuition, we provide an example to see how the optimal split changes when one agent is risk averse and the other is risk seeking, particularly in cases where the dominance condition $h_1\geq \tilde{h}_2$ fails.
Example \ref{example:6} illustrates this behavior explicitly, showing how the share of risk held by each agent varies with the total probability level $\p(A)$.

\begin{example}\label{example:6}
    Suppose that $h_1(x)=1-(1-x)^{\alpha}$ and $h_2(x)=x^\beta$ with $1<\alpha < \beta$. Clearly, $h_1$ is concave,  $h_2$ is convex, and $h_1 \leq \tilde h_2$. Let $p_0=(\alpha/\beta)^{\frac{1}{\beta-1}}$ and 
    $(\mathbb{1}_{B}, \mathbb{1}_{A\setminus B})$ be an optimal allocation of $\mathbb{1}_{A}$.
    We can show that 
    \begin{enumerate}
    \vspace{-1mm}
        \item[(i)] If $\mathbb{P}(A)\leq p_0$,  then $\mathbb{P}(B)=0$ and $\mathbb{P}(A\backslash B)=\mathbb{P}(A)$, so the risk-seeking agent bear  the entire risk. \vspace{-3mm}
        \item[(ii)] If $\mathbb{P}(A)> p_0$, then as $\mathbb{P}(A)$ increases, 
    $\mathbb{P}(B)$ increases strictly, while $\mathbb{P}(A\backslash B)$  falls strictly. In this case, the risk‐averse agent begins to take on an increasing share of the risk, while the risk‐seeker's share correspondingly shrinks.
    \end{enumerate}
     The proof is non-trivial and the details are provided in Proposition \ref{proposition:example} of Appendix \ref{appendix:A}.
    For the special case $\alpha=2$ and $\beta=3$, Figure \ref{fig:example_2} illustrates the trend of $\mathbb{P}(B)$ and $\mathbb{P}(A\setminus B)$ as $\mathbb{P}(A)$ varies. 
    \begin{figure}[ht]
    \centering
    \includegraphics[width=8cm]{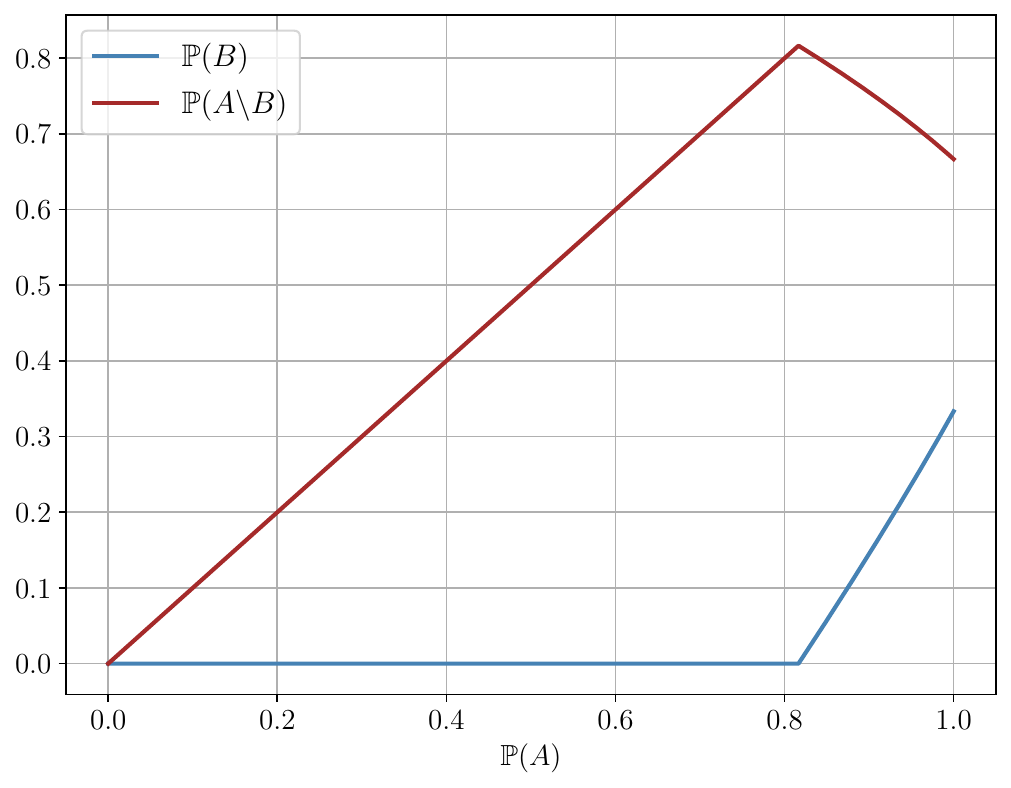}
    \caption{An illustration of Example \ref{example:6}.}
    \label{fig:example_2}
    \end{figure} 
\end{example}

 It is clear that examining an economy without aggregate uncertainty, where the total endowment is constant, is a special case of Theorem \ref{theorem:ele}; as shown in the following corollary.
\begin{corollary}
    Suppose that $\mathcal{X}=L^+$. For any  $a\in \mathbb{R}_{+}$, it holds that
    \begin{align}\label{eq:constant}
        \rho_{h_1}\dsquare \rho_{h_2}(a)=a h_1\dsquare h_2(1).
    \end{align}
    Moreover, a Pareto-optimal allocation is given by $X_1=\mathbb{1}_{B}$ and $X_2=\mathbb{1}_{A\setminus B}$ with $h_1(\mathbb{P}(B))+h_2(\mathbb{P}(A\setminus B))=h_1 \dsquare h_2(1)$.
\end{corollary}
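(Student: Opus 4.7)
The plan is to obtain this corollary as an immediate specialization of Theorem~\ref{theorem:ele} to the case $A = \Omega$. Identifying the constant $a \in \mathbb{R}_+$ with the random variable $a\mathbb{1}_\Omega$, where $\mathbb{P}(\Omega) = 1$, the formula \eqref{eq:thm_indi} collapses directly to
$$\rho_{h_1}\dsquare \rho_{h_2}(a) = \rho_{h_1}\dsquare \rho_{h_2}(a\mathbb{1}_\Omega) = a\, h_1\dsquare h_2(\mathbb{P}(\Omega)) = a\, h_1\dsquare h_2(1),$$
which is exactly \eqref{eq:constant}. No further work is needed for the equality.

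For the optimal allocation, I would again invoke the same specialization. Theorem~\ref{theorem:ele} supplies a Pareto-optimal allocation of the form $X_1 = a\mathbb{1}_B$ and $X_2 = a\mathbb{1}_{\Omega\setminus B}$, provided that $B \subseteq \Omega$ is a measurable set satisfying $h_1(\mathbb{P}(B)) + h_2(\mathbb{P}(\Omega\setminus B)) = h_1\dsquare h_2(1)$. Existence of such a $B$ follows from the atomlessness of $(\Omega, \mathcal{F}, \mathbb{P})$, which ensures that every $p \in [0,1]$ is realized as the probability of some measurable set; thus any minimizer $p^* \in [0,1]$ of $p \mapsto h_1(p) + h_2(1-p)$ can be realized as $\mathbb{P}(B)$ for an appropriate $B$.

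There is essentially no obstacle, since the content is already encoded in Theorem~\ref{theorem:ele}. The only point worth flagging is a minor typographical issue in the statement of the corollary: in order to respect the budget constraint $X_1 + X_2 = a$, the Pareto-optimal allocation should read $X_1 = a\mathbb{1}_B$ and $X_2 = a\mathbb{1}_{\Omega\setminus B}$ rather than $\mathbb{1}_B$ and $\mathbb{1}_{A\setminus B}$ (with $A$ implicitly $\Omega$). One could additionally note that when no exact minimizer of $p \mapsto h_1(p) + h_2(1-p)$ exists (e.g., if $h_1, h_2$ are not lower semicontinuous), the corollary should be interpreted as providing an $\varepsilon$-optimal allocation for every $\varepsilon > 0$, again by the atomlessness argument applied to near-minimizing $p$.
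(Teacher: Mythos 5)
Your proof is correct and matches the paper's own argument, which likewise obtains the corollary by applying Theorem~\ref{theorem:ele} with $A=\Omega$. The additional remarks on the scaling of the allocation and on attainability of the minimizer are reasonable refinements but not needed for the core argument.
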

The result directly follows from Theorem \ref{theorem:ele} by taking $A=\Omega$. If additionally $h_1\geq \tilde{h}_2$, implying that $h_1 \dsquare h_2(1)=1$, then $\rho_{h_1}\dsquare \rho_{h_2}(a)=a$ for any positive constant $a$. In this case, any constant split $(b, a-b)$ with $0\leq b\leq a$ is Pareto optimal. 

In fact, the equality \eqref{eq:constant} relies on the richness of the probability space to ensure that $\p(B)$ can achieve the infimum. 
On non-atomless space (e.g., finite probability space), the required probability level may be unattainable and the equality can fail; see more details in Example \ref{example:finite}.

\begin{example}[Counter-example in a finite probability space]\label{example:finite}
Define a probability space $(\Omega, \mathcal{F}, \mathbb{P})$, where $\Omega=\left\{\omega_1, \omega_2\right\}$, and $\mathbb{P}$ is such that $\mathbb{P}(\left\{\omega_1\right\})=1/6$, $\mathbb{P}(\left\{\omega_2\right\})=5/6$.
Suppose that two agents have distortion functions given by 
    \begin{align*}
        {h}_1 (x)=
        \begin{cases}
        2 x &x\in [0, 0.5]\\
         1 &x\in [0.5, 1];
        \end{cases}~~\  
        h_2(x)=
        \begin{cases}
        0 &x\in [0, 2/3]\\
        3x-2 &x\in [2/3, 1].
        \end{cases}
    \end{align*} 
    We can show that
    \begin{align*}
        h_1 \dsquare h_2(x)=
        \begin{cases}
        0 &x\in [0, 2/3]\\
         2x-4/3 &x\in [2/3, 1].
        \end{cases}
    \end{align*} 
    Assume that two agents are sharing a constant $1$. The only possible allocation $(X_1, X_2)$ of $1$ is in the form of  $(a\mathbb{1}_{\left\{\omega_1\right\}}+ b\mathbb{1}_{\left\{\omega_2\right\}}, (1-a)\mathbb{1}_{\left\{\omega_1\right\}}+(1-b)\mathbb{1}_{\left\{\omega_2\right\}}) $, where $0\leq a,b\leq 1$. If $a\leq b$, then 
    $\rho_{h_1}(X_1)+\rho_{h_2}(X_2)=a+ (b-a)h_2(5/6)+ (1-b)+(b-a)h_1(1/6)=1$. 
     If $a\geq b$, then 
     $\rho_{h_1}(X_1)+\rho_{h_2}(X_2)=b+(a-b)h_1(1/6)+(1-a)+(a-b)h_2(5/6)=1-1/6(a-b)\geq 5/6$.  Also, it is straightforward to verify that $\rho_{h_1\dsquare h_2}(1)=2/3$.
    This shows that $\rho_{h_1} \dsquare \rho_{h_2}(1)> \rho_{h_1\dsquare h_2}(1)$.
\end{example}

Having established the equality \eqref{eq:thm_indi} for indicator risks, we now turn to more complex distributions. When the total risk involves multiple components rather than a single Bernoulli variable, the exact equality no longer necessarily holds.
Even so, this broader setting yields useful insights, particularly regarding whether and when the inf-convolution can still attain the benchmark value. 
Proposition \ref{theorem:ineq} demonstrates that such attainability persists in a structured class of random risks. Specifically, when the total risk is composed of disjoint indicator components, there always exists a feasible allocation achieving the value of $\rho_{h_1\dsquare h_2}$.

\begin{proposition}\label{theorem:ineq}
    Suppose that $h_1, h_2\in \mathcal{H}$ and $\mathcal{X}=L^+$.
Let $X=a\mathbb{1}_A+b\mathbb{1}_B$, where
$A, B\in \mathcal{F}$ are disjoint and $a, b\in\mathbb{R}_{+}$ are constants. Then it holds that 
    \begin{align}\label{ineq:disjoint}
        \rho_{h_1}\dsquare \rho_{h_2}(X)\leq  \rho_{h_1\dsquare h_2}(X).
    \end{align}
\end{proposition}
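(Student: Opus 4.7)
The plan is to exhibit an allocation $(X_1,X_2)\in\mathbb{A}_2(X)$ that saturates the upper bound, namely $\rho_{h_1}(X_1)+\rho_{h_2}(X_2)=\rho_{h_1\dsquare h_2}(X)$, from which the claimed inequality follows immediately. By relabeling $A\leftrightarrow B$ and $a\leftrightarrow b$ if necessary, I may assume $a\ge b\ge 0$; the degenerate cases $a=b$ and $b=0$ reduce $X$ to a scaled indicator, and the identity is then provided by Theorem~\ref{theorem:ele}, so I focus on $a>b>0$. The starting point is the comonotonic decomposition
\[
X=b\,\mathbb{1}_{A\cup B}+(a-b)\,\mathbb{1}_A,
\]
which is comonotonic because $A\subseteq A\cup B$. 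Since $h_1\dsquare h_2\in\mathcal{H}^{\mathrm{BV}}$ is non-decreasing, the Choquet integral $\rho_{h_1\dsquare h_2}$ is comonotone-additive, so
\[
\rho_{h_1\dsquare h_2}(X)=b\,(h_1\dsquare h_2)(q_1)+(a-b)\,(h_1\dsquare h_2)(q_2),
\]
where $q_1:=\p(A\cup B)$ and $q_2:=\p(A)$.

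Next, I select minimizers $p_i^*\in[0,q_i]$ with $h_1(p_i^*)+h_2(q_i-p_i^*)=(h_1\dsquare h_2)(q_i)$ for $i=1,2$ (if the minimum is not attained, I take $\varepsilon$-approximate minimizers and pass to the limit using continuity of $\rho_h$ in $\varepsilon$). Exploiting atomlessness of $(\Omega,\mathcal{F},\p)$, I build sets $E\subseteq A\cup B$ with $\p(E)=p_1^*$ and $F\subseteq E\cap A$ with $\p(F)=p_2^*$, choosing their relative position so that $X_1:=b\,\mathbb{1}_E+(a-b)\,\mathbb{1}_F$ and $X_2:=X-X_1$ each decompose as a comonotonic sum of two scaled indicators. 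The construction splits into three subcases driven by the ordering of $p_1^*$, $p_2^*$, and $\p(B)=q_1-q_2$: \textup{(i)} if $p_1^*\le p_2^*$, arrange $E\subseteq F\subseteq A$; \textup{(ii)} if $p_1^*\ge p_2^*$ and $p_1^*-p_2^*\le\p(B)$, set $\p(E\cap A)=p_2^*$ and $F=E\cap A$; \textup{(iii)} if $p_1^*\ge p_2^*$ and $p_1^*-p_2^*>\p(B)$, enforce $B\subseteq E$ (so $\p(E\cap A)=p_1^*-\p(B)$) and take $F\subsetneq E\cap A$. In each subcase a direct survival-function computation yields $\rho_{h_1}(X_1)=b\,h_1(p_1^*)+(a-b)\,h_1(p_2^*)$ and $\rho_{h_2}(X_2)=b\,h_2(q_1-p_1^*)+(a-b)\,h_2(q_2-p_2^*)$, which sum to $\rho_{h_1\dsquare h_2}(X)$.

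The main obstacle I anticipate is subcase \textup{(iii)}, where the naive choice $F=E\cap A$ of subcase \textup{(ii)} is infeasible: the containment $E\subseteq A\cup B$ forces $\p(E\cap A)\ge p_1^*-\p(B)>p_2^*$, so $F$ must be a proper subset of $E\cap A$. The subtlety is checking that, with $B\subseteq E$ and $\p(E\cap A)=p_1^*-\p(B)$, the residual $X_2$ still decomposes as the comonotonic sum $(a-b)\,\mathbb{1}_{A\setminus F}+b\,\mathbb{1}_{A\setminus E}$ and its Choquet integral still equals $(a-b)\,h_2(q_2-p_2^*)+b\,h_2(q_1-p_1^*)$. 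This identity hinges on the cancellation $q_2-\p(E\cap A)=q_1-p_1^*$, which holds precisely because $\p(E\cap A)=p_1^*-(q_1-q_2)$. Verifying this cancellation---i.e.\ that the ``extra'' mass of $B$ absorbed by $E$ is exactly mirrored in $X_2$'s top layer---is the technical crux of the argument; the remaining bookkeeping is then routine.
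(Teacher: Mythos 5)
Your proof is correct and follows essentially the same route as the paper's: both rewrite $X$ as a comonotonic two-layer sum over the nested events $A\cup B\supseteq A$ (the paper adopts the opposite convention $a\le b$ and layers over $A\cup B\supseteq B$), apply comonotonic additivity of $\rho_{h_1\dsquare h_2}$ to the layers, and then construct nested sets realizing the per-layer minimizers via the same case split on their relative sizes. The only differences are cosmetic: you treat possibly non-attained minimizers with $\varepsilon$-approximations (a point the paper glosses over), and your blanket requirement $F\subseteq E\cap A$ must be read as case-dependent, since in your subcase (i) the nesting reverses to $E\subseteq F$.
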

\begin{proof}
   Without loss of generality, we assume that $a\leq b$. Then $X$ can be reformulated as $X=a \mathbb{1}_{A'}+(b-a)\mathbb{1}_{B'}$, where $A'=A\cup B$ and $B'=B$. clearly, $B' \subseteq A'$, which implies that $(a\mathbb{1}_{A'}, (b-a)\mathbb{1}_{B'})$ is comonotonic.
   Let $p_{1}=\p(A')$ and $p_{2}=\p(B')$. By comonotonic additivity of $\rho_{h_1\dsquare h_2}$, we have 
   $$
   \rho_{h_1\dsquare h_2}(X)=a {h_1\dsquare h_2}(p_1)+(b-a){h_1\dsquare h_2}(p_2).
   $$
   Define the set 
   $$
   S_i=\left\{t \in [0,p_i]: h_1(t)+h_2(p_i-t)=h_1 \dsquare h_2(p_i)\right\}.
   $$
   Let $u_i\in S_i$ for each $i\in [2]$. 
   Our aim is to construct an allocation $(Y,Z)$ of $X$ such that $\rho_{h_1}(Y)+\rho_{h_2}(Z)=a h_1 \dsquare h_2(p_1)+ (b-a)h_1 \dsquare h_2(p_2)$.
    We construct an allocation of $X$ as follows:
    \begin{align*}
        Y=a\mathbb{1}_C+b\mathbb{1}_D \  \text{ and } \  Z=a\mathbb{1}_{A'\backslash C}+b\mathbb{1}_{B'\backslash D} \ \text{ with } \ \p(C)=u_1 \ \text{ and } \ \p(D)=u_2.
    \end{align*}
    In fact, the construction of $C$ and
    $D$ vary with the magnitude relationship between $u_1$ and $u_2$, as well as $p_1-u_1$ and $p_2-u_2$. 
    Next, we will show the details about how to construct $C$ and
    $D$ in different cases. We consider the following three cases.

    Case 1: $u_1\leq u_2$ and $p_1-u_1\leq p_2-u_2$. This case cannot happen unless $\p(A)=0$, then it reduces to Theorem \ref{theorem:ele}. Therefore, \eqref{ineq:disjoint} holds trivially.
    
    Case 2: $u_1\leq u_2$ and $p_1-u_1\geq p_2-u_2$. It implies that  $u_1\leq u_2$ since $p_1\geq p_2$.
    Take $C\subseteq B$ with $\p(C)=u_1$ and $D=C\cup E$ with $E \subseteq B'\backslash C$ and $\p(E)=u_2-u_1$. Hence, $\p(D)=u_2$. Also, we have $B'\backslash D \subseteq A'\backslash C $.

    Case 3: $u_1\geq u_2$ and $p_1-u_1\geq p_2-u_2$. Thus, $u_1\geq u_2 \geq p_2-p_1+u_1$. 
    Let $D\subseteq B'$ with $\p(D)=u_2$. The take $C=D \cup E$ and $E\subseteq A'\backslash B'$ with $\p(E)=u_1-u_2$. Hence, $\p(C)=u_1$. 
    In this case, we have $B'\backslash D \subseteq A'\backslash C$.

    Case 4: $u_1\geq u_2$ and $p_1-u_1\leq p_2-u_2$. It implies that $u_1 \geq p_1-p_2+u_2$. 
    Let $D\subseteq B'$ with $\p(D)=u_2$. 
    Take $C= D\cup (A'\backslash B')\cup E$ and $E\subseteq B'\backslash D$ with $\p(E)=u_1-u_2-(p_1-p_2)$. Thus, we have $\p(C)=u_1$ and $A'\backslash C\subseteq B'\backslash D$.

    By above constructions, both $(a\mathbb{1}_C, (b-a)\mathbb{1}_D)$ and $(a\mathbb{1}_{A'\backslash C}, (b-a)\mathbb{1}_{B'\backslash D})$ are comonotonic since the underlying events are nested. Then it follows that 
    \begin{align*}
        \rho_{h_1}(Y)+\rho_{h_2}(Z)&=a \big(h_1(u_1)+h_1(u_2)\big)+(b-a)\big(h_2(p_1-u_1)+h_2(p_2-u_2)\big)\\&=
    a h_1 \dsquare h_2(p_1)+ (b-a)h_1 \dsquare h_2(p_2)=\rho_{h_1 \dsquare h_2}(X).
    \end{align*}
    Consequently, $\rho_{h_1}\dsquare \rho_{h_2}(X)\leq \rho_{h_1 \dsquare h_2}(X)$.
\end{proof}
By translation invariance, one typically has $\rho_{h_1}\dsquare \rho_{h_2}(\mathbb{1}_A+c)\leq 
    \rho_{h_1\dsquare h_2}(\mathbb{1}_A)+c$ for any constant $c$, where the bound comes from handling the constant and the indicator separately. 
    Using Theorem~\ref{theorem:ineq}, a tighter bound can be obtained by treating the constant as an additional layer over a disjoint set; see Corollary \ref{coro:2}.
\begin{corollary}\label{coro:2}
    Suppose that $h_1, h_2\in \mathcal{H}$ and $c\in \mathbb{R}_{+}$. For any $A\in \mathcal{F}$, it holds that
    $$\max\left\{h_1 \dsquare h_2(\mathbb{P}(A)), c\, h_1 \dsquare h_2(1)\right\}\leq \rho_{h_1}\dsquare \rho_{h_2}(\mathbb{1}_A+c)\leq 
    \rho_{h_1\dsquare h_2}(\mathbb{1}_A+c).$$
\end{corollary}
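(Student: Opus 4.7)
The plan is to handle the upper and lower bounds separately. For the upper bound, I would note that even though $\mathbb{1}_A + c$ is not a pure disjoint-indicator sum, it can be rewritten as one:
\begin{equation*}
\mathbb{1}_A + c = c\,\mathbb{1}_{A^c} + (1+c)\,\mathbb{1}_A,
\end{equation*}
where $A^c$ and $A$ are disjoint. This puts $\mathbb{1}_A + c$ exactly in the form covered by Proposition~\ref{theorem:ineq} (with $a=c$, $b=1+c$), which yields $\rho_{h_1}\dsquare\rho_{h_2}(\mathbb{1}_A+c)\le \rho_{h_1\dsquare h_2}(\mathbb{1}_A+c)$.

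For the lower bound, the strategy is to exploit monotonicity: since $\mathbb{1}_A + c \ge \mathbb{1}_A$ and $\mathbb{1}_A + c \ge c$, I would conclude that
\begin{equation*}
\rho_{h_1}\dsquare\rho_{h_2}(\mathbb{1}_A+c) \;\ge\; \rho_{h_1}\dsquare\rho_{h_2}(\mathbb{1}_A)\;=\;h_1\dsquare h_2(\mathbb{P}(A)),
\end{equation*}
and
\begin{equation*}
\rho_{h_1}\dsquare\rho_{h_2}(\mathbb{1}_A+c) \;\ge\; \rho_{h_1}\dsquare\rho_{h_2}(c)\;=\;c\,h_1\dsquare h_2(1),
\end{equation*}
where both equalities are instances of Theorem~\ref{theorem:ele} (the second one applied with $A=\Omega$). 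Taking the maximum then gives the claimed lower bound.

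The one step that requires some care is the monotonicity of $\rho_{h_1}\dsquare\rho_{h_2}$ on $\mathcal{X}=L^+$, since this is not explicitly stated earlier. I would prove it with a short direct construction: given $X\ge Y\ge 0$ and any allocation $(X_1,X_2)\in\mathbb{A}_2(X)$, define
\begin{equation*}
Y_1 = X_1\wedge Y,\qquad Y_2 = Y - Y_1 = (Y-X_1)_+.
\end{equation*}
Then $Y_1,Y_2\ge 0$, $Y_1+Y_2=Y$, and a case split on whether $X_1\le Y$ or $X_1>Y$ shows $Y_2\le X_2$ (using $Y\le X$). Monotonicity of each $\rho_{h_i}$ then gives $\rho_{h_1}(Y_1)+\rho_{h_2}(Y_2)\le \rho_{h_1}(X_1)+\rho_{h_2}(X_2)$, and taking infima on both sides yields $\rho_{h_1}\dsquare\rho_{h_2}(Y)\le \rho_{h_1}\dsquare\rho_{h_2}(X)$.

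The main obstacle, if there is one, is simply making sure the monotonicity step is correctly set up in $L^+$: one must exhibit a componentwise-dominated allocation of the smaller random variable, which the $(X_1\wedge Y,(Y-X_1)_+)$ construction does. Once this is in place, the rest is a direct combination of Theorem~\ref{theorem:ele} and Proposition~\ref{theorem:ineq}.
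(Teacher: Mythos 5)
Your proposal is correct and follows essentially the same route as the paper: the upper bound via the decomposition $\mathbb{1}_A+c=c\,\mathbb{1}_{A^c}+(1+c)\,\mathbb{1}_A$ and Proposition~\ref{theorem:ineq}, and the lower bound via monotonicity combined with Theorem~\ref{theorem:ele}. The monotonicity step you prove by hand (with the $X_1\wedge Y$ truncation) is exactly the paper's Lemma~\ref{lemma:2_a}(i), so nothing is missing.
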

\begin{proof}
    The first inequality directly follows from Theorem \ref{theorem:ele}. 
    It is immediate to show the second inequality by applying Theorem \ref{theorem:ineq} and replacing $B=A^c$ and $a=1+c=1+b$.
\end{proof}

The following proposition provides some results for random variables $X$ satisfying $\p(X>0)\leq a$ for some $a\in[0,1]$. We first introduce some notations for convenient. For any $h_1, h_2\in \mathcal{H}$, let 
$x_c=\sup\left\{x\in [0,1]: h_1 \dsquare h_2(x)=h_2(x)\right\}$ and $x_d= \sup \left\{x>0:{h_2}_{+}'(x)\leq 1\right\}$.

\begin{proposition}\label{proposition:half}
    Suppose that $\mathcal{X}=L^+$ and $h_1, h_2\in \mathcal{H}$.  
    For $X\in\mathcal{X}$ with $\p(X>0)\leq \alpha$, $\alpha\in [0,1)$,
    the following hold.
    \begin{enumerate}
        \item[(i)] If $h_1 \dsquare h_2$ is convex and $\alpha\leq 1/2$, then $\rho_{h_1}\dsquare \rho_{h_2}(X)\geq \rho_{h_1\dsquare h_2}(X)$.
        \item[(ii)] If 
        $h_2$ is convex and $\alpha\leq x_c/2$, then $\rho_{h_1}\dsquare \rho_{h_2}(X)= \rho_{h_2}(X)$.
        \item[(iii)] If $h_1$ is concave, $h_2$ is convex and $\alpha \leq\max\left\{x_c/2, x_d\right\}$,
        then $\rho_{h_1}\dsquare \rho_{h_2}(X)= \rho_{h_2}(X)$.
    \end{enumerate}
\end{proposition}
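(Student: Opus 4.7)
The plan is to establish the lower bounds separately; the upper bounds in parts (ii) and (iii) are immediate from the trivial allocation $(0,X)$. The key technical ingredient for parts (i) and (ii) is a continuous Hardy--Littlewood--P\'olya style majorization inequality:
\begin{align*}
\int_0^\infty \phi(\mu(t))\,\d t \;\ge\; \int_0^\infty \phi(\nu(t))\,\d t,
\end{align*}
valid whenever $\mu,\nu:[0,\infty)\to[0,1]$ are nonnegative and nonincreasing with $\int_0^\infty\mu\,\d t=\int_0^\infty\nu\,\d t<\infty$ and $\int_0^s\mu\,\d t\ge\int_0^s\nu\,\d t$ for every $s\ge 0$, and $\phi:[0,1]\to\mathbb{R}$ is convex with $\phi(0)=0$. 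For any feasible $(Y,Z)\in\mathbb{A}_2(X)$, the choices $\mu(t):=\p(Y>t)+\p(Z>t)$ and $\nu(t):=\p(X>t)$ satisfy these hypotheses: the majorization reduces to $\E[\min(Y,s)]+\E[\min(Z,s)]\ge\E[\min(Y+Z,s)]$, which holds whenever $Y,Z\ge 0$, and the total integrals are both $\E[X]$. The HLP comparison itself can be proved via the subgradient representation $\phi(x)=ax+\int_0^1(x-c)_+\,\d\psi(c)$ with $\psi$ nondecreasing, reducing it to the layer comparison $\int(\mu-c)_+\,\d t\ge\int(\nu-c)_+\,\d t$, which I will verify by a short case split on whether the level sets $\{t:\mu(t)>c\}$ and $\{t:\nu(t)>c\}$ nest.

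For part (i), $\p(Y>t),\p(Z>t)\le\alpha\le 1/2$ gives $\mu(t)\le 1$, so the inf-convolution definition yields the pointwise bound $h_1(\p(Y>t))+h_2(\p(Z>t))\ge(h_1\dsquare h_2)(\mu(t))$. Integrating and applying the majorization fact with the convex $\phi:=h_1\dsquare h_2$ (which indeed satisfies $\phi(0)=0$) delivers $\rho_{h_1}(Y)+\rho_{h_2}(Z)\ge\rho_{h_1\dsquare h_2}(X)$. Part (ii) proceeds identically: $\mu(t)\le 2\alpha\le x_c$ places $\mu$ on the range where $h_1\dsquare h_2=h_2$, so $h_1(\p(Y>t))+h_2(\p(Z>t))\ge h_2(\mu(t))$; the majorization fact applied to the convex $h_2$ then gives $\rho_{h_1}(Y)+\rho_{h_2}(Z)\ge\rho_{h_2}(X)$, which combined with the upper bound from $(0,X)$ yields the claimed equality.

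For part (iii), the case $\alpha\le x_c/2$ is subsumed by (ii), so I focus on $\alpha\le x_d$, where a direct slope estimate bypasses the majorization machinery. Concavity of $h_1\in\mathcal{H}$ with $h_1(0)=0$ and $h_1(1)=1$ forces $h_1(x)\ge x$ on $[0,1]$ via the chord bound, so $\rho_{h_1}(Y)\ge\E[Y]$. Simultaneously, $\p(Z>t)\le\p(X>t)\le\alpha\le x_d$ combined with $(h_2)_+'\le 1$ on $[0,x_d)$ yields the pointwise estimate $h_2(\p(X>t))-h_2(\p(Z>t))\le\p(X>t)-\p(Z>t)$ by a mean-value argument for the convex $h_2$; integration then gives $\rho_{h_2}(X)-\rho_{h_2}(Z)\le\E[X]-\E[Z]=\E[Y]\le\rho_{h_1}(Y)$, which rearranges to the desired $\rho_{h_1}(Y)+\rho_{h_2}(Z)\ge\rho_{h_2}(X)$. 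I expect the main obstacle to be the careful statement and proof of the continuous HLP majorization inequality used in (i) and (ii), since the classical statement is usually given only for discrete vectors; once that auxiliary lemma is in hand, the rest of the argument is a routine manipulation of distortion integrals.
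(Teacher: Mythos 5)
Your proposal follows essentially the same route as the paper's proof: parts (i) and (ii) rest on the pointwise bound $h_1(\p(Y>t))+h_2(\p(Z>t))\ge h_1\dsquare h_2\big(\p(Y>t)+\p(Z>t)\big)$ combined with a continuous Hardy--Littlewood--P\'olya majorization of $t\mapsto\p(Y>t)+\p(Z>t)$ over $t\mapsto\p(X>t)$ (the paper simply cites an integral majorization theorem rather than reproving it, and works with the weakly-majorized version on $[0,a]$ using that $h_1\dsquare h_2$ is increasing, which also sidesteps your finite-mean hypothesis), while part (iii) uses $h_1(x)\ge x$ together with the slope bound $(h_2)'_+\le 1$ below $x_d$; your secant estimate is an equivalent variant of the paper's subgradient inequality at $\p(X>t)$. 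The one step you should not treat as automatic is the assertion in (ii) that $\mu(t)\le 2\alpha\le x_c$ ``places $\mu$ on the range where $h_1\dsquare h_2=h_2$'': since $x_c$ is defined as a supremum, you must first verify that the coincidence set $\{x\in[0,1]:h_1\dsquare h_2(x)=h_2(x)\}$ is the entire interval up to $x_c$ rather than an arbitrary subset. This is where convexity of $h_2$ enters a second time: if $h_1(y)+h_2(x_0-y)\ge h_2(x_0)$ for all $y\le x_0$, then for $x\le x_0$ one has $h_1(y)+h_2(x-y)-h_2(x)\ge h_2(x_0-y)-h_2(x-y)-\big(h_2(x_0)-h_2(x)\big)+\big(h_1(y)+h_2(x_0-y)-h_2(x_0)\big)\cdot 0$ handled via the increment comparison $h_2(x_0)-h_2(x)\ge h_2(x_0-y)-h_2(x-y)$; the paper devotes a short displayed computation to exactly this point, and your write-up should include it. With that addition the argument is complete and matches the paper's.
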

\begin{proof}
(i) For any allocation $(X_1, X_2)\in \mathbb{A}_2(X)$ and $a \in (0, \infty)$, we have $\mathbb{E}(X_2 \wedge a)+ \mathbb{E}(X_1 \wedge a)\geq \mathbb{E}(X\wedge a)$, implying that $$ \int_{0}^{a} (\mathbb{P}(X_1>t)+\mathbb{P}(X_2>t))\mathrm{d}t\geq \int_{0}^{a} \mathbb{P}(X>t)\mathrm{d}t.$$ 
     By integral majorization theorem \cite[Theorem 12.15]{peajcariaac1992convex} and convexity of $h_1\dsquare h_2$, it holds that
\begin{align}\label{ineq:13}
         \int_{0}^{a} h_1\dsquare h_2(\mathbb{P}(X_1>t)+\mathbb{P}(X_2>t))\mathrm{d}t \geq  \int_{0}^{a} h_1\dsquare h_2(\mathbb{P}(X>t))\mathrm{d}t.
     \end{align}
     Note that $\p(X>0)\leq 1/2$ ensures that $h_1\dsquare h_2$ is well-defined.
     Then it follows that 
\begin{align*}
    \rho_{h_1}(X_1)+\rho_{h_2}(X_2)
    &=\int_0^\infty h_1(\p(X_1>t))+h_2(\p(X_2>t))
    \mathrm{d}t\\
    &\geq \int_0^\infty h_1\dsquare h_2(\p(X_1>t)+\p(X_2>t))
    \mathrm{d}t
    \geq \rho_{h_1\dsquare h_2}(X),
\end{align*}
which implies that $\rho_{h_1}\dsquare \rho_{h_2}(X)\geq \rho_{h_1\dsquare h_2}(X)$.

(ii)  Clearly, $\rho_{h_1}\dsquare \rho_{h_2}(X)\leq \rho_{h_2}(X)$. It suffices to show the converse direction.
We first show that for any $x\leq x_0$, it holds that $h_1\square h_2(x)=h_2(x)$. By definition of $x_0$, it follows that $f_{x_0}(y):=h_1(y)+h_2(x_0-y)-h_2(x_0)\geq 0$ for $y \leq x_0$. Then for $x\leq x_0$, we have
    \begin{align}\label{eq:thm4}
        h_1(y)+h_2(x-y)-h_2(x)=f_{x_0}(y)
         -h_2(x_0-y)+h_2(x-y)+h_2(x_0)-h_2(x) \geq 0.
    \end{align}
    The inequality \eqref{eq:thm4} holds due to  convexity of $h_2$, implying that $h_2(x_0)-h_2(x)\geq h_2(x_0-y)-h_2(x-y)$.
    For any allocation $(X_1, X_2)\in \mathbb{A}_2(X)$, it holds that 
    \begin{align*}
    \rho_{h_1}(X_1)+\rho_{h_2}(X_2)
    &\geq \int_0^\infty h_1\dsquare h_2(\p(X_1>t)+\p(X_2>t))
    \mathrm{d}t\\
    &=\int_0^\infty h_2(\p(X_1>t)+\p(X_2>t))
    \mathrm{d}t\\
    &\geq \int_0^\infty h_2(\p(X>t))
    \mathrm{d}t=\rho_{h_2}(X).
    \end{align*}
    The equality holds due to $\p(X>0)\le x_0/2$, which implies that 
     $h_1\dsquare h_2 (g_1(t)+g_2(t))= h_2(g_1(t)+g_2(t))$ for $t\geq 0$.
    The second inequality follows from \eqref{ineq:13} and convexity of $h_2$.
     Thus, it follows that
     $\rho_{h_1}\dsquare \rho_{h_2}(X)\geq \rho_{h_2}(X)$ for 
     any $X \in \mathcal{X}$ with $\p(X>0)\le x_0/2$. Therefore, the desired result is obtained.

     (iii) Following from (ii), it suffices to show that the result holds for $\alpha=x_d$.
     For any allocation $(X_1, X_2)$ of $X\in \mathcal{X}$, it holds that 
     \begin{align}\notag
     &\int_{0}^{\infty}h_1 (\mathbb{P}(X_1>t)) \mathrm{d}t + \int_{0}^{\infty}h_2 (\mathbb{P}(X_2>t)) \mathrm{d}t \\ \label{ineq:conca_conve}
     &\geq \int_{0}^{\infty} \left(
     \mathbb{P}(X_1>t) +h_2(\mathbb{P}(X>t))+ {h_2}_{+}'(\mathbb{P}(X>t))(\mathbb{P}(X_2>t)-\mathbb{P}(X>t))\right) \mathrm{d}t.
     \end{align}
     The inequality holds due to concavity of $h_1$ and convexity of $h_2$, implying that $h_1(x)\geq x$ and  
     $h_2(y)\geq h_2(x)+ {h_2}_{+}'(x)(y-x)$ for any $x,y\in [0,1]$; see Theorem 25.1 of \cite{rockafellar1997convex}.
     Next we show it always holds that ${h_2}_{+}'(\mathbb{P}(X>t))\leq 1$. 
     If $x_d=0$, then $h_2$ is the identity function and  ${h_2}_{+}'(x)=1$. If $x_d>0$, then ${h_2}_{+}'(\mathbb{P}(X>t))\leq 1$ since $\mathbb{P}(X>0)\leq x_d$.
     By the fact of $\mathbb{P}(X_2>t)\leq \mathbb{P}(X>t)$ for $t> 0$ and \eqref{ineq:conca_conve}, we have
     \begin{align*}
         \rho_{h_1}(X_1)+\rho_{h_2}(X_2)&\geq \int_{0}^{\infty} \left(
      h_2(\mathbb{P}(X>t))+ \mathbb{P}(X_1>t)+\mathbb{P}(X_2>t)-\mathbb{P}(X>t)\right) \mathrm{d}t \\&= \int_{0}^{\infty}h_2(\mathbb{P}(X>t)) \mathrm{d}t= \rho_{h_2}(X).
     \end{align*}
     The above result implies that $\rho_{h_1}\dsquare \rho_{h_2}(X)= \rho_{h_2}(X)$ with $\p(X>0)\leq x_d$. Consequently, the desired result is obtained.
\end{proof}

Proposition \ref{proposition:half} shows that 
when the loss probability is sufficiently small, the efficient arrangement assigns the entire risky slice to the risk-seeking agent. Intuitively, small-probability losses are nearly ``free'' under a convex  distortion function, so letting the risk-seeker absorb them minimizes the total risk value. Moreover, if the condition $h_1\geq \tilde {h_2}$ holds, the relative strength of risk aversion over risk seeking is strong enough that the optimal arrangement assigns the entire risk (not just the small layer) to the risk-seeking agent;  see Theorem \ref{theorem:1}.

\section{Applications to the original problem}\label{Sec:6}
In this section, we return to the $n$-agent risk-sharing problem and demonstrate how the results established in Sections~\ref{Sec:4} and~\ref{Sec:5} can be applied to characterize optimal allocations in mixed economies, including both risk-averse and risk-seeking agents.
Recall that $g_1=\bigwedge_{i=1}^{m}h_i$ and $g_2=\square_{j=m+1}^{n}h_j$ for $m\leq n$.

\begin{proposition}\label{proposition:n_averse}
    Let $\mathcal{X}=L^+$. Assume that $h_i\in \mathcal{H}$ are continuous for all $i\in [n]$, with $h_i$ concave for $i\in[m]$ and $h_j$ convex for $j\in [n]\setminus[m]$, where $m\leq n$.
     If $g_1\geq \tilde{h}_{i}$ for some $i\in [n]\setminus[m]$,
    then for any $X\in\mathcal{X}^\perp$, we have 
    \begin{align*}
    \dsquare_{i=1}^{n}\rho_{h_i}(X)=\rho_{g_2}(X).
    \end{align*}
\end{proposition}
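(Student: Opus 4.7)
The plan is to invoke Theorem \ref{proposition_1} to reduce the $n$-agent problem to the two-agent form $\rho_{g_1}\dsquare\rho_{g_2}(X)$, and then show that this collapses to $\rho_{g_2}(X)$ via an associativity argument. Note that $g_1=\bigwedge_{i=1}^m h_i$ is concave (pointwise minimum of concave functions) and $g_2=\dsquare_{j=m+1}^n h_j$ is convex (inf-convolution of convex functions), so Theorem \ref{proposition_1}(ii) yields $\dsquare_{i=1}^n\rho_{h_i}(X)=\rho_{g_1}\dsquare\rho_{g_2}(X)$ for $X\in\mathcal{X}^\perp$. The upper bound $\rho_{g_1}\dsquare\rho_{g_2}(X)\leq\rho_{g_2}(X)$ is immediate from the trivial allocation $(0,X)$, since $\rho_{g_1}(0)=0$.

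For the lower bound, fix the distinguished index $i\in[n]\setminus[m]$ with $g_1\geq\tilde{h}_i$ and set $g_2'=\dsquare_{j\in[n]\setminus([m]\cup\{i\})}h_j$, so that $g_2=h_i\dsquare g_2'$ and, by Theorem 3 of \cite{ghossoub2024counterb}, $\rho_{g_2}=\rho_{h_i}\dsquare\rho_{g_2'}$ on $\mathcal{X}^\perp$. For any allocation $(Y,X-Y)$ on $L^+$ and any $0\leq W\leq X-Y$, Theorem \ref{theorem:no_concave_convex}(ii) (adapted to $L^+$, as discussed below) yields
\[
\rho_{g_1}(Y)+\rho_{h_i}(W)\geq\rho_{h_i}(Y+W),
\]
since $g_1\geq\tilde{h}_i$ and $h_i$ is convex. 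Adding $\rho_{g_2'}(X-Y-W)$ to both sides and using $Y+W\leq X$ gives
\[
\rho_{g_1}(Y)+\rho_{h_i}(W)+\rho_{g_2'}(X-Y-W)\geq \rho_{h_i}(Y+W)+\rho_{g_2'}(X-(Y+W))\geq \rho_{h_i}\dsquare\rho_{g_2'}(X)=\rho_{g_2}(X).
\]
Infimizing the left-hand side over $W$ gives $\rho_{g_1}(Y)+\rho_{g_2}(X-Y)\geq\rho_{g_2}(X)$, and a further infimum over $Y$ delivers the desired lower bound.

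The main technical obstacle is that Theorem \ref{theorem:no_concave_convex}(ii) is stated on $L^\infty$ and its proof invokes $-Y\notin L^+$. To transfer the identity $\rho_{g_1}\dsquare\rho_{h_i}=\rho_{h_i}$ to $L^+$, I would observe that for bounded $X'\in L^+\cap L^\infty$ every $L^+$ allocation $(Y,X'-Y)$ with $0\leq Y\leq X'$ also lies in $(L^\infty)^2$, so the $L^+$ inf-convolution is bounded below by its $L^\infty$ counterpart, which equals $\rho_{h_i}(X')$ by Theorem \ref{theorem:no_concave_convex}(ii); combined with the trivial upper bound, this forces equality on bounded inputs. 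For unbounded $X'\in L^+$, a monotone truncation $X'_n=X'\wedge n$ together with continuity from below (cf.\ Lemma \ref{lemma:2_a} in Appendix \ref{appendix:A}) extends the identity to all of $L^+$, completing the proof.
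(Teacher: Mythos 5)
Your proof is correct and follows essentially the same route as the paper's: the paper writes $\dsquare_{i=1}^{n}\rho_{h_i}=\rho_{g_1}\dsquare \rho_{h_{m+1}}\dsquare \cdots \dsquare \rho_{h_{n}}$ by associativity and collapses $\rho_{g_1}\dsquare\rho_{h_{m+1}}$ to $\rho_{h_{m+1}}$ via Theorem \ref{theorem:no_concave_convex}(ii), which is exactly the absorption step you carry out by hand through the inequality $\rho_{g_1}(Y)+\rho_{h_i}(W)\geq\rho_{h_i}(Y+W)$. Where you differ is in execution: you unpack the associativity into explicit three-way allocations, and you explicitly address the fact that Theorem \ref{theorem:no_concave_convex}(ii) is stated on $L^\infty$ while the proposition lives on $L^+$ — a point the paper's own proof passes over in silence — via restriction to feasible $L^+$ allocations of bounded inputs plus truncation and continuity from below. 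The one loose end is the step ``infimizing over $W$'': the infimum equals $\rho_{h_i}\dsquare\rho_{g_2'}(X-Y)$, and replacing this by $\rho_{g_2}(X-Y)=\rho_{h_i\dsquare g_2'}(X-Y)$ requires the inequality $\rho_{h_i}\dsquare\rho_{g_2'}\leq\rho_{h_i\dsquare g_2'}$ at the point $X-Y$, which need not lie in $\mathcal{X}^\perp$; this requires either Assumption \ref{assump:coin} for the convex pair $(h_i,g_2')$ together with Theorem \ref{theorem:coin}, or, more simply, the observation from the proof of Theorem \ref{proposition_1} that the infimum defining $\rho_{g_1}\dsquare\rho_{g_2}(X)$ may be restricted to allocations whose residual lies in $\mathcal{X}^\perp$, where the identification is exact. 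Alternatively, you can bypass this step entirely: your displayed chain already shows that every three-way allocation has total value at least $\rho_{g_2}(X)$, which bounds $\rho_{g_1}\dsquare\rho_{h_i}\dsquare\rho_{g_2'}(X)$ from below directly.
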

\begin{proof}
    We first note that the inf-convolution $\dsquare_{i=1}^{n}\rho_{h_i}(X)$ is associative; see Lemma 2 of \cite{liu2020inf}.
    Without loss of generality, we assume $g_1\geq \tilde{h}_{m+1}$. 
    By Theorem \ref{theorem:no_concave_convex}, we have $\rho_{g_1}\dsquare \rho_{h_{m+1}}(X)=\rho_{h_{m+1}}(X)$. Thus,
    \begin{align*}
        \dsquare_{i=1}^{n}\rho_{h_i}(X)=\rho_{g_1}\dsquare \rho_{h_{m+1}}\dsquare \cdots \dsquare \rho_{h_{n}}(X)=\rho_{g_2}(X).
    \end{align*}
    Therefore, the desired result is obtained.
\end{proof}
Proposition \ref{proposition:n_averse} states that
if the least risk-averse agent in the cautious group is more conservative than at least one risk-seeking agent is adventurous (formally, if $g_1\geq \tilde{h}_{i}$
 for some convex $h_i$), then the efficient split assigns all the randomness to the risk-seeking side.
Intuitively, once there exists a single risk-seeker willing to absorb the entire uncertain part when they are sharing the risk with the most tolerant risk-averse agent, the problem effectively turns into a ``betting game'' among the risk-seeking agents.
In this case, the optimal allocation is counter-monotonic, that is, risk-averse agents bear nothing and
risk-seekers bet on who takes the total risk. 

When such dominance condition fails and the risk seeking dominates risk aversion, the optimal risk sharing would change accordingly. 
As we know, a full analysis with arbitrary distortion functions is challenging, so we focus on a tractable subclass of piecewise linear distortions. 
Building on the two-agent analysis in Section \ref{Sec:5}, the following proposition provides explicit formulas for the $n$-agent inf-convolution across different settings.

\begin{proposition}\label{proposition:n_linear}
     Suppose that $\mathcal{X}=L^+$ and $m,n\in \mathbb{N}$. Let $h_i(x)=\min \left\{ x/b_i, 1\right\}$ for $i\in[m]$ and 
     $h_i(x)=\max \left\{0, (x-a_i)/(1-a_i)\right\}$ for $i\in [n]\setminus[m]$ over $x\in [0,1]$, where $a_i,b_i\in(0,1)$ with $\sum_{i=m+1}^n a_i\leq 1$. Denote by $b=\bigvee_{i=1}^m b_i$.
     Then the following hold for $X\in \mathcal{X}^\perp$.
     \begin{itemize}
         \item[(i)] If $a_i+b\leq 1$ for some $i\in[n]\backslash[m]$, then 
         \begin{align*}
             \dsquare_{i=1}^{n}\rho_{h_i}(X)=\rho_{g}(X), ~\text{where}~g(t)=\max\left\{0, \frac{t-\sum_{i=m+1}^n a_i}{1-\bigvee_{i=m+1}^n a_i}\right\}, ~t\in[0,1].
         \end{align*}
         \item[(ii)] If $\bigwedge_{i=m+1}^n a_i+b>1$, then 
         \begin{align*}
             \dsquare_{i=1}^{n}\rho_{h_i}(X)=\rho_{g}(X), ~\text{where}~g(t)=\max\left\{0, \frac{t-\sum_{i=m+1}^n a_i}{b}\right\}, ~t\in[0,1].
         \end{align*}
     \end{itemize}
\end{proposition}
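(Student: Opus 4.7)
The plan is to invoke the reduction in Theorem \ref{proposition_1} to pass from the $n$-agent problem to the two-agent problem $\rho_{g_1}\dsquare \rho_{g_2}(X)$ with $g_1 = \bigwedge_{i=1}^m h_i$ and $g_2 = \dsquare_{j=m+1}^n h_j$, pin down these two representative distortion functions in closed form, and then invoke whichever earlier two-agent result fits the regime.

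First I would identify $g_1$ and $g_2$ on $[0,1]$. The equality $g_1(x) = \min(x/b, 1)$ with $b = \bigvee b_i$ is immediate from the piecewise-linear ES-type form of the $h_i$'s. For $g_2$ I would show
\[
g_2(x) = \max\left\{0,\; \frac{x-A}{1-a^*}\right\}, \quad x\in[0,1],
\]
with $A = \sum_{i=m+1}^n a_i$ and $a^* = \bigwedge_{i=m+1}^n a_i$, by a direct convex-optimization argument: each $h_j$ vanishes on $[0, a_j]$ and has slope $1/(1-a_j)$ above $a_j$, hence $h_j(x_j) \geq (x_j - a_j)/(1 - a^*)$ pointwise (treating $x_j\leq a_j$ and $x_j > a_j$ separately and using $a^* \leq a_j$); summing over $j$ yields the lower bound, and the admissible split $x_j = a_j$ for $j \neq j^* := \argmin a_j$ together with $x_{j^*} = a^* + (x-A)_+$ attains it, using $\sum a_j\leq 1$ for feasibility.

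For part (i), the hypothesis $a_i + b \leq 1$ for some $i\in [n]\setminus [m]$ is equivalent to $g_1 \geq \tilde h_i$ for that $i$, since $\tilde h_i(x) = \min(x/(1-a_i), 1)$. Proposition \ref{proposition:n_averse} then delivers $\dsquare_{i=1}^n \rho_{h_i}(X) = \rho_{g_2}(X)$ on $\mathcal{X}^\perp$, and substituting the explicit form of $g_2$ produces the announced $\rho_g$.

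For part (ii), the assumption $\bigwedge a_i + b > 1$ kills the hypothesis of Proposition \ref{proposition:n_averse}, and one must evaluate $\rho_{g_1}\dsquare \rho_{g_2}$ by hand. I would imitate the active-part decomposition from the proof of Proposition \ref{theorem:pie_linear}: write $g_2 = g^A \dsquare \hat g_2$ as functions on $[0,1]$, with $g^A(t) = \mathbb{1}_{\{t>A\}}$ and $\hat g_2(t) = \min(t/(1-a^*), 1)$, and lift to risk measures via Lemma \ref{lemma:var}, giving $\rho_{g_2} = \mathrm{VaR}_A \dsquare \rho_{\hat g_2}$. Associativity of the inf-convolution (Lemma~2 of \cite{liu2020inf}) then yields
\[
\rho_{g_1} \dsquare \rho_{g_2} = \mathrm{VaR}_A \dsquare \bigl(\rho_{g_1} \dsquare \rho_{\hat g_2}\bigr).
\]
Both $g_1$ and $\hat g_2$ are concave, so Theorem \ref{theorem:no_concave_convex}(i) reduces the inner inf-convolution to $\rho_{g_1 \wedge \hat g_2}$; the regime condition $b > 1 - a^*$ forces $g_1 \leq \hat g_2$ on $[0,1]$, so the wedge collapses to $g_1$. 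A final application of Lemma \ref{lemma:var} delivers $\dsquare_{i=1}^n \rho_{h_i}(X) = \rho_{g_1^A}(X)$ with $g_1^A(t) = \max\{0, (t-A)/b\}$, where the absence of saturation at $1$ on $[0,1]$ is secured by $A + b \geq a^* + b > 1$. The main obstacle is the explicit computation of $g_2$ and the verification of the pointwise inequality $h_j(x_j) \geq (x_j - a_j)/(1 - a^*)$; once $g_2$ is in hand, the rest is a careful concatenation of Theorems \ref{proposition_1} and \ref{theorem:no_concave_convex}, Lemma \ref{lemma:var}, Proposition \ref{proposition:n_averse}, and associativity of inf-convolution in the $L^+$ setting.
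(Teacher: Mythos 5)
Your proof is correct and follows essentially the same route as the paper: explicit computation of $g_1$ and $g_2=\dsquare_{j=m+1}^n h_j$, Proposition~\ref{proposition:n_averse} for part (i), and for part (ii) peeling a concave ES-type piece off $g_2$, absorbing it into $g_1$ via the concave-wedge collapse $\rho_{g_1}\dsquare\rho_{\hat g_2}=\rho_{g_1\wedge \hat g_2}=\rho_{g_1}$, and then handling the residual shift by $A=\sum_{i=m+1}^n a_i$. The only cosmetic difference is that you decompose $\rho_{g_2}$ as $\mathrm{VaR}_A\dsquare\rho_{\hat g_2}$ and finish with Lemma~\ref{lemma:var}, whereas the paper writes $g_2=\ell_1\dsquare\ell_2$ (left-tail ES convolved with ES) and finishes with Proposition~\ref{theorem:pie_linear}, which is itself proved via Lemma~\ref{lemma:var}, so the two arguments coincide up to one level of unpacking.
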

\begin{proof}
    Let $\ell(t)=\max\left\{0, (t-\sum_{i=m+1}^n a_i)/({1-\bigvee_{i=m+1}^n a_i})\right\}$.
    It is straightforward to verity that $\dsquare_{i=m+1}^{n}h_i(x)=\ell(x)$.
    
    (i) The result directly follows from Proposition \ref{proposition:n_averse}. 

    (ii) To make presentations more precise, we first introduce some notations, define
    \begin{align*}
        \ell_1(x)=\max\left\{0, \frac{x-\sum_{i=m+1}^n a_i}{1-\sum_{i=m+1}^n a_i}\right\}, ~\text{and}~ \ell_2(x)=\min\left\{\frac{x}{1-\bigvee_{i=m+1}^n a_i}, 1\right\}
        ~t\in[0,1].
    \end{align*}
    Then it can be verified that $\ell_1\dsquare \ell_2(x)=\ell(x)$ for $x\in[0,1]$.
    Note that $\bigwedge_{i=m+1}^n a_i+b>1$ implies that $\sum_{i=m+1}^n a_i+b>1$. 
    By Proposition~\ref{theorem:pie_linear}, we have 
    \begin{align}\label{eq:app_2}
        \rho_{\ell_1}\dsquare \rho_{\ell_2}(X)=\rho_{\ell}(X).
    \end{align}
    Let $h=\bigwedge_{i=1}^m h_i$.
    It follows that 
    \begin{align*}
        \dsquare_{i=1}^{n}\rho_{h_i}(X)&=\rho_{h}\dsquare \left(\dsquare_{i=m+1}^{n}\rho_{h_i}\right)(X)
        =\rho_{h}\dsquare \rho_{\ell}(X)=\rho_{h}\dsquare\left(\rho_{\ell_1}\dsquare \rho_{\ell_2}\right)(X)\\&=\rho_{h} \dsquare \rho_{\ell_1}(X)=
        \rho_{\ell}(X),
    \end{align*}
    where $\ell(t)= h\dsquare \ell_1(t)=
    \max\left\{0, (t-\sum_{i=m+1}^n a_i)/b\right\}$.
    The first equality follows from Theorem \ref{theorem:1}.
    By applying \eqref{eq:app_2} and Proposition~\ref{theorem:pie_linear}, we  obtain the last three equalities.
\end{proof}

As shown in Proposition \ref{proposition:n_linear}, when risk aversion dominates sufficiently, the risk-seeking group absorbs all randomness, consistent with Proposition~\ref{proposition:n_averse}.
In contrast, when risk seeking dominates risk aversion, as described by condition (ii), the pattern of optimal risk sharing changes substantially. The risk-averse participants continue to share risk comonotonically within their own group, but they would collectively bet with the risk-seeking group. The resulting allocation is counter-monotonic across groups, reflecting a betting structure in which the risk-seeking side and the risk-averse side effectively compete over who absorbs the uncertainty.

\section{Conclusion}\label{Sec:7}

In this paper, we study risk sharing in economies where agents need not be all risk-averse or all risk-seeking. 
This mixed setting is economically important with capturing the coexistence of cautious and speculative participants, and mathematically challenging because the underlying distortion risk measures are neither all convex nor all concave.

With mixed attitudes in place (some are risk-averse and others are risk-seeking), we establish a two-agent reduction by aggregating risk attitudes on each side into representative distortions; see Theorem \ref{proposition_1}. The reduced problem can be solved via the inf-convolution $\rho_{h_1}\dsquare \rho_{h_2}$.
One of our main results (Theorem \ref{theorem:1}) provides   necessary and sufficient conditions for existence of optimal allocations. 
When the conditions  hold, the concave-convex case admits explicit solutions, in which the risk-seeking side bears all the risk, as shown in Theorem \ref{theorem:no_concave_convex}.

We also consider the case of  
restricted allocations that are nonnegative.  This setting ensure a well-defined risk sharing problem and  corresponds to the natural assumption of no profit from pure losses.
We   compare the constrained value $\rho_{h_1}\dsquare \rho_{h_2}(X)$
 with the benchmark $\rho_{h_1\dsquare h_2}(X)$.
 Our analysis yields constructive bounds and equality conditions for specific distortion families (Theorems \ref{theorem:coin} and Proposition \ref{theorem:pie_linear}) and for canonical risks (Theorems \ref{theorem:ele} and Proposition \ref{theorem:ineq}). Finally, we apply these results to resolve the original multi-agent problem.

\section*{Acknowledgements}

Mario Ghossoub acknowledges financial support from the Natural Sciences and Engineering Research Council of Canada (RGPIN-2024-03744). Ruodu Wang acknowledges financial support from the Natural Sciences and Engineering Research Council of Canada (RGPIN-2024-03728, CRC-2022-00141).


\bibliographystyle{apalike}
\bibliography{biblio}

\begin{thebibliography}{}

\bibitem[Acciaio, 2007]{acciaio2007optimal}
Acciaio, B. (2007).
\newblock Optimal risk sharing with non-monotone monetary functionals.
\newblock {\em Finance and Stochastics}, 11(2):267--289.

\bibitem[Araujo et~al., 2017]{araujo2017optimal}
Araujo, A., Bonnisseau, J.-M., Chateauneuf, A., and Novinski, R. (2017).
\newblock Optimal sharing with an infinite number of commodities in the presence of optimistic and pessimistic agents.
\newblock {\em Economic Theory}, 63:131--157.

\bibitem[Araujo et~al., 2018]{araujo2018general}
Araujo, A., Chateauneuf, A., Gama, J.~P., and Novinski, R. (2018).
\newblock General equilibrium with uncertainty loving preferences.
\newblock {\em Econometrica}, 86(5):1859--1871.

\bibitem[Barrieu and El~Karoui, 2005]{barrieu2005inf}
Barrieu, P. and El~Karoui, N. (2005).
\newblock Inf-convolution of risk measures and optimal risk transfer.
\newblock {\em Finance and stochastics}, 9(2):269--298.

\bibitem[Bei{\ss}ner et~al., 2024]{beissner2024no}
Bei{\ss}ner, P., Boonen, T., and Ghossoub, M. (2024).
\newblock ({N}o-)betting pareto optima under rank-dependent utility.
\newblock {\em Mathematics of Operations Research}, 49(3):1452--1471.

\bibitem[Bei{\ss}ner and Werner, 2023]{beissner2023optimal}
Bei{\ss}ner, P. and Werner, J. (2023).
\newblock Optimal allocations with $\alpha$-maxmin utilities, choquet expected utilities, and prospect theory.
\newblock {\em Theoretical Economics}, 18(3):993--1022.

\bibitem[{{Borch}}, 1962]{borch1962}
{{Borch}}, K. (1962).
\newblock {Equilibrium in a Reinsurance Market}.
\newblock {\em Econometrica}, 30(3):424--444.

\bibitem[Dhaene and Denuit, 1999]{dhaene1999safest}
Dhaene, J. and Denuit, M. (1999).
\newblock The safest dependence structure among risks.
\newblock {\em Insurance: Mathematics and Economics}, 25(1):11--21.

\bibitem[Dhaene et~al., 2002]{dhaene2002concept}
Dhaene, J., Denuit, M., Goovaerts, M.~J., Kaas, R., and Vyncke, D. (2002).
\newblock The concept of comonotonicity in actuarial science and finance: theory.
\newblock {\em Insurance: Mathematics and Economics}, 31(1):3--33.

\bibitem[{{Embrechts}} et~al., 2018]{embrechts2018quantile}
{{Embrechts}}, P., {{Liu}}, H., and {{Wang}}, R. (2018).
\newblock {Quantile-Based Risk Sharing}.
\newblock {\em Operations Research}, 66(4):936--949.

\bibitem[Embrechts et~al., 2015]{embrechts2015aggregation}
Embrechts, P., Wang, B., and Wang, R. (2015).
\newblock Aggregation-robustness and model uncertainty of regulatory risk measures.
\newblock {\em Finance and Stochastics}, 19(4):763--790.

\bibitem[{{Filipovi{\'c}}} and {{Svindland}}, 2008]{filipovic2008optimal}
{{Filipovi{\'c}}}, D. and {{Svindland}}, G. (2008).
\newblock {Optimal Capital and Risk Allocations for Law-and Cash-Invariant Convex Functions}.
\newblock {\em Finance and Stochastics}, 12:423--439.

\bibitem[F{\"o}llmer and Schied, 2016]{follmer2016stochastic}
F{\"o}llmer, H. and Schied, A. (2016).
\newblock {\em Stochastic finance: an introduction in discrete time}.
\newblock Walter de Gruyter.

\bibitem[Ghossoub et~al., 2024]{ghossoub2024counterb}
Ghossoub, M., Ren, Q., and Wang, R. (2024).
\newblock Counter-monotonic risk sharing with heterogeneous distortion risk measures.
\newblock {\em arXiv preprint arXiv:2412.00655}.

\bibitem[Ghossoub et~al., 2025]{ghossoub2024countera}
Ghossoub, M., Ren, Q., and Wang, R. (2025).
\newblock Counter-monotonic risk allocations and distortion risk measures.
\newblock {\em Scandinavian Actuarial Journal}, pages 1--24.

\bibitem[Ghossoub and Zhu, 2024]{ghossoub2024efficiency}
Ghossoub, M. and Zhu, M.~B. (2024).
\newblock Efficiency in pure-exchange economies with risk-averse monetary utilities.
\newblock {\em Mathematical Finance}.

\bibitem[Heath and Ku, 2004]{heath2004pareto}
Heath, D. and Ku, H. (2004).
\newblock Pareto equilibria with coherent measures of risk.
\newblock {\em Mathematical Finance: An International Journal of Mathematics, Statistics and Financial Economics}, 14(2):163--172.

\bibitem[Herings and Yang, 2022]{herings2022competitive}
Herings, P. and Yang, Z. (2022).
\newblock Competitive equilibria in incomplete markets with risk loving preferences.
\newblock {\em CentER Discussion Paper}.

\bibitem[Jouini et~al., 2008]{jouini2008optimal}
Jouini, E., Schachermayer, W., and Touzi, N. (2008).
\newblock Optimal risk sharing for law invariant monetary utility functions.
\newblock {\em Mathematical Finance}, 18(2):269--292.

\bibitem[Landsberger and Meilijson, 1994]{landsberger1994co}
Landsberger, M. and Meilijson, I. (1994).
\newblock Co-monotone allocations, {B}ickel-{L}ehmann dispersion and the {A}rrow-{P}ratt measure of risk aversion.
\newblock {\em Annals of Operations Research}, 52:97--106.

\bibitem[Lauzier et~al., 2025]{lauzier2024negatively}
Lauzier, J.-G., Lin, L., Peter, W., and Wang, R. (2025).
\newblock Optimal risk sharing, equilibria, and welfare with empirically realistic risk attitudes.
\newblock {\em arXiv preprint arXiv:2401.03328}.

\bibitem[Lauzier et~al., 2023]{lauzier2023pairwise}
Lauzier, J.-G., Lin, L., and Wang, R. (2023).
\newblock Pairwise counter-monotonicity.
\newblock {\em Insurance: Mathematics and Economics}, 111:279--287.

\bibitem[Lauzier et~al., 2024]{lauzier2024risk}
Lauzier, J.-G., Lin, L., and Wang, R. (2024).
\newblock Risk sharing, measuring variability, and distortion riskmetrics.
\newblock {\em Mathematical Finance}.

\bibitem[Liu et~al., 2020]{liu2020inf}
Liu, P., Wang, R., and Wei, L. (2020).
\newblock Is the inf-convolution of law-invariant preferences law-invariant?
\newblock {\em Insurance: Mathematics and Economics}, 91:144--154.

\bibitem[Mastrogiacomo and Rosazza~Gianin, 2015]{mastrogiacomo2015pareto}
Mastrogiacomo, E. and Rosazza~Gianin, E. (2015).
\newblock Pareto optimal allocations and optimal risk sharing for quasiconvex risk measures.
\newblock {\em Mathematics and Financial Economics}, 9(2):149--167.

\bibitem[Peajcariaac and Tong, 1992]{peajcariaac1992convex}
Peajcariaac, J.~E. and Tong, Y.~L. (1992).
\newblock {\em Convex functions, partial orderings, and statistical applications}.
\newblock Academic Press.

\bibitem[Rockafellar, 1970]{rockafellar1997convex}
Rockafellar, R.~T. (1970).
\newblock {\em Convex analysis}.
\newblock Princeton University Press.

\bibitem[Tsanakas, 2009]{tsanakas2009split}
Tsanakas, A. (2009).
\newblock To split or not to split: Capital allocation with convex risk measures.
\newblock {\em Insurance: Mathematics and Economics}, 44(2):268--277.

\bibitem[Wang and Wei, 2020]{wang2020characterizing}
Wang, R. and Wei, Y. (2020).
\newblock Characterizing optimal allocations in quantile-based risk sharing.
\newblock {\em Insurance: Mathematics and Economics}, 93:288--300.

\bibitem[Wang et~al., 2020]{wang2020characterization}
Wang, R., Wei, Y., and Willmot, G.~E. (2020).
\newblock Characterization, robustness, and aggregation of signed choquet integrals.
\newblock {\em Mathematics of Operations Research}, 45(3):993--1015.

\bibitem[Weber, 2018]{weber2018solvency}
Weber, S. (2018).
\newblock Solvency {II}, or how to sweep the downside risk under the carpet.
\newblock {\em Insurance: Mathematics and economics}, 82:191--200.

\bibitem[Yaari, 1987]{yaari1987dual}
Yaari, M.~E. (1987).
\newblock The dual theory of choice under risk.
\newblock {\em Econometrica}, 55(1):95--115.

\end{thebibliography}
 
\appendix

\section{Details in Example \ref{example:6}}\label{appendix:A}

The details in Example \ref{example:6} are guaranteed by the following proposition.
\begin{proposition}\label{proposition:example}
    Suppose that $h_1(x)=1-(1-x)^{\alpha}$ and $h_2(x)=x^\beta$ for $x\in [0,1]$
    with $1<\alpha < \beta$. Let $p_0=(\alpha/\beta)^{\frac{1}{\beta-1}}$. Then there exists $y^*(x)$ such that $h_1(y^*(x))+ h_2(x-y^*(x))= h_1\dsquare h_2(x)$. 
    Moreover, 
    \begin{enumerate}
        \item[(i)] If $x\leq p_0$, $y^*(x)=0$ and $x-y^*(x)=x$.
        \item[(ii)] If $x> p_0$, $y^*(x)$ satisfies that $ \alpha (1-y^*(x))^{\alpha-1}= \beta (x-y^*(x))^{\beta-1}$. Moreover, $y^*(x)$ is increasing and $x-y^*(x)$ is decreasing.
    \end{enumerate}
\end{proposition}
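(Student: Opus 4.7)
\bigskip

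The plan is to reduce the problem to a one-dimensional optimization on a compact interval. Fix $x \in (0, 1]$, and set
\[
f(y) := h_1(y) + h_2(x-y) = 1 - (1-y)^\alpha + (x-y)^\beta, \qquad y \in [0,x],
\]
so that $h_1 \dsquare h_2(x) = \inf_{y \in [0,x]} f(y)$. Continuity of $f$ on a compact interval guarantees that a minimizer $y^*(x)$ exists. Compute $f'(y) = \alpha(1-y)^{\alpha-1} - \beta(x-y)^{\beta-1}$, and introduce the auxiliary ratio
\[
\phi(y) := \frac{\alpha(1-y)^{\alpha-1}}{\beta(x-y)^{\beta-1}}, \qquad y \in [0, x),
\]
so that $f'(y) \geq 0 \iff \phi(y) \geq 1$. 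Direct evaluation yields $\phi(0) = \alpha/(\beta x^{\beta-1})$, with $\phi(0) \geq 1 \iff x \leq p_0$, and $\phi(y) \to +\infty$ as $y \to x^-$.

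The central step will be to show that $\phi$ is strictly increasing on $[0, x)$. Logarithmic differentiation shows that the sign of $\phi'$ coincides with the sign of
\[
\psi(y) := (\beta-1)(1-y) - (\alpha-1)(x-y).
\]
Since $\psi$ is affine in $y$ with slope $\alpha - \beta < 0$, positivity on the whole interval reduces to checking the two endpoints: $\psi(0) = (\beta-1) - (\alpha-1)x \geq \beta - \alpha > 0$ (using $x \leq 1$ and $\alpha < \beta$), and $\psi(x) = (\beta-1)(1-x) \geq 0$. Hence $\psi > 0$ on $[0, x)$, and $\phi$ is strictly increasing.

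Part (i) now follows. If $x \leq p_0$, then $\phi(y) \geq \phi(0) \geq 1$ on $[0,x)$, so $f' \geq 0$ and the minimum is attained at $y^*(x) = 0$, giving $x - y^*(x) = x$. If $x > p_0$, strict monotonicity of $\phi$ from $\phi(0) < 1$ to $+\infty$ produces a unique interior root $y^*(x) \in (0, x)$ of $f'$, which by the sign pattern of $f'$ is the unique minimizer of $f$ and satisfies the stated first-order condition $\alpha(1-y^*(x))^{\alpha-1} = \beta(x-y^*(x))^{\beta-1}$.

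For the monotonicity statement in part (ii), apply the implicit function theorem to $F(x,y) = \alpha(1-y)^{\alpha-1} - \beta(x-y)^{\beta-1}$. Using the first-order condition to eliminate $(1-y^*)^{\alpha-2}$ in $F_y$, one obtains after simplification
\[
(y^*)'(x) \;=\; \frac{(\beta-1)(1-y^*)}{\psi(y^*)}.
\]
Since $\psi(y^*) > 0$ and $(\beta-1)(1-y^*) - \psi(y^*) = (\alpha-1)(x-y^*) > 0$, we conclude $(y^*)'(x) > 1 > 0$, so $y^*(x)$ is strictly increasing and $x - y^*(x)$ is strictly decreasing. The main obstacle is the monotonicity of $\phi$, since it is the single fact that simultaneously characterizes the optimizer via a monotone sign condition and, through $\psi$, controls the comparative statics; once this is in hand, the rest is routine differentiation.
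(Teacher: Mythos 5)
Your proof is correct and follows essentially the same route as the paper's: the same reduction to minimizing $f(y)=h_1(y)+h_2(x-y)$ on $[0,x]$, the same ratio $\phi=R$ whose logarithmic derivative has the sign of $(\beta-1)(1-y)-(\alpha-1)(x-y)$, and the same implicit-function-theorem computation for the comparative statics (your closed form $(y^*)'(x)=(\beta-1)(1-y^*)/\psi(y^*)$ is algebraically identical to the paper's expression). If anything, your endpoint check that $\psi>0$ on $[0,x)$ is slightly more explicit than the paper's one-line assertion that $(\ln R)'>0$.
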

\begin{proof}
    Let $g(y)=h_1(y)+h_2(x-y)$ for $x\in [0,1]$ and $y\in [0,x]$. Then 
    \begin{align*}
        g'(y)=h_1'(y)-h_2'(x-y)=\alpha (1-y)^{\alpha-1}-\beta (x-y)^{\beta-1}.
    \end{align*}
    Let $R(y)=\frac{\alpha (1-y)^{\alpha-1}}{\beta (x-y)^{\beta-1}}$. By straightforward calculation, we have 
    \begin{align*}
        \ln R(y)= \ln \alpha -\ln \beta + (\alpha-1)\ln(1-y)-(\beta-1)\ln (x-y) \ \text{and}\ 
        (\ln R(y))'= \frac{\beta-1}{x-y}-\frac{\alpha-1}{1-y}.
    \end{align*}
    Since $\alpha<\beta$, $(\ln R(y))'>0$, thus $ \ln R(y)$ is increasing. Also, $\ln R(0)=\ln \alpha -\ln \beta -(\beta-1)\ln x$.
    Next, we consider two cases:
    \begin{enumerate}
        \item[(i)] If $x\leq p_0$, then $\ln R(y)\geq \ln R(0)\geq 0$ for $y\in [0,x]$. This implies that $R(y)\geq 1$ and $g'(y)\geq 0$, thus $g(y)$ is increasing. Hence, the infimum of $g(y)$ over $y\in [0,x]$ attains at $y^*=0$. 
        \item[(ii)] If $x> p_0$, then $\ln R(0)<0$ and $\ln R(y)\rightarrow +\infty$ as $y\downarrow x$. Thus, $g'(y)$ would be first less than zero and then cross zero. Then first order condition gives:
        \begin{align}\label{eq:first}
        g'(y^*)=h_1'(y^*)-h_2'(x-y^*)=\alpha (1-y^*)^{\alpha-1}-\beta (x-y^*)^{\beta-1}=0.
    \end{align}
    Write $F(x, y^*)=\alpha (1-y^*)^{\alpha-1}-\beta (x-y^*)^{\beta-1}$. Then it follows that 
    \begin{align*}
        \frac{\mathrm{d} y^*}{\mathrm{d} x}&=-\frac{\partial F / \partial x}{\partial F\ \partial y}=\frac{\beta (\beta-1)(x-y^*)^{\beta-2}}{\beta (\beta-1)(x-y^*)^{\beta-2}-\alpha (\alpha-1)(1-y^*)^{\alpha-2}}
        =\frac{1}{1-\frac{\alpha-1}{\beta-1}\frac{x-y^*}{1-y^*}}> 0.
    \end{align*}
    The last equality can be obtained from
    \eqref{eq:first}.
    By straightforward calculation,
    we have 
    \begin{align*}
        \frac{\mathrm{d}(x-y^*)}{\mathrm{d} x} = 1- \frac{\mathrm{d} y^*}{\mathrm{d} x}=\frac{-(\alpha-1)(x-y^*)}{(\beta-1)(1-y^*)-(\alpha-1)(x-y^*)}< 0.
    \end{align*}
    \end{enumerate}
    Consequently, the desired result is obtained.
\end{proof}

\section{Basic properties of the constrained inf-convolution}
Let $(\Omega, \mathcal{F}, \mathbb{P})$ be an atomless probability space and $\mathcal{X}=L^+$ be the set
of nonnegative random variables in this space. 
For risk functionals $\rho_1,\rho_2:\mathcal{X}\mapsto [0, \infty)$, define the constrained inf-convolution 
$$
\rho_1 \dsquare \rho_2(X)=\inf \left\{\rho_1(Y)+ \rho_2(X-Y), 0\leq Y\leq X\right\}, ~~ X\in \mathcal{X}.
$$
The constraint $0\leq Y\leq X$ ensures $Y, X-Y\in \mathcal{X}$.
Here we record several properties that can pass from $\rho_1$ and $\rho_2$ to their constrained inf-convolution
$\rho_1 \dsquare \rho_2$. 
Our focus in this appendix is the constrained formulation. The corresponding properties in the unconstrained case have been well studied in \cite{liu2020inf}. 

\begin{lemma}\label{lemma:2_a}
   Suppose that $\mathcal{X}=L^+$ and $X\in \mathcal{X}$.
   Let $\rho_1, \rho_2$ be two risk functionals. 
   \begin{enumerate}
       \item[(i)] If $\rho_1$ and $\rho_2$ are monotone, then $\rho_1 \square \rho_2$ is monotone.
       \item[(ii)]If $\rho_1$ and $\rho_2$ are uniformly continuous, then $\rho_1 \square \rho_2$ is uniformly continuous.
       \item[(iii)]If $\rho_1$ and $\rho_2$ are monotone and one of them is continuous from above, then $\rho_1 \square \rho_2$ is continuous from above.
   \end{enumerate}
\end{lemma}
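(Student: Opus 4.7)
The unifying idea for all three parts is to take an admissible allocation of one argument and project it into the feasible set of a nearby argument via the truncation $Y \mapsto Y \wedge X'$. Since monotonicity of $\rho_1,\rho_2$ is common to every part, I would establish (i) first. Given $X \leq X'$ and any $0 \leq Y' \leq X'$, set $Y := Y' \wedge X$. Then $Y$ is admissible for $X$, $Y \leq Y'$, and a brief case analysis on whether $Y' \leq X$ shows $X - Y = (X - Y')_+ \leq X' - Y'$. Monotonicity of $\rho_1$ and $\rho_2$ then yields $\rho_1(Y) + \rho_2(X - Y) \leq \rho_1(Y') + \rho_2(X' - Y')$; infimizing over $Y'$ gives $\rho_1 \dsquare \rho_2(X) \leq \rho_1 \dsquare \rho_2(X')$.

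For (ii), I would run the same truncation in the opposite direction. Fix $\varepsilon > 0$ and pick $\delta > 0$ small enough that each $\rho_i$ varies by less than $\varepsilon/2$ under $L^\infty$-perturbations of size $\delta$. For $\|X - X'\|_\infty < \delta$ and any $0 \leq Y \leq X$, set $Y' := Y \wedge X'$. Two short case checks would give the key pointwise bounds $\|Y - Y'\|_\infty \leq \|X - X'\|_\infty$ and $\|(X - Y) - (X' - Y')\|_\infty \leq \|X - X'\|_\infty$; uniform continuity then lifts each increment to the functional level, producing $\rho_1(Y') + \rho_2(X' - Y') \leq \rho_1(Y) + \rho_2(X - Y) + \varepsilon$. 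Taking the infimum over $Y$ in one direction and running the symmetric construction in the other direction gives $|\rho_1 \dsquare \rho_2(X) - \rho_1 \dsquare \rho_2(X')| \leq \varepsilon$.

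For (iii), suppose $X_n \downarrow X$ in $\mathcal{X}$. Monotonicity from (i) already forces $\rho_1 \dsquare \rho_2(X_n) \downarrow L \geq \rho_1 \dsquare \rho_2(X)$, so only the matching upper bound is at issue. Using that $\dsquare$ is commutative, I would assume without loss of generality that $\rho_2$ is the coordinate continuous from above. Any $0 \leq Y \leq X$ satisfies $0 \leq Y \leq X_n$, so $\rho_1 \dsquare \rho_2(X_n) \leq \rho_1(Y) + \rho_2(X_n - Y)$; since $X_n - Y \downarrow X - Y$, continuity from above of $\rho_2$ yields $\limsup_n \rho_1 \dsquare \rho_2(X_n) \leq \rho_1(Y) + \rho_2(X - Y)$, and infimizing over $Y$ closes the loop.

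The only delicate point I anticipate is the bookkeeping in (ii): I must confirm that both the truncated $Y \wedge X'$ and the induced residual $X' - Y \wedge X'$ lie uniformly within $\|X - X'\|_\infty$ of their $X$-counterparts, regardless of where $Y$ crosses $X'$. Everything else reduces to routine monotone convergence and inf-exchange arguments, so the whole lemma should come out as a clean consequence of the single truncation construction.
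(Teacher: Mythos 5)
Your proposal is correct and follows essentially the same route as the paper's proof: the truncation $Y\mapsto Y\wedge X'$ drives parts (i) and (ii), and part (iii) fixes a (near-)optimal allocation for the limit $X$ and passes to the limit using continuity from above of one coordinate together with the monotonicity from (i). The only cosmetic differences are that you compare arbitrary admissible allocations and then infimize (where the paper uses $\varepsilon$-near-optimizers) and that your pointwise bookkeeping in (ii) yields the slightly sharper bound $\lVert (X-Y)-(X'-Y')\rVert_\infty\le\lVert X-X'\rVert_\infty$ in place of the paper's factor of two; neither changes the substance of the argument.
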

\begin{proof}
\begin{enumerate}
    \item[(i)] 
    Suppose that $X, Y\in \mathcal{X}$ with $X\leq Y$. For any $\varepsilon>0$, there exists $Z_{Y}\in \mathcal{X}$ such that $0\leq Z_{Y}\leq Y$ and 
    $$
    \rho_1(Z_{Y})+\rho_2(Y-Z_{Y})\leq \rho_1 \dsquare \rho_2(Y)+\varepsilon.
    $$
    Let $\tilde{Z}=Z_Y \wedge X$ (so $0\leq \tilde{Z}\leq X$). Thus, $\tilde{Z}\leq Z_Y$ and $X-\tilde{Z}\leq Y-Z_Y$. 
    Then
    $$
    \rho_1 \dsquare \rho_2(X)\leq \rho_1(\tilde{Z})+\rho_2(X-\tilde{Z})\leq \rho_1(Z_{Y})+\rho_2(Y-Z_{Y})\leq \rho_1 \dsquare \rho_2(Y)+\varepsilon.
    $$
    Since $\varepsilon>0$ is arbitrary, we obtain $\rho_1 \dsquare \rho_2(X)\leq \rho_1 \dsquare \rho_2(Y)$.
    \item[(ii)]
    Since $\rho_1$ and $\rho_2$ are uniformly continuous, for any $\varepsilon>0$, there exists $\delta_i$ for $i=1,2$ such that for all $X, Y\in \mathcal{X}$, $\|X-Y\| \leqslant \delta_i$ implies $\left|\rho_i(X)-\rho_i(Y)\right| \leqslant \varepsilon/3$. 
    By definition of $\rho_1\dsquare \rho_2$, there exists $Z_X\in [0,X]$ such that $\rho_1(Z_X)+\rho_2(X-Z_X)\leq \rho_1\dsquare \rho_2(X)+\varepsilon/3$. 
    Let $Z'=Z_X\wedge Y$ and
    $\delta=\min \left\{\delta_1, \delta_2/2\right\}$. 
    For any $X, Y \in \mathcal{X}$ with $\|X-Y\| \leqslant \delta_1$, we have 
    $$\|Z'-Z_X\| \leq \|X-Y\|\leq \delta ~\text{and}~ \|(X-Z_X)-(Y-Z')\|\leq \|X-Y\|+\|Z'-Z_X\| \leq \delta_2.$$ 
    Then it follows that
    \begin{align*}
        &|\rho_1 \dsquare \rho_2(X)-\rho_1 \dsquare \rho_2(Y)| \\&\leq |\rho_1(Z_X)+\rho_2(X-Z_X)-\rho_1(Z')-\rho_2(Y-Z')|+\varepsilon/3 \\
        &\leq |\rho_1(Z_X)-\rho_1(Z')|+ |\rho_2(X-Z_X)-\rho_2(Y-Z')|+\varepsilon/3 \leq \varepsilon.
    \end{align*}
    Hence, $\rho_1\dsquare \rho_2$ is uniformly continuous.
    
    \item[(iii)]Without loss of generality, we assume $\rho_2$ is continuous from above. 
    By part (i), we have $\rho_1 \square \rho_2$ is monotone. Let $\left\{X_n\right\}_{n \in \mathbb{N}}$ be a sequence in $\mathcal{X}$ such that $X_n \downarrow X$ as $n \rightarrow \infty$. 
    For any $\varepsilon>0$ and choose $Z\in \mathcal{X}$ with $0 \leq Z \leq X$ such that
$$\rho_1\left(Z\right)+\rho_2\left(X-Z\right) \leq \rho_1 \dsquare \rho_2(X)+\varepsilon.
    $$

Since $X_n \downarrow X$ and $Z \leq X$, we have $Z\leq X_n$ for all $n$. Hence, 
$$
\limsup _{n\rightarrow\infty} \rho_1 \square \rho_2\left(X_n\right) \leq \rho_1\left(Z\right)+\limsup _{n\rightarrow\infty} \rho_2\left(X_n-Z\right)=\rho_1\left(Z\right)+\rho_2\left(X-Z\right) .
$$
Thus, $\limsup _{n\rightarrow\infty} \rho_1 \square \rho_2\left(X_n\right) \leq \rho_1 \square \rho_2\left(X\right)$.
On the other hand, since $\rho_1 \square \rho_2$ is monotone, we have $\rho_1 \square \rho_2\left(X_n\right) \geqslant \rho_1 \square \rho_2(X)$. This implies
$$
\lim _{n \rightarrow \infty} \rho_1 \square \rho_2\left(X_n\right)=\rho_1 \square \rho_2(X).
$$
Hence, $\rho_1 \square \rho_2$ is continuous from above.
\end{enumerate}
The desired result is obtained.
\end{proof}

Lemma \ref{lemma:2_a} shows that monotonicity of $\rho_1\dsquare \rho_2$ under nonnegative allocations requires both $\rho_1$ and $\rho_2$ to be monotone. For general allocation sets, one monotone component is sufficient, as shown in Lemma 1 of \cite{liu2020inf}. The example below demonstrates that this sufficiency fails when allocations are constrained to be nonnegative.
\begin{example}
Take $(\Omega, \mathcal{F}, \mathbb{P})=([0,1], \mathcal{B}([0,1]), \lambda)$ where $\lambda$ is the Lebesgue measure.
Let $B\subset[0,1]$ satisfy $\lambda(B)=\frac{1}{2}$.
Define two risk functionals $\rho_1(X)=1-\mathbb{1}_{\left\{X \sim \operatorname{Bernoulli}\left(\frac{1}{2}\right)\right\}}$ and $\rho_2(X)=\mathbb{E}[X]$.
 Set
$$
X=c \mathbb{1}_B ~\text { with } c \in(0,1) ~ \text { and }~  Y=\mathbb{1}_B .
$$
Clearly $0 \leq X \leq Y$.
Any feasible $0\leq Z\leq X$ satisfies $Z=0$ on $B^c$ and $0 \leq Z \leq c<1$ on $B$.
Hence $\rho_1(Z)=1$ for all $0\leq Z\leq X$.
Therefore, $$\rho_1\dsquare \rho_2(X)=\rho_1(X)+\rho_2(0)=1>\rho_1(Y)+\rho_2(0)=0\geq \rho_1\dsquare \rho_2(Y).$$
In this case, $\rho_1\dsquare \rho_2$ is not monotone.
\end{example}

\end{document}